\theoremstyle{plain}
\newtheorem{thm}{Theorem}[section]
\newtheorem{cor}[thm]{Corollary}
\newtheorem{lem}[thm]{Lemma}
\newtheorem{defi}[thm]{Definition}
\theoremstyle{remark}
\newtheorem{rem}{Remark}
\newtheorem{ex}{Example}
\newcommand{\R}{{\mathbb R}}
\newcommand{\re}{\operatorname{Re}}
\newcommand{\im}{\operatorname{Im}}
\renewcommand{\d}{{\partial}}
\def\P{\mathbf{P}}
\def\FF{\mathcal{F}}
\def\C{\mathbb{C}}
\def\H{\mathbb{H}}
\def\P{\mathbf{P}}
\def\R{\mathbb{R}}
\newcommand{\totalN}{n}
\numberwithin{equation}{section}
\newcommand{\RN}[1]{%
	\textup{\uppercase\expandafter{\romannumeral#1}}%
}
\begin{document}
\title[Partition functions of the spherical Coulomb gases]{Free energy of spherical Coulomb gases with point charges}

\author{Sung-Soo Byun}
\address{Department of Mathematical Sciences and Research Institute of Mathematics, Seoul National University, Seoul 151-747, Republic of Korea}
\email{sungsoobyun@snu.ac.kr}

\author{Nam-Gyu Kang}
\address{School of Mathematics and June E Huh Center for Mathematical Challenges, Korea Institute for Advanced Study, 85 Hoegiro, Dongdaemun-gu, Seoul 02455, Republic of Korea}
\email{namgyu@kias.re.kr}

\author{Seong-Mi Seo}
\address{Department of Mathematics, Chungnam National University, 99 Daehak-ro, Yuseong-gu, Daejeon 34134, Republic of Korea.}
\email{smseo@cnu.ac.kr}

\author{Meng Yang}
\address{Department of Mathematics, School of Sciences,  Great Bay University, Dongguan, 523000, China}
\email{my@gbu.edu.cn}
 

\begin{abstract}
We consider two-dimensional Coulomb gases on the Riemann sphere with determinantal or Pfaffian structures, under external potentials that are invariant under rotations around the axis connecting the north and south poles, and with microscopic point charges inserted at the poles. These models can be interpreted as Coulomb gases on the complex plane with weakly confining potentials, where the associated droplet is the entire complex plane. For these models, we derive precise asymptotic expansions of the free energies, including the constant terms.
\end{abstract}

\maketitle

\section{Introduction and main results}

In the theory of two-dimensional Coulomb gases \cite{Fo10}, the asymptotic expansion of the free energy, as the system size $N$ increases, is a central and longstanding problem. In this expansion, the coefficients are believed to provide essential information about the model, including potential theoretical quantities, topological characteristics, and conformal geometric properties \cite{CFTW15,ZW06,JMP94,TF99}. In addition, the expansion of the free energy plays a crucial role in establishing fluctuations and the Gaussian free field convergence of Coulomb gases, see e.g. \cite{Se24} for more details. 

There has been remarkable progress in obtaining such expansions for general Coulomb gases, with the first three terms up to the $O(N)$ term, including the energy and entropy terms, being derived in \cite{LS17}, see also \cite{BBNY19,Se23}. On the one hand, early conjectures \cite{CFTW15,ZW06,JMP94,TF99} suggest that the expansion beyond the entropy term begins to depend on the topological properties of the Coulomb gas. For general inverse temperature $\beta > 0$, obtaining a complete expansion still seems a highly challenging problem. On the other hand, for a certain class of ensembles with $\beta = 2$, where one can utilise their notable integrable structures, combining cumulative knowledge from orthogonal polynomial and random matrix theory, has recently led to precise expansions up to the conjectural $O(1)$-terms \cite{BKS23,ACC23,BSY24,BFL25}.

Recent work has primarily focused on Coulomb gases on the plane. However, as discussed in \cite{JMP94} with a prototypical example, models on the Riemann sphere are expected to exhibit different behaviour in their asymptotic expansions. In this work, we aim to systematically investigate and contribute to the free energy expansion of such spherical Coulomb gases. Furthermore, by inserting microscopic point charges, we investigate their effects on the expansions, which go beyond the conjectural forms in the previous literature.

\subsection{Spherical Coulomb gases and weakly confining potential}

We first introduce the models, the spherical Coulomb gases. 
Let $\mathbb{S}= \{ x \in \mathbb{R}^3: \| x \| =1 \}$ be the Riemann sphere equipped with the Euclidean metric $\| \cdot \|$ in $\R^3$. 
We investigate two different types of Coulomb gases $\boldsymbol{x}=\{x_j\}_{j=1}^N$ on $\mathbb{S}$, known as the determinantal and Pfaffian Coulomb gases due to their respective integrable structures. Their joint probability distribution functions are of the form
\begin{align}
 \label{Gibbs cplx sphere}
d\boldsymbol{\mathcal{P}}_{N}^\C(z)&=\frac{1}{ \mathcal{Z}_{N }^{ \mathbb{C} }( Q_{ \mathbb{S} } )  } \prod_{j>k=1}^N \| x_j-x_k \|^{2} \prod_{j=1}^{N}  e^{-N Q_{ \mathbb{S} }(x_j) }  \,dA_{ \mathbb S }(x_j),
\\
d\boldsymbol{\mathcal{P}}_{N}^\H(z)&=\frac{1}{   \mathcal{Z}_{N }^{ \mathbb{H} }( Q_{ \mathbb{S} } ) } \prod_{j>k=1}^N \| x_j-x_k \|^{2} \| x_j-\overline{x}_k \|^{2} \prod_{j=1}^{N} \|x_j-\overline{x}_j\|^2  e^{-2N Q_{ \mathbb{S} }(x_j) }  \,dA_{ \mathbb S }(x_j),  \label{Gibbs symplectic sphere}
\end{align}
where $dA_{ \mathbb S }$ is the area measure on the sphere, and $\mathcal{Z}_{N }^{ \mathbb{C} }( Q_{ \mathbb{S} } )$ and $ \mathcal{Z}_{N }^{ \mathbb{H} }( Q_{ \mathbb{S} } )$ are partition functions that make \eqref{Gibbs cplx sphere} and \eqref{Gibbs symplectic sphere} probability measures. 
Here, $\overline{u}=(u_1,u_2,-u_3)$ for $u= (u_1,u_2,u_3)$, and $Q_\mathbb{S} : \mathbb{S} \to \R$ is the external potential, which may depend on $N$. 
The prototypical example of the potential is of the form
\begin{equation} \label{Q point charges on the sphere}
Q_{  \mathbb{S} }^{ \rm sp }(x) := -\frac{2\alpha}{N} \log \| x- {\rm{N}} \|   -\frac{2c}{N} \log \| x- {\rm{S}} \|,  \qquad \begin{cases}
{\rm{N}}=(0,0,1),
\smallskip 
\\
{\rm{S}}=(0,0,-1), 
\end{cases} 
\end{equation}
where $\alpha, c \ge 0$ are point charges. 
Notice that the potentials can be defined up to additive constants.  Namely, if $Q_{ \mathbb{S}}$ is replaced by $Q_{ \mathbb{S}}+\mathfrak{c}$ for some constant $\mathfrak{c}$, the models \eqref{Gibbs cplx sphere} and \eqref{Gibbs symplectic sphere} remain unchanged.  

It is convenient to consider equivalent models in the complex plane. The determinantal and Pfaffian Coulomb gas ensembles $\boldsymbol{z}=\{z_j\}_{j=1}^N$ in the complex plane $z_j \in \mathbb{C}$ are given by
\begin{align}
 \label{Gibbs cplx}
d\P_{N}^\C( \boldsymbol{z} )&=\frac{1}{Z_{N}^\C(Q) } \prod_{j>k=1}^N |z_j-z_k|^{2} \prod_{j=1}^{N}  e^{-N Q(z_j) }  \,dA(z_j),
\\
d\P_{N}^\H(z)&=\frac{1}{Z_{N}^\H(Q) } \prod_{j>k=1}^N |z_j-z_k|^{2} |z_j-\bar{z}_k|^2  \prod_{j=1}^{N}|z_j-\bar{z}_j|^2  e^{-2N Q(z_j) }  \,dA(z_j),  \label{Gibbs symplectic}
\end{align}
where $dA(z)=d^2z/\pi$ is the area measure. 
Here, the potential $Q_{ \mathbb S }$ is related to the potential $Q$ as 
\begin{equation} \label{rel btw Q S and Q}
Q_\mathbb{S}= Q \circ \phi - \frac{N+1}{N} \log (1+|\phi|^2), 
\end{equation}
up to an additive constant, where $\phi: \mathbb{S} \to \C \cup \{ \infty \}$ is the stereographic projection 
\begin{equation} \label{def of stereographic proj}
\phi(u_1,u_2,u_3) = 
\begin{cases}
\dfrac{ u_1+iu_2 }{ 1-u_3 }, &\textup{for } (u_1,u_2,u_3) \in \mathbb{S}^2, u_3 \not=1,
\smallskip 
\\
\infty, &\textup{for }(u_1,u_2,u_3)={\rm{N}}.  
\end{cases}
\end{equation}
The prototypical example \eqref{Q point charges on the sphere} then reads as
\begin{equation} \label{potential spherical ensemble N}
Q^{ \rm sp }(z):= \frac{ N+\alpha+c+1 }{N} \log(1+|z|^{2})-\frac{ 2c }{N} \log |z|. 
\end{equation} 
In this context, it is crucial that the potential is $N$-dependent. We also write 
\begin{equation} \label{potential spherical ensemble infty}
Q^{ \rm sp }_\infty(z):= \log(1+|z|^2) 
\end{equation}
for the large-$N$ limit of \eqref{potential spherical ensemble N}. 
The ensembles \eqref{Gibbs cplx} and \eqref{Gibbs symplectic} with the potential $Q^{ \rm sp }$ correspond to the eigenvalues of complex \cite{Kr09,FF11,FK09} and symplectic \cite{FM12,May13,MP17,BF23a} spherical induced ensembles, respectively.  
See \cite[Sections 2.5 and 11.3]{BF24} for comprehensive reviews.

The partition functions $\mathcal{Z}_{N}^{\mathbb{C}}(Q_{\mathbb{S}})$ and $\mathcal{Z}_{N}^{\mathbb{H}}(Q_{\mathbb{S}})$ in \eqref{Gibbs cplx sphere} and \eqref{Gibbs symplectic sphere} are related to $Z_{N}^{\mathbb{C}}(Q)$ and $Z_{N}^{\mathbb{H}}(Q)$ via the following property.

\begin{lem} \label{Lem_Z sphere plane relation} We have 
\begin{equation}
\mathcal{Z}_{ N }^{ \mathbb{C} }( Q_{\mathbb{S}}  )  = 2^{N(N-1)} Z_N^{\mathbb{C} } (Q), \qquad  \mathcal{Z}_{ N }^{ \mathbb{H} }( Q_{\mathbb{S}}  )  = 2^{2N^2} Z_N^{\mathbb{H} } (Q).
\end{equation}
\end{lem}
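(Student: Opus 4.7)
The plan is to obtain both identities by a direct change of variables $z_j=\phi(x_j)$ via the stereographic projection \eqref{def of stereographic proj}. The three ingredients I would invoke are the classical chordal-distance identity
\[
\|x_j-x_k\|^{2}=\frac{4\,|z_j-z_k|^{2}}{(1+|z_j|^{2})(1+|z_k|^{2})},
\]
the Jacobian $dA_{\mathbb S}(x)=(1+|z|^{2})^{-2}\,dA(z)$ (with the normalization of $dA_{\mathbb S}$ implicit in the paper), and the potential transformation law \eqref{rel btw Q S and Q}, which yields
$e^{-NQ_{\mathbb S}(x)}=(1+|z|^{2})^{N+1}\,e^{-NQ(z)}$.

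For the determinantal partition function, each chordal factor contributes the constant $4$ together with the pair $(1+|z_j|^{2})^{-1}(1+|z_k|^{2})^{-1}$. Since each index appears in exactly $N-1$ pairs, the product of chordal denominators over $j$ is $(1+|z_j|^{2})^{-(N-1)}$; combining this with $(1+|z_j|^{2})^{N+1}$ from the potential and $(1+|z_j|^{2})^{-2}$ from the Jacobian gives a total exponent per $j$ of $-(N-1)+(N+1)-2=0$. The only surviving constant is $4^{N(N-1)/2}=2^{N(N-1)}$, which yields the first equality.

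For the Pfaffian ensemble, I would additionally use the analogous chordal identities
\[
\|x_j-\bar x_k\|^{2}=\frac{4\,|z_j-\bar z_k|^{2}}{(1+|z_j|^{2})(1+|z_k|^{2})},\qquad \|x_j-\bar x_j\|^{2}=\frac{4\,|z_j-\bar z_j|^{2}}{(1+|z_j|^{2})^{2}},
\]
matching the sphere involution $\bar u=(u_1,u_2,-u_3)$ with the planar conjugation under $\phi$. The bookkeeping now produces the exponent $-(N-1)-(N-1)-2+2(N+1)-2=0$ for each $j$, where the $2(N+1)$ comes from the doubled exponential $e^{-2NQ_{\mathbb S}(x)}$. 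Counting constants from the $N(N-1)/2$ ordinary pairs, the $N(N-1)/2$ cross pairs, and the $N$ diagonal factors gives $4^{N(N-1)/2+N(N-1)/2+N}=4^{N^{2}}=2^{2N^{2}}$, which is the second identity.

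The main obstacle is nothing beyond careful bookkeeping: once the three ingredients are in place, the identities reduce to checking that the exponents of $(1+|z_j|^{2})$ sum to zero and that the surviving constant factors match. The only genuinely nontrivial point to verify is the compatibility of the chordal identity for $\|x-\bar y\|$ with the chosen stereographic projection, so that the Pfaffian structure on the sphere transfers to the correct planar Pfaffian weight; after this check, the Pfaffian case is parallel to, and essentially no harder than, the determinantal case.
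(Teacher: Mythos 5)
Your proposal is correct and follows essentially the same route as the paper's proof: a change of variables under the stereographic projection, using the chordal-distance identity, the Jacobian $dA_{\mathbb S}(x)=(1+|z|^{2})^{-2}\,dA(z)$, and the transformation law \eqref{rel btw Q S and Q}, with the same exponent bookkeeping showing that all powers of $(1+|z_j|^{2})$ cancel and only the constants $2^{N(N-1)}$ and $2^{2N^{2}}$ survive. Regarding the one point you flag as needing verification: the identity $\|x-\bar y\|^{2}=4|z-\bar w|^{2}/\bigl((1+|z|^{2})(1+|w|^{2})\bigr)$ holds for the involution $u\mapsto(u_1,-u_2,u_3)$, which is the reflection conjugate to $z\mapsto\bar z$ under $\phi$, whereas the map $(u_1,u_2,-u_3)$ as literally written in the paper corresponds to $z\mapsto 1/\bar z$ and would produce $|1-z\bar w|^{2}$ instead; so the convention must be read as the former for the Pfaffian identity (and hence the lemma) to hold as stated.
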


This lemma easily follows by the change of variables under the stereographic projections, see Subsection~\ref{Subsection_stereographic}. 


\medskip 

In general, there are two equivalent but slightly different perspectives on defining the Coulomb gases. 
The first approach is to consider a given potential $Q$ and define the model according to \eqref{Gibbs cplx} and \eqref{Gibbs symplectic}. Together with the classical equilibrium convergence \cite{Se24}, the geometric properties of the models then follow from logarithmic potential theory \cite{ST97}. Specifically, the system tends to be distributed according to Frostman's equilibrium measure associated with the potential \( Q \). The limiting support of the system consequently depends on the potential \( Q \). This approach is straightforward for defining the model; however, it is generally difficult to determine the geometry of the droplet a priori.
The second approach is somewhat the opposite. We begin with a certain probability measure supported on the complex plane, which coincides with the equilibrium measure. Using this measure, we then define the external potential via a logarithmic potential. While this approach is somewhat less explicit in defining the external potential, it has the advantage of allowing us to determine the geometry of the droplet a priori, as it is given by the support of the initial probability measure.

For our purpose of investigating Coulomb gases supported on the whole Riemann sphere, we take this second approach. This is closely related to the notion of weakly confining potential or the planar jellium model.
We first collect our assumptions on the background measure. 

\begin{defi}[\textbf{Background measure}] \label{Def_measure mu}
Let $\mu$ be an $N$-independent probability measure supported on the whole complex plane. We assume the following.
\begin{itemize}
    \item $\mu$ is absolutely continuous with respect to the area measure:
\begin{equation}
d\mu(z) = {\boldsymbol\rho} (z) \, dA(z),
\end{equation}
where $\boldsymbol\rho : \C \to \R_{ > 0 }$ is smooth. 
    \smallskip 
    \item $\boldsymbol\rho$ is radially symmetric with $\boldsymbol\rho(z) = \rho(|z|)$, where $\rho: \R_{ \ge 0 } \to \R_{ > 0 }$.  
    \smallskip 
    \item The pull-back of the density on the Riemann sphere 
\begin{equation}
\boldsymbol\rho_{ \mathbb{S} }(x)= (1+|\phi(x)|^2)^2 \boldsymbol\rho(\phi(x)) 
\end{equation}
does not vanish at the north pole $\rm N$. 
Equivalently, there exists a positive constant $A>0$ such that 
\begin{equation}\label{rho:growth}
    \boldsymbol\rho(z) \sim \frac{A}{|z|^4}, \qquad z\to \infty. 
\end{equation}
We write 
    \begin{equation} \label{def of rho tilde}
\tilde{\rho}(s) := s^{-4}\rho(s^{-1}), \qquad s>0. 
\end{equation} 
Then $\tilde{\rho}$ has the unique continuous extension to $\R_{\ge0}$ and $\tilde{\rho}(0) = A.$ We further assume that $\tilde{\rho}$ is smooth at $0$. 
\end{itemize}
\end{defi}


For a probability measure $\mu$ supported on the complex plane, we write  
\begin{equation} \label{def of logarithmic potential}
U_\mu(z) = \int \log\frac1{|z-w|}\, d\mu(w)
\end{equation}
for the associated logarithmic potential. 
For a given probability measure $\mu$ and positive real numbers $\alpha,c\ge 0$, we define the external potential 
\begin{equation} \label{def of external potential}
Q(z) \equiv Q_{N,\alpha,c}(z) := -2\frac{\totalN}{N} U_\mu(z)-\frac{2c}{N}\log|z|, \qquad \totalN:=N+\alpha+c+1.  
\end{equation}
Let us also write 
\begin{equation} \label{def of Q infty}
Q_\infty(z) := -2 U_\mu(z). 
\end{equation}

Note that since 
\begin{equation} \label{U mu growth}
U_\mu(z)= -\log|z|+ o(1) , \qquad z \to \infty,
\end{equation}
we have 
\begin{equation}\label{Q:growth}
Q_{N,\alpha,c}(z) \sim 2 \frac{N+\alpha+1}{N} \log|z|, \qquad z \to \infty. 
\end{equation}
This growth property is indeed a characteristic feature of the spherical ensemble. For a potential with a sufficiently strong growth condition \(Q(z) \gg 2 \log |z|\) as $z \to \infty$, it is well known that the associated droplet is a compact subset of the complex plane. In contrast, for a potential with the growth condition \eqref{Q:growth}, the confining energy is weak enough that the system \eqref{Gibbs cplx} lie in the whole complex plane as $N \to \infty$. A potential with the growth condition \eqref{Q:growth} is referred to as a weakly confining potential, see e.g. \cite{BGNW21} and references therein.
Let us also mention that for $c=0$, our definition of the models correspond to the definition of the planar jellium \cite{CGJ20}.  
Thus, our model can be regarded as the jellium model in the plane with total background charge $2(N+\alpha+c+1)$ and background distribution $\mu$ with an insertion of a point charge $c$ at the origin. 

\begin{ex}
The simplest and fundamental example is the case where $\boldsymbol\rho_{ \mathbb{S} }$ is uniform on the Riemann sphere. In the complex plane, it corresponds to the spherical density 
\begin{equation}
\boldsymbol\rho^{ \rm sp }_\infty(z) := \frac{1}{(1+|z|^2)^2}.
\end{equation}
In this case, we have 
$$
U_\mu(z)\Big|_{ \boldsymbol\rho= \boldsymbol\rho^{ \rm sp }_\infty } = - \frac12 \log(1+|z|^2) = - \frac12\, Q^{ \rm sp }_\infty(z),  
$$
where $Q^{ \rm sp }_\infty$ is given by \eqref{potential spherical ensemble infty}.
Then associated external potential is given by $Q^{ \rm sp }$ in \eqref{potential spherical ensemble N}. 
\end{ex}

\subsection{Main results}
 
For a given probability measure $\mu$ in Definition~\ref{Def_measure mu}, we write 
\begin{equation} \label{def of I mu}
 I[\mu] :=  \int \log|z-w| \,d\mu(z) \, d\mu(w)  = - \int U_{\mu} \,d\mu  
\end{equation}
for the (unweighted) logarithmic energy. By using \eqref{def of Q infty}, it can also be interpreted as a weighted logarithmic energy associated with the potential $Q_\infty$, namely, 
\begin{equation}
I[\mu] \equiv I_{ Q_\infty }[\mu] =  -\int \log|z-w| \,d\mu(z) \, d\mu(w)  + \int  Q_\infty \,d\mu.
\end{equation}
We also write 
\begin{equation}
 E[\mu] := \int \log \boldsymbol\rho \, d\mu
\end{equation} 
for the entropy.  

In the presence of local point charges, the expansions of the free energies are expressed in terms of the Barnes $G$-function, which is defined recursively by
\begin{equation} \label{def of Barnes G}
G(z+1)=\Gamma(z)G(z),\qquad G(1)=1, 
\end{equation}
see \cite[Section 5.17]{NIST}. 
We then have the following. 

\begin{thm}[\textbf{Free energy expansion of the determinantal Coulomb gas on the sphere}] \label{Thm_free energy det}
Suppose that a probability measure $\mu$ satisfies the assumptions in Definition~\ref{Def_measure mu}. Let $Q \equiv Q_{N,\alpha,c}$ be given in \eqref{def of external potential}.
Then as $N\to\infty$, the partition function $Z_N^{\C}(Q)$ in \eqref{Gibbs cplx} satisfies the asymptotic expansion 
\begin{equation}
 \log Z_N^{\C}(Q) = C_1 N^2 + C_2 N \log N + C_3 N +C_4 \log N+C_5 + O(N^{-\frac{1}{12}}(\log N)^{d}),
\end{equation}
for some constant $d>0$, where 
\begin{align}
C_1 &= - I [\mu], 
\\
C_2 &= \frac12 , 
\\
C_3 &= \frac{\log (2\pi)}{2} -1 - 2(\alpha+c+1) I[\mu] -\frac12 E[\mu] - 2c \,U_\mu(0),
\\
C_4 &= \frac{\alpha^2+c^2}{2} +\frac13, 
\\
\begin{split}
C_5 &= -(\alpha+c+1)^2 I[\mu] -(\alpha+c+1) \Big( \frac{1}{2} E[\mu] + 2c \, U_\mu(0) \Big)
\\
&\quad  +\frac{1}{2}\Big(c^2+c+\frac{1}{3}\Big)\log \rho(0)+\frac{1}{2}\Big(\alpha^2 + \alpha + \frac{1}{3}\Big)\log \tilde{\rho}(0) \\ 
    &\quad + \frac{1}{2}(\alpha+c+1)(\log (2\pi)-1)  +2\zeta'(-1)-\log \Big(G(c+1)G(\alpha+1)\Big)
    \\
     & \quad    - \frac{5}{12} - \frac{1}{6}\int_{0}^{\infty} \Big(\frac{\rho''(t)}{\rho(t)}-\frac{5}{4}\Big(\frac{\rho'(t)}{\rho(t)}\Big)^2\Big)t\,dt .
\end{split}
\end{align}
Here $\tilde{\rho}$ is given by \eqref{def of rho tilde}, $\zeta$ is the Riemann zeta function, and $G$ is the Barnes $G$-function. 
\end{thm}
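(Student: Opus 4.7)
\emph{Setup and reduction to one-dimensional integrals.} Since $Q \equiv Q_{N,\alpha,c}$ is radially symmetric, the monomials $\{z^j\}_{j=0}^{N-1}$ form an orthogonal system with respect to $e^{-NQ}\,dA$, and the squared-Vandermonde expansion gives
$$Z_N^{\C}(Q) = N! \prod_{j=0}^{N-1} h_j, \qquad h_j = 2\int_0^\infty r^{2j+2c+1} e^{2n U_\mu(r)}\,dr,$$
with $n := N+\alpha+c+1$. The task thus reduces to an asymptotic expansion of $\log N! + \sum_{j=0}^{N-1}\log h_j$, with $\log N!$ treated by Stirling at the end.

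\emph{Bulk analysis via Laplace plus Euler--Maclaurin.} Using $U_\mu'(r) = -M(r)/r$ with $M(r) := \mu(\{|w|\le r\})$, the integrand of $h_j$ has a unique critical point $r_j$ solving $M(r_j) = (j+c+\tfrac12)/n$, and the second derivative at $r_j$ equals $-4n\rho(r_j)$. A second-order Laplace expansion gives
$$\log h_j = (2j+2c+1)\log r_j + 2n U_\mu(r_j) + \tfrac{1}{2}\log\tfrac{2\pi}{n\rho(r_j)} + \frac{A(r_j)}{n} + O(n^{-2}),$$
where $A$ is an explicit rational function of $\rho,\rho',\rho''$ at $r_j$. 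Summing over the bulk range and changing variable $t = M(r)$ produces Riemann sums that I would handle by Euler--Maclaurin. The two potential-theoretic identities
$$\int_0^\infty M(r)\log r\cdot 2r\rho(r)\,dr = \tfrac12 I[\mu],\qquad \int U_\mu\,d\mu = -I[\mu]$$
(the first from Fubini and $\int_0^{2\pi}\log|re^{i\theta}-s|\,d\theta/(2\pi)=\log\max(r,s)$) show that the leading contribution is $-n^2 I[\mu]$, which after expanding $n^2 = N^2 + 2N(\alpha+c+1) + (\alpha+c+1)^2$ accounts for all $I[\mu]$-pieces in $C_1, C_3, C_5$. The sum of $-\tfrac12\log\rho(r_j)$ converts, via $t = M(r)$, to $-\tfrac12 N\,E[\mu]$; the sum of $2c\log r_j$ produces $-2cn\,U_\mu(0)$ and hence the $-2c U_\mu(0)$-terms in $C_3$ and $C_5$; and the $A/n$ correction, summed and simplified by integration by parts using $M'(r)=2r\rho(r)$, produces the $-\tfrac{1}{6}\int_0^\infty(\rho''/\rho-\tfrac54(\rho'/\rho)^2)\,r\,dr$ in $C_5$.

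\emph{Edge regions via Barnes $G$.} Near $j=0$ the critical point $r_j \to 0$, and expanding $U_\mu(r)=U_\mu(0)-\tfrac12\rho(0)r^2+O(r^4)$ yields
$$h_j \sim e^{2n U_\mu(0)}\,\frac{\Gamma(j+c+1)}{(n\rho(0))^{j+c+1}}(1+O(1/n)).$$
The telescoping identity $\sum_{j=0}^{K-1}\log\Gamma(j+c+1) = \log[G(K+c+1)/G(c+1)]$ combined with the standard Barnes asymptotics
$$\log G(z+1) = \tfrac{z^2}2\log z - \tfrac{3}{4} z^2 + \tfrac{z}{2}\log(2\pi) - \tfrac{1}{12}\log z + \zeta'(-1) + o(1)$$
produces the $-\log G(c+1)$, the $\tfrac12(c^2+c+\tfrac13)\log\rho(0)$, and one half of the $2\zeta'(-1)$, together with polynomials in $K$ that will be matched against the bulk. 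The right edge $j$ close to $N-1$ is handled by the inversion $s=1/r$: with $\tilde\rho$ as in \eqref{def of rho tilde}, the integrand of $h_j$ becomes that of a left-edge problem with $(\rho,c)$ replaced by $(\tilde\rho,\alpha)$, yielding $-\log G(\alpha+1)$, $\tfrac12(\alpha^2+\alpha+\tfrac13)\log\tilde\rho(0)$, and the other half of $2\zeta'(-1)$. Finally, Stirling applied to $\log N!$ and the leftover Euler--Maclaurin constants produce the $C_2 N\log N$ coefficient, the $C_4\log N$, the $\tfrac12(\alpha+c+1)(\log(2\pi)-1)$ and the $-\tfrac{5}{12}$ in $C_5$.

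\emph{Main obstacle.} The hardest part is the clean matching of the three regimes (bulk and two edges) at a cutoff $K = K(N)$: the polynomials in $K$ and $\log K$ coming from the Barnes $G$ asymptotics on each side must cancel exactly against the corresponding subleading Euler--Maclaurin terms from the bulk. This forces one to (i) compute the $A$-correction in Laplace explicitly and verify its compatibility with the edge expansions, (ii) carry out several integrations by parts using $M'(r) = 2r\rho(r)$ to rewrite the resulting combinations of $\rho,\rho',\rho''$ in the stated form, and (iii) obtain uniform-in-$j$ error bounds, which ultimately dictate the error exponent $O(N^{-1/12}(\log N)^d)$ through an optimal choice of $K$.
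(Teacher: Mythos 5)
Your proposal follows essentially the same route as the paper: the determinantal formula $Z_N^\C=N!\prod h_j$ for radially symmetric $Q$, a three-regime split with Laplace's method plus Euler--Maclaurin in the bulk (critical point $M(r_j)=(j+c+\tfrac12)/n$, second derivative $-4n\rho(r_j)$, and the $\mathfrak{B}_1$-type $1/n$ correction), Gamma-function behaviour and Barnes $G$ telescoping at the left edge, the inversion $s=1/r$ reducing the right edge to a left-edge problem with $(\tilde\rho,\alpha)$, and matching at a cutoff $\sim N^{1/6}$ to get the $O(N^{-1/12}(\log N)^d)$ error. The key identities you cite (circular averages giving $\int_0^\infty M(r)\log r\,2r\rho(r)\,dr=\tfrac12 I[\mu]$, the edge asymptotics $h_j\sim e^{2nU_\mu(0)}\Gamma(j+c+1)/(n\rho(0))^{j+c+1}$) are exactly those used in Lemmas~\ref{lem:U}, \ref{lem:logh1}--\ref{lem:logh3} and \ref{lem:sum1}--\ref{lem:sum3}, so the outline is correct and matches the paper's proof.
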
 

In contrast to the usual asymptotic formulas found in the literature (see also \eqref{Z expansion cplx gen} below), the formula for $C_3$ contains the energy term, while $C_5$ contains both the energy and entropy terms. This arises from the $N$-dependence of the potential \eqref{def of external potential}. Consequently, it is also natural to express the expansion in terms of the total background charge $n = N + \alpha + c + 1$ as
\begin{align}
\begin{split} \label{expansion ZN C nN}
 \log Z_N^{\C}(Q) & \sim - I[\mu]n^2 +\frac{1}{2} N\log N - \Big( \frac12 E[\mu]+2c\,U_\mu(0) \Big) n  +\Big( \frac{ \log (2\pi) }{2}-1 \Big) N +
        C_4 \log N \\
        &\quad +\frac{1}{2}\Big(c^2+c+\frac{1}{3}\Big)\log \rho(0)+\frac{1}{2}\Big(\alpha^2 + \alpha + \frac{1}{3}\Big)\log \tilde{\rho}(0) + \frac{1}{2}(\alpha+c+1)\log (2\pi) \\
        &\quad +2\zeta'(-1)-\log (G(c+1)G(\alpha+1)) - \frac{5}{12} - \frac{1}{6}\int_{0}^{\infty} \Big(\frac{\rho''(t)}{\rho(t)}-\frac{5}{4}\Big(\frac{\rho'(t)}{\rho(t)}\Big)^2\Big)t\,dt . 
\end{split}
\end{align}

\begin{rem} It is intuitively clear that the term $2c\,U_\mu(0)$ in $C_3$ arises from the microscopic insertion of a point charge at the origin. Notice however that the symmetric counterpart of \( 2c\,U_\mu(0) \) does not appear in \( C_3 \). This omission arises from our specific normalisation when defining the potential. To be more precise, we first write 
\begin{equation}
Q_\mathbb{S,\infty} := Q_\infty \circ \phi -   \log (1+|\phi|^2)
\end{equation}
for the large-$N$ counterpart of the potential on the sphere geometry. 
Note that by \eqref{U mu growth} and \eqref{def of Q infty}, 
$$
Q_\infty(z) = 2 \log |z| +o(1), \qquad Q_\mathbb{S,\infty}( {\rm{N}} ) =0.
$$ 
Therefore the seemingly asymmetric term $-2c \, U_\mu(0)$ can actually be written in a more symmetric manner
\begin{equation}
-2c \, U_\mu(0) = c\, Q_\mathbb{S,\infty}( {\rm{S}} ) = c\, Q_\mathbb{S,\infty}( {\rm{S}} )  + \alpha\, Q_\mathbb{S,\infty}( {\rm{N}} ) . 
\end{equation} 
\end{rem}

As a counterpart for the Pfaffian Coulomb gases, we have the following.

\begin{thm}[\textbf{Free energy expansion of the Pfaffian Coulomb gas on the sphere}] \label{Thm_free energy Pfaff}
Under the same assumptions in Theorem~\ref{Thm_free energy det}, as $N\to\infty$, the partition function $Z_N^{\mathbb{H}}(Q)$ in \eqref{Gibbs symplectic} satisfies 
\begin{equation}
 \log Z_N^{ \mathbb{H} }(Q) = D_1 N^2 + D_2 N \log N + D_3 N +D_4 \log N+D_5 + O(N^{-\frac{1}{12}}(\log N)^{d}),
\end{equation}
for some constant $d>0$, where 
\begin{align}
D_1 &= -2 I [\mu], 
\\
D_2 &= \frac12 , 
\\
D_3 &=  \frac{\log (4\pi)}{2} -1 - 4(\alpha+c+1) I[\mu] -\frac12 E[\mu] - (4c+1) U_\mu(0), 
\\
D_4 &= \alpha^2+\frac{\alpha}{2}+c^2 + \frac{c}{2}+\frac{5}{12},
\\
\begin{split}
D_5 &= -2(\alpha+c+1)^2 I[\mu] -(\alpha+c+1) \Big( \frac{1}{2} E[\mu] + (4c+1) \, U_\mu(0) \Big)
\\
&\quad + \Big(c^2+c+\frac{5}{24}\Big)\log \rho(0) + \Big(\alpha^2 + \alpha + \frac{5}{24}\Big)\log \tilde{\rho}(0)  \\ 
    &\quad + (\alpha+c+1)\Big(\log (2\pi)-\frac12 \Big)  +4\zeta'(-1)- \log\Big(G(c+1)G(c+\tfrac{3}{2})G(\alpha+1)G(\alpha+\tfrac{3}{2})\Big)
    \\
     & \quad   -\frac{5}{24}  - \frac{1}{12}\int_0^{\infty} \Big(\frac{\rho''(t)}{\rho(t)} - \frac{5}{4}\Big(\frac{\rho'(t)}{\rho(t)} \Big)^2 \Big)t\,dt. 
\end{split}
\end{align} 
Here $\zeta$ is the Riemann zeta function and $G$ is the Barnes $G$-function. 
\end{thm}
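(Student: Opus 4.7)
The plan is to adapt the proof strategy of Theorem~\ref{Thm_free energy det}, replacing orthogonal polynomials by skew-orthogonal polynomials and carefully tracking the additional combinatorial structure that arises. Since $\mu$ is radially symmetric, the single-particle weight
\[
w_N(z) = |z-\bar z|^{2}\,|z|^{4c}\,e^{4n\,U_\mu(z)}, \qquad n := N+\alpha+c+1,
\]
is invariant under $z\mapsto e^{i\theta}z$ and under complex conjugation. In the Kanzieper-type framework (see \cite[Sec.~11.3]{BF24}), the monomials $\{z^k\}$ serve as skew-orthogonal polynomials up to a triangular adjustment, and the de~Bruijn/Pfaffian identity factorises the partition function as
\[
Z_N^{\mathbb{H}}(Q) = N!\,\prod_{k=0}^{N-1} r_k, \qquad r_k \propto h_{2k+1},
\]
where $h_m := \int_0^\infty s^{m+2c}\,e^{4n\,\mathsf{U}(s)}\,ds$ and $\mathsf{U}(s) := U_\mu(\sqrt s)$ is the radial logarithmic potential in the squared variable $s=|z|^2$. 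The essential new feature relative to the determinantal case is that only \emph{odd} indices $m=2k+1$ enter the product.

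For each bulk index $m=2k+1$ with $k/N \in (\delta,1-\delta)$, the exponent $4n\,\mathsf{U}(s)+(m+2c)\log s$ has a unique maximiser $s_m^*>0$ fixed by a saddle-point equation, and Laplace's method to fourth order yields
\[
\log h_{2k+1} = n\,F_0(\tau) + \tfrac12\log\tfrac{2\pi}{n\,G_0(\tau)} + \frac{F_2(\tau)}{n} + O(n^{-2}), \qquad \tau := \tfrac{2k+1}{2n},
\]
uniformly in the bulk, where $F_0, G_0, F_2$ are computable in terms of $\mathsf{U}$ and its derivatives. The coefficient $F_2$ is the source of the integrand $\rho''/\rho - (5/4)(\rho'/\rho)^2$ appearing in the integral term of $D_5$.

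Summing $\sum_{k=0}^{N-1}\log r_k$ by Euler--Maclaurin then produces the bulk contributions to $D_1 N^2$, $D_2 N\log N$, $D_3 N$, the entropy piece $-E[\mu]/2$, and the integral term $-\tfrac{1}{12}\int_0^\infty(\rho''/\rho-(5/4)(\rho'/\rho)^2)\,t\,dt$; the factor $1/12$ (half of the $1/6$ in Theorem~\ref{Thm_free energy det}) arises from the odd-index restriction, which effectively halves the density of the Riemann sum. For indices $k$ near $0$ (resp.\ near $N$), a microscopic rescaling near the origin (resp.\ at $\infty$ via $s\mapsto 1/s$) reduces $h_{2k+1}$ to incomplete Gamma integrals with shifted arguments $k+c+1$ and $k+c+\tfrac32$ (resp.\ $k+\alpha+1$ and $k+\alpha+\tfrac32$); the half-integer shifts arise naturally from the odd-index pairing. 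Summing via Stirling converts the resulting products of $\Gamma$-factors into the Barnes product $\log G(c+1)G(c+\tfrac32)G(\alpha+1)G(\alpha+\tfrac32)$, together with the Glaisher--Kinkelin constant $4\zeta'(-1)$ --- twice the $2\zeta'(-1)$ of Theorem~\ref{Thm_free energy det}, reflecting the two families of Gamma products. The overall doubling $D_1 = 2C_1$ likewise reflects the replacement of $NQ$ by $2NQ$ in the weight \eqref{Gibbs symplectic}.

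The main obstacle is the book-keeping in the boundary matching: combining the two families of Gamma factors at shifts $k+c+1,\,k+c+\tfrac32$ and $k+\alpha+1,\,k+\alpha+\tfrac32$ with the bulk Euler--Maclaurin expansion so that every fractional constant --- in particular $D_4 = \alpha^2+\alpha/2+c^2+c/2+5/12$, the coefficients $c^2+c+5/24$ and $\alpha^2+\alpha+5/24$ of $\log\rho(0)$ and $\log\tilde\rho(0)$, and the universal constant $-5/24$ --- emerges with the correct sign and magnitude. Once Theorem~\ref{Thm_free energy det} is established, no genuinely new analytic difficulty should appear; the additional subtlety is essentially combinatorial, stemming from the odd-index restriction and the half-integer Barnes shifts specific to the Pfaffian structure.
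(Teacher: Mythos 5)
Your proposal follows essentially the same route as the paper: the reduction $Z_N^{\mathbb{H}}(Q)=N!\prod_k 2h_{2k+1,2N}$ via the skew-orthogonal polynomial construction for radially symmetric potentials, a three-regime split of the sum, Laplace's method plus Euler--Maclaurin in the bulk (yielding the halved integral coefficient $1/12$), and microscopic analysis at the two poles where the duplication of $\Gamma(2k+2c+2)$ into $\Gamma(k+c+1)\Gamma(k+c+\tfrac32)$ produces the four Barnes factors and $4\zeta'(-1)$. The correctly flagged remaining work is the same bookkeeping the paper carries out in Lemmas~\ref{Lem_4.7} and~\ref{Lem_4.9}.
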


By using the scaling parameter $n$, we can again rewrite the expansion so that the energy and entropy terms appear only in the coefficients of $n^2$ and $n$, respectively:
\begin{align} \label{expansion ZN H nN}
\begin{split}
     \log Z_N^{ \mathbb{H} }(Q)  &\sim -2 I[\mu]\, n^2 + \frac12 N \log N -\Big( \frac12 E[\mu]+(4c+1) U_\mu(0) \Big)n  + \Big( \frac{\log (4\pi)}{2}-1\Big)N     + D_4 \log N \\
        &\quad + \Big(c^2+c+\frac{5}{24}\Big)\log \rho(0) + \Big(\alpha^2 + \alpha + \frac{5}{24}\Big)\log \tilde{\rho}(0)  + (\alpha+c+1 )\log (2\pi)  + 4\zeta'(-1) 
        \\
        &\quad - \log\Big(G(c+1)G(c+\tfrac{3}{2})G(\alpha+1)G(\alpha+\tfrac{3}{2})\Big) -\frac{5}{24}- \frac{1}{12}\int_0^{\infty} \Big(\frac{\rho''(t)}{\rho(t)} - \frac{5}{4}\Big(\frac{\rho'(t)}{\rho(t)} \Big)^2 \Big)t\,dt.  
\end{split}
\end{align} 

\begin{rem}
Notice that the term $D_3$ contains $(4c+1)U_\mu(0).$ As explained in \cite{BKS23}, due to the radial symmetry, $U_\mu(0)$ can also be rewritten as 
$
U_\mu(0) = - \int \log |z-\overline{z}| \,d\mu(z),  
$
which leads to 
\begin{equation}
(4c+1)U_\mu(0) =  4c \, U_\mu(0)  - \int \log |z-\overline{z}| \,d\mu(z). 
\end{equation}
Albeit equivalent, the expression on the right-hand side captures more of the model's statistical meaning, as the first term arises from the insertion of a point charge, while the second term reflects the local repulsion along the real axis in \eqref{Gibbs symplectic}.
\end{rem}

Due to the well-known evaluation of $G(\tfrac12)$ (see e.g. \cite{Barnes}) and the recursive relation~\eqref{def of Barnes G}, we have 
\begin{equation} \label{eval of G(3/2)}
\log G(\tfrac32)= \frac{1}{24}\log 2 + \frac32 \zeta'(-1) +\frac14 \log \pi.
\end{equation}
Then as an immediate consequence of the above theorems, we have the following corollary.

\begin{cor} Suppose that $\alpha=c=0$. Then under the same assumptions in Theorem~\ref{Thm_free energy det}, as $N \to \infty$, we have 
\begin{align}
\begin{split} \label{ZN exp det ac0}
\log Z_N^{ \C }(Q) &  = -I[\mu] N^2 +\frac12 N \log N +\Big(  \frac{\log (2\pi)}{2} -1 - 2 I[\mu] -\frac12 E[\mu] \Big) N  + \frac13 \log N
\\
&\quad +\frac{\log(2\pi)}{2}     +2\zeta'(-1)  - I[\mu] - \frac{1}{2} E[\mu] 
 +\frac{ 1  }{6}\log (\rho(0)\tilde{\rho}(0) ) 
\\
&\quad - \frac{11}{12} - \frac{1}{6}\int_{0}^{\infty} \Big(\frac{\rho''(t)}{\rho(t)}-\frac{5}{4}\Big(\frac{\rho'(t)}{\rho(t)}\Big)^2\Big)t\,dt + O(N^{-\frac{1}{12}}(\log N)^{d}),
\end{split}
\end{align}
and 
\begin{align}
\begin{split} \label{ZN exp Pfaff ac0}
\log Z_N^{ \mathbb{H} }(Q) &= -2 I[\mu] N^2 + \frac12 N \log N + \Big( \frac{\log (4\pi)}{2} -1 - 4 I[\mu] -\frac12 E[\mu] - U_\mu(0) \Big) N + \frac{5}{12} \log N 
\\
& \quad  +  \frac{\log (2\pi)}{2} +\frac{5\log 2}{12}  + \zeta'(-1) -2 I[\mu] - \frac{1}{2} E[\mu] -  \, U_\mu(0) + \frac{5}{24} \log (\rho(0)\tilde{\rho}(0) ) 
\\ 
    &\quad    -\frac{17}{24}  
  - \frac{1}{12}\int_0^{\infty} \Big(\frac{\rho''(t)}{\rho(t)} - \frac{5}{4}\Big(\frac{\rho'(t)}{\rho(t)} \Big)^2 \Big)t\,dt + O(N^{-\frac{1}{12}}(\log N)^{d}).  
\end{split}
\end{align}
\end{cor}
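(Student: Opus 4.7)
\medskip

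\noindent\textbf{Proof plan for the corollary.} The corollary is stated as an immediate consequence of Theorems~\ref{Thm_free energy det} and~\ref{Thm_free energy Pfaff}, so the plan is simply to substitute $\alpha=c=0$ into the expansions for $C_1,\dots,C_5$ and $D_1,\dots,D_5$ and to collect constants. There is no analytic content beyond what is already contained in the two theorems and the evaluation \eqref{eval of G(3/2)} of $G(\tfrac32)$.

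\medskip

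\noindent\textbf{Determinantal case.} I would first set $\alpha=c=0$ in Theorem~\ref{Thm_free energy det}. The coefficients $C_1,C_2,C_4$ become $-I[\mu],\tfrac12,\tfrac13$ with no further simplification, and $C_3$ loses the point-charge term $2c\,U_\mu(0)$, yielding the stated value. For $C_5$ the simplifications I would do in order are:
\emph{(i)} $-(\alpha+c+1)^2 I[\mu]=-I[\mu]$ and $-(\alpha+c+1)(\tfrac12 E[\mu]+2c\,U_\mu(0))=-\tfrac12 E[\mu]$;
\emph{(ii)} $\tfrac12(c^2+c+\tfrac13)\log\rho(0)+\tfrac12(\alpha^2+\alpha+\tfrac13)\log\tilde\rho(0)=\tfrac16\log(\rho(0)\tilde\rho(0))$;
\emph{(iii)} $\tfrac12(\alpha+c+1)(\log(2\pi)-1)=\tfrac{\log(2\pi)}{2}-\tfrac12$;
\emph{(iv)} $\log(G(c+1)G(\alpha+1))=\log(G(1)G(1))=0$ since $G(1)=1$;
\emph{(v)} combine the loose constants $-\tfrac12-\tfrac{5}{12}=-\tfrac{11}{12}$. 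Summing gives exactly \eqref{ZN exp det ac0}.

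\medskip

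\noindent\textbf{Pfaffian case.} For Theorem~\ref{Thm_free energy Pfaff} at $\alpha=c=0$, the coefficients $D_1,D_2,D_4$ become $-2I[\mu],\tfrac12,\tfrac{5}{12}$ and $D_3$ reduces to the stated expression, with $(4c+1)U_\mu(0)$ collapsing to $U_\mu(0)$. The only slightly delicate point is the Barnes-$G$ contribution in $D_5$: at $\alpha=c=0$ we have
\[
-\log\bigl(G(c+1)G(c+\tfrac32)G(\alpha+1)G(\alpha+\tfrac32)\bigr)=-2\log G(\tfrac32),
\]
and applying \eqref{eval of G(3/2)} gives $-2\log G(\tfrac32)=-\tfrac{1}{12}\log 2-3\zeta'(-1)-\tfrac12\log\pi$. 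Combining this with the $4\zeta'(-1)$ and the $(\alpha+c+1)(\log(2\pi)-\tfrac12)=\log(2\pi)-\tfrac12$ already present, one collects
\[
\log(2\pi)-\tfrac12\log\pi-\tfrac{1}{12}\log 2=\tfrac{11}{12}\log 2+\tfrac12\log\pi=\tfrac{\log(2\pi)}{2}+\tfrac{5\log 2}{12}.
\]
The remaining logs $(c^2+c+\tfrac{5}{24})\log\rho(0)+(\alpha^2+\alpha+\tfrac{5}{24})\log\tilde\rho(0)$ collapse to $\tfrac{5}{24}\log(\rho(0)\tilde\rho(0))$, and the constant $-\tfrac12-\tfrac{5}{24}=-\tfrac{17}{24}$ emerges from the same telescoping as in the determinantal case. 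This yields \eqref{ZN exp Pfaff ac0}.

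\medskip

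\noindent\textbf{Main obstacle.} There is no genuine obstacle; the only nontrivial arithmetic is the Barnes-$G$ simplification via \eqref{eval of G(3/2)} and the bookkeeping of the $\log 2$ and $\log\pi$ contributions in the Pfaffian case. The error term $O(N^{-1/12}(\log N)^d)$ is inherited verbatim from the two theorems.
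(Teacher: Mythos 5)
Your proposal is correct and is exactly the paper's route: the corollary is stated there as an immediate consequence of Theorems~\ref{Thm_free energy det} and~\ref{Thm_free energy Pfaff} together with the evaluation \eqref{eval of G(3/2)} of $G(\tfrac32)$, and all of your arithmetic (including $-\tfrac12-\tfrac{5}{12}=-\tfrac{11}{12}$, $-\tfrac12-\tfrac{5}{24}=-\tfrac{17}{24}$, and the $\log 2$, $\log\pi$ bookkeeping giving $\tfrac{\log(2\pi)}{2}+\tfrac{5\log 2}{12}$) checks out.
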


\begin{rem}
For a regular and $N$-independent potential $V$, the general conjecture made in \cite{CFTW15,ZW06,JMP94,TF99} (when the associated droplet is connected) reads as
\begin{equation} \label{Z expansion cplx gen}
\begin{split}
\log Z_{N}^\C(V) &= - I_V[\sigma_V] N^2 +\frac12 N \log N + \Big( \frac{\log(2\pi)}{2}-1 - \frac12 \int_\C \log(\Delta V) \,d\sigma_V \Big) N
\\
&\quad + \frac{6-\chi}{12} \log N + \frac{\log(2\pi)}{2} + \chi \, \zeta'(-1) + \FF_V^\C  +o(1),
\end{split}
\end{equation}
where $\chi$ is the Euler index of the droplet and $\FF_V^\C$ is a constant depending on $V$.  
Furthermore, it was formulated in \cite{BKS23} that 
\begin{equation} \label{Z expansion symp gen}
\begin{split}
\log Z_{N}^\H(V) &= - 2\,I_V[\sigma_V] N^2 +\frac12 N \log N + \bigg[ \frac{\log(4\pi)}{2}-1 - \frac12 \int_\C  \log \Big(|z-\bar{z}|^2 \Delta V(z) \Big) \,d\sigma_V(z)   \bigg] N
\\
&\quad + \frac{12-\chi}{24} \log N + \frac{\log(2\pi)}{2} + \frac{\chi}{2} \Big( \frac{5 \log 2}{12}+ \zeta'(-1) \Big)  + \FF_V^\H  +o(1), 
\end{split}
\end{equation}
where $\FF_V^\H$ is a constant depending on $V$, see also \cite[Sections 5.3 and 10.6]{BF24}, \cite[Section 9.3]{Se24} and references therein. 
As previously mentioned, the topology (or $\chi$)-dependent terms appear in the coefficients of $\log N$ and in the constant $O(1)$-terms. Due to the $N$-dependence of our potential, Theorems~\ref{Thm_free energy det} and ~\ref{Thm_free energy Pfaff} do not entirely fit into the formulas above. Nonetheless, in \eqref{ZN exp det ac0} and \eqref{ZN exp Pfaff ac0}, one can observe the conjectural $\chi$-dependent terms with $\chi=2$. 
\end{rem}

\begin{rem}
The growth condition~\eqref{rho:growth} of $\boldsymbol{\rho}$ guarantees that the integrand in $C_5$ and $D_5$ vanishes at infinity with the following decay rate
$$
\frac{\rho''(t)}{\rho(t)}-\frac{5}{4}\Big(\frac{\rho'(t)}{\rho(t)}\Big)^2 = o(\frac{1}{t^2}), \qquad  t\to \infty, 
$$
and thus it is integrable over $[0,\infty)$. 
Let 
\begin{equation}
\kappa(r) = - \frac{2 \Delta \log \boldsymbol{\rho}(z) }{ \boldsymbol{\rho}(z) } =  -\frac{1}{2 \rho(r) } \Big( \frac{\rho'(r)}{ \rho(r) }\Big)', \qquad (r=|z|) 
\end{equation}
be the curvature.
Then 
\begin{equation}
\frac{\rho''(t)}{\rho(t)}-\frac{5}{4}\Big(\frac{\rho'(t)}{\rho(t)}\Big)^2  =  \Big( \frac{\rho'(t)}{ \rho(t) }\Big)' - \frac14 \Big(\frac{\rho'(t)}{\rho(t)}\Big)^2 =-2\kappa(t)\rho(t)  -  \Big( \frac12 \frac{\rho'(t)}{\rho(t)}\Big)^2. 
\end{equation}
This term appears in the Polyakov–Alvarez formula for exterior determinants in \cite[Eq.(4.11)]{ZW06}.
\end{rem}

\begin{rem}
When the equilibrium measure is uniform on the sphere under a particular choice of scaling parameters, the free energy expansion can also be obtained using the Bergman kernel method \cite[Section 3]{Kl14}, see also \cite{FK14,KMMW17}. This method utilizes the fact that the variation of the free energy can be described in terms of the Bergman kernel (in the bulk regime) whose asymptotic expansion is established in differential geometry, see e.g. \cite{Lu00,Tian90,Zel98,MM12,FKZ12} and references therein.
\end{rem}

\begin{rem}
In Theorems~\ref{Thm_free energy det} and ~\ref{Thm_free energy Pfaff}, we focus on smooth potentials without jump discontinuities. On the other hand, when such jump discontinuities exist, the free energy expansions typically become more complicated; for instance, they usually include an additional $O(\sqrt{N})$ term. This situation finds applications in counting statistics \cite{ACCL23,ACCL24,Ch22,ADM24} and hole probabilities \cite{Fo92,Ch23,BP24,Ad18}, where the latter is closely related to the notion of balayage measures in potential theory, see \cite{Ch23a} for recent progress. 
On the other hand, the point charge insertion we consider is also known as a Fisher-Hartwig type singularity, which finds application, for instance, in moments of characteristic polynomials \cite{WW19,DS22,BSY24,BFL25}. In spherical geometry, the point charge insertions, and in particular, their associated equilibrium measure problems, have been extensively studied \cite{CK22,LD21,BFL25}.  
\end{rem}

\begin{ex}
  For the induced spherical ensemble where the potential is given by \eqref{potential spherical ensemble N}, the partition functions can be written explicitly in terms of the Barnes $G$-function as 
  \begin{align*}
 Z_N^{ \mathbb{C} }( Q^{ \rm sp } ) &=     \frac{ N! }{  \Gamma(N+\alpha+c+1)^N  }  \frac{ G(N+c+1) }{ G(c+1) } \frac{ G(N+\alpha+1) }{ G(\alpha+1) },
 \\
  Z_N^{ \mathbb{H} }( Q^{ \rm sp } )  &= N! \Big( \frac{2^{2N+2\alpha+2c+1} }{  \pi \Gamma(2N+2\alpha+2c+2)  } \Big)^N 
 \frac{ G(N+c+1) }{ G(c+1) } \frac{ G(N+c+3/2) }{ G(c+3/2) } \frac{ G(N+\alpha+1) }{ G(\alpha+1) } \frac{ G(N+\alpha+3/2) }{ G(\alpha+3/2) }, 
\end{align*}
see e.g. \cite[Eq.(1.7)]{BP24} or \eqref{ZN random normal symplectic} below. Then their asymptotic behaviours follow from the well-known asymptotic behaviours of the Gamma function \cite[Eq.(5.11.1)]{NIST}
\begin{equation} \label{log N!}
\log N!= N \log N-N+\frac12 \log N+\frac12 \log(2\pi)+O(\frac{1}{N}), \qquad (N \to \infty)
\end{equation}
and also from those of the Barnes $G$-function \cite[Eq.(5.17.5)]{NIST}
\begin{align}
\begin{split} \label{Barnes G asymp}
\log G(z+1) =\frac{z^2 \log z}{2} -\frac34 z^2+\frac{ \log(2\pi) z}{2}-\frac{\log z}{12}+\zeta'(-1)-\frac{1}{240z^2}+O( \frac{1}{z^4} ).
\end{split}
\end{align} 
As a consequence, we have 
\begin{align} \label{free exp det spheri}
\begin{split}
\log  Z_N^{ \mathbb{C} }( Q^{ \rm sp } )  & = -\frac{N^2}{2} + \frac12 N \log N + \Big(\frac{\log (2\pi)}{2} -1  -(\alpha+c) \Big) N +\Big( \frac{\alpha^2+c^2}{2} + \frac{1}{3} \Big)  \log N  
\\
&\quad + \frac{\log(2\pi)}{2} -\frac{1}{12} +2\zeta'(-1) -\frac12 (\alpha+c)(\alpha+c+1-\log(2\pi)) 
\\
&\quad -\log \Big(G(\alpha+1)G(c+1) \Big) +O(\frac{1}{N})
\end{split} 
\end{align} 
and
\begin{align}
\begin{split} \label{free exp Pfaff spheri}
\log  Z_N^{ \mathbb{H} } ( Q^{ \rm sp } )  & = -N^2 + \frac12 N \log N + \Big( \frac{\log (4\pi)}{2}-2-2(\alpha+c) \Big) N +\Big( \alpha^2+\frac{\alpha}{2}+c^2+\frac{c}{2} +\frac5{12}\Big) \log N 
\\
&\quad + \log(2\pi) -\frac{13}{24}+4\zeta'(-1)  -\frac12 (\alpha+c) (2\alpha+2c+3-2\log(2\pi)) 
\\
&\quad -\log \Big( G(\alpha+1) G(\alpha+\tfrac32) G(c+1)G(c+\tfrac32)  \Big) +O(\frac{1}{N}).  
\end{split}
\end{align}
In particular, for the spherical case $\alpha=c=0$, by \eqref{eval of G(3/2)}, we have
\begin{align}
\begin{split}
\log  Z_N^{ \mathbb{C} } ( Q^{ \rm sp } ) \Big|_{\alpha=c=0}  & = -\frac{N^2}{2} + \frac12 N \log N + \Big( \frac{\log (2\pi)}{2} -1 \Big) N  \\
&\quad +\frac1{3} \log N + \frac{\log(2\pi)}{2} -\frac{1}{12} +2\zeta'(-1) +O(\frac{1}{N})
\end{split}
\end{align}
and 
\begin{align}
\begin{split}
 \log  Z_N^{ \mathbb{H} } ( Q^{ \rm sp } ) \Big|_{\alpha=c=0} & = -N^2 + \frac12 N \log N + \Big( \frac{\log (4\pi)}{2}-2 \Big) N 
 \\
 &\quad +\frac5{12} \log N  +\frac{\log(2\pi)}{2} -\frac{13}{24}+ \frac{5\log2}{12} +\zeta'(-1) +O(\frac{1}{N}).  
\end{split}
\end{align}   
The energy and entropy associated with the potential \eqref{potential spherical ensemble infty} are evaluated as   
\begin{equation}
I[\mu]= \frac12, \qquad U_\mu(0)=0, \qquad E[\mu]=  -2. 
\end{equation}
Using these, one can directly observe that Theorems~\ref{Thm_free energy det} and ~\ref{Thm_free energy Pfaff} give rise to \eqref{free exp det spheri} and \eqref{free exp Pfaff spheri}, respectively. 
\end{ex}

\subsection*{Plan of the paper} The rest of this paper is organised as follows. In the next section, we provide preliminaries such as the integrable structures of partition functions and an outline of the proof of our main results. Section~\ref{Section_asymptotic norm} is devoted to the asymptotic behaviours of orthogonal and skew-orthogonal norms. These are crucially used in Section~\ref{Section_proof of theorems}, where we complete the proofs of the main theorems.

\section{Preliminaries and outline of the proof} \label{Section_prelim}

In this section, we provide the preliminaries and outline our main results.

\subsection{Stereographic projection} \label{Subsection_stereographic}

We first provide a proof of Lemma~\ref{Lem_Z sphere plane relation}, which involves a simple change of variables under the stereographic projection.

\begin{proof}[Proof of Lemma~\ref{Lem_Z sphere plane relation}]
Recall that the stereographic projection $\phi$ is given by \eqref{def of stereographic proj}. It has the inverse
\begin{equation}
\phi^{-1}(z)= \Big( \frac{2 \re z}{1+|z|^2}, \frac{2\im z}{1+|z|^2}, \frac{-1+|z|^2}{1+|z|^2} \Big). 
\end{equation}
Let us write $z=\phi(x), w=\phi(y)$ for $x,y \in \mathbb{S}$. Then the chordal distance on the Riemann sphere is given by 
\begin{equation*}
\| x-y \| = \frac{ 2|z-w| }{ \sqrt{ (1+|z|^2)(1+|w|^2) } }.
\end{equation*} 
On the one hand, since 
\begin{equation*}
dA_{ \mathbb S }( x) = \frac{ 
dA(z) }{ (1+|z|^2)^2 },
\end{equation*} 
we have
\begin{align}
\begin{split}
\prod_{j>k=1}^N \| x_j-x_k \|^{2} \prod_{j=1}^{N}  e^{-N Q_{ \mathbb{S} }(x_j) }  \,dA_{ \mathbb S }(x_j)
= 2^{N(N-1)}\prod_{j>k=1}^N |z_j-z_k|^{2} \prod_{j=1}^{N}  e^{-N Q(z_j) }  \,dA(z_j),
\end{split}
\end{align}
and 
\begin{align}
\begin{split}
&\quad \prod_{j>k=1}^N \| x_j-x_k \|^{2} \| x_j-\overline{x}_k \|^{2} \prod_{j=1}^{N} \|x_j-\overline{x}_j\|^2  e^{-2N Q_{ \mathbb{S} }(x_j) }  \,dA_{ \mathbb S }(x_j)
\\
&= 2^{2N^2}\prod_{j>k=1}^N |z_j-z_k|^{2} |z_j-\bar{z}_k|^2 \prod_{j=1}^{N}  |z_j-\bar{z}_j|^2 e^{-N Q(z_j) }  \,dA(z_j).
\end{split}
\end{align}
This completes the proof. 
\end{proof}

\subsection{Partition functions and planar (skew)-orthogonal polynomials} \label{Subsec_ZN and OPs}

We now discuss the integrable structures of the partition functions.
The statistical properties of \eqref{Gibbs cplx} and \eqref{Gibbs symplectic} are effectively analysed using their integrable structures and the associated planar (skew)-orthogonal polynomials, see e.g. \cite{BF24}. 
For a given potential $Q$, let $(p_k)_{ k \in \mathbb{Z} }$ be a family of monic orthogonal polynomial:
\begin{equation} \label{OP norm}
\int_\C p_j(z) \overline{p_k(z)} e^{-N Q(z)}\,dA(z) = h_k \,\delta_{j,k},
\end{equation}
where $h_k$ is the squared norm and $\delta$ is the Kronecker delta. We refer to \cite{HW21,LY23,KKL24} and references therein for recent progress on planar orthogonal polynomials.   
On the other hand, a family of polynomials $(q_k)_{k\in\mathbb{Z}}$ is called planar skew-orthogonal polynomials \cite{Fo13a,Kan02} if it satisfies  
 \begin{equation}
        \langle q_{2k},q_{2\ell} \rangle_s=\langle q_{2k+1},q_{2\ell+1} \rangle_s=0,\qquad 
        \langle q_{2k},q_{2\ell+1} \rangle_s=-\langle q_{2\ell+1},q_{2k} \rangle_s=r_k \, \delta_{k,\ell},
        \label{SQPdef}
    \end{equation}
where 
\begin{equation}
\label{inner_product}
\langle f, g\rangle_s : =\int_{\mathbb{C}} \Big( f(z)\overline{g(z)}-g(z)\overline{f(z)} \Big)(z-\overline{z}) \, e^{-2NQ(z)}\,dA(z)
\end{equation}
and $r_k$ is their skew-norm. 
Then, it follows from Andréief's and de Bruijn’s integration formulas that the partition functions can be expressed as
\begin{equation}
Z_N^\C(Q) = N! \prod_{j=0}^{N-1} h_j, \qquad Z_N^{\mathbb{H}}(Q) = N! \prod_{j=0}^{N-1} r_j.
\end{equation}

For the radially symmetric potential $Q$, it is obvious that the associated orthogonal polynomial is monomials. Therefore, the squared norm is given by $h_{j,N}$, where  
\begin{equation} \label{hj hj tilde def}
h_{j,m}:= \int_\C |z|^{2j}\,e^{-m Q(z)}\,dA(z).
\end{equation}
In general, there is no comprehensive theory for constructing skew-orthogonal polynomials. Nonetheless, a recent work \cite{AEP22} introduced a method to construct them under the assumption that the associated orthogonal polynomial satisfies the three-term recurrence relation, see also a recent work \cite{ABN24} providing an alternative and more general construction. In particular, for a radially symmetric potential $Q$, it follows from \cite[Corollary 4.3]{AEP22} that
\begin{equation} \label{SOP radially symmetric}
q_{2k+1}(z)=z^{2k+1},\qquad q_{2k}(z)=z^{2k}+\sum_{\ell=0}^{k-1} z^{2\ell} \prod_{j=0}^{k-\ell-1} \frac{h_{2\ell+2j+2}}{h_{2\ell+2j+1}}, \qquad r_k=2\,h_{2k+1,2N}. 
\end{equation}
Combining the above, for a radially symmetric $Q$, we have the expression 
\begin{equation} \label{ZN random normal symplectic}
\log Z_N^\C= \log N!+ \sum_{j=0}^{N-1} \log h_{j,N}, \qquad \log Z_N^\H= \log N!+ \sum_{j=0}^{N-1} \log (2h_{2j+1,2N}). 
\end{equation}
The rest of this paper is devoted to analyse the summations in \eqref{ZN random normal symplectic}.

In order to obtain the asymptotic expansions of the orthogonal norms $h_{j,N}$ (resp., $h_{j,2N}$), we distinguish the following three cases. Let $m_1$ and $m_2$ be the large numbers with $m_1$, $m_2 \in \Theta(N^{\epsilon})$ with some $\epsilon>0$.   
\begin{itemize}
    \item Case 1: $m_1 \leq j <N-m_2$ (resp., $2m_1 \leq j < 2(N-m_2)$). Applying the Laplace's method, we obtain the asymptotic expansions of the integral $h_{j,N}$ which mainly contribute the large $N$ expansion of the partition function.
    \smallskip 
    \item Case 2: $0\leq j < m_1$ (resp., $0\leq j < 2m_1$). In this case, the asymptotic expansion contains the gamma function $\Gamma(c+1)$ where $c$ is the point charge at the origin (or the south pole). 
    \smallskip 
    \item Case 3: $N-m_2 \leq j <N$ (resp., $2(N-m_2)\leq j <2N$). Symmetrically to Case 2, the asymptotic expansion contains $\Gamma(\alpha+1)$ where $\alpha$ is the point charge at infinity (or the north pole). 
\end{itemize}
Asymptotic behaviours of the orthogonal norms in each case are given in Lemmas~\ref{lem:logh2}, \ref{lem:logh1}, and \ref{lem:logh3}, respectively.

In addition to the asymptotic behaviours in each case, to analyse the summations
\begin{equation} \label{summations division}
\sum_{j=m_1}^{N-m_2-1} \log h_{j,N}, \qquad \sum_{j=0}^{m_1-1} \log h_{j,N}, \qquad \sum_{j=N-m_2}^{N-1} \log h_{j,N},
\end{equation}
respectively, we will make use of the Euler-Maclaurin formula (see e.g. \cite[Section 2.19]{NIST})
\begin{align}\label{EMfor}
    \sum_{j=m}^n f(j) = \int_m^n f(x)\,dx +\frac{f(m)+f(n)}{2} + \sum_{k=1}^{l-1} \frac{B_{2k}}{(2k)!}\big(f^{(2k-1)}(n) - f^{(2k-1)}(m)\big)+R_l,
\end{align}
where $B_k$ is the Bernoulli number defined by the generating function
$$
\frac{t}{e^t-1} = \sum_{k=0}^\infty B_k \frac{t^k}{k!}, \qquad (|t| < 2\pi). 
$$
Here, the error term $R_l$ satisfies 
\begin{equation*}
    |R_l| \leq C_l \int_m^n |f^{(2l)}(x)|\,dx
\end{equation*}
for some constant $C_l>0$. These are given in Lemmas~\ref{lem:sum2}, ~\ref{lem:sum1} and \ref{lem:sum3}, respectively. 

The asymptotic behaviours in each regime can be interpreted from a geometric viewpoint. In particular, from the perspective of the \(\tau\)-droplet, the droplet associated with the potential $\tau Q$ where \(\tau \in [0,1]\), the index \(j\), ranging from 0 to \(N-1\), corresponds to the progression from the south pole to the north pole. This interpretation provides a natural intuition for the appearance of the gamma function behaviours when \(j < m_1\) or \(j \ge N - m_2\), with dependence on the point charges at the north and south poles, respectively.
An illustration of the asymptotic behaviours in each regime is given in Figure~\ref{Fig_sum division}.

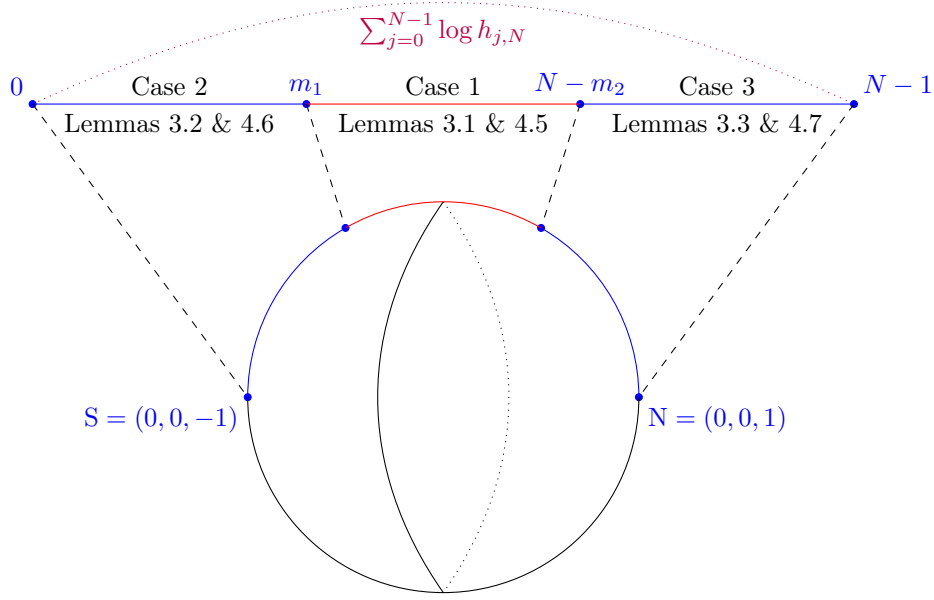
\begin{figure}[t]
\begin{center}
\begin{tikzpicture}[scale = 2.6]
\draw[purple, dotted] (0,0) to [out=25,in=155] node[below] {$\sum_{j=0}^{N-1} \log h_{j,N} $} (4.2,0); 


\filldraw[blue] (0,0) circle (0.5pt) node[above left]{$0$};
\filldraw[blue] (1.4,0) circle (0.5pt) node[above]{$m_1$}; 
\filldraw[blue] (2.8,0) circle (0.5pt) node[above]{$N-m_2$};
\filldraw[blue] (4.2,0) circle (0.5pt) node[above right]{$N-1$};

\draw[blue] (0,0) -- ++(1.4,0)  node[midway,below] {\textcolor{black}{Lemmas~\ref{lem:logh1} \& \ref{lem:sum2}}} node[midway,above] {\textcolor{black}{Case 2}};
\draw[red] (1.4,0) -- ++(1.4,0)  node[midway,below] {\textcolor{black}{Lemmas~\ref{lem:logh2} \& \ref{lem:sum1}}} node[midway,above] {\textcolor{black}{Case 1}};
\draw[blue] (2.8,0) -- ++(1.4,0)  node[midway,below] {\textcolor{black}{Lemmas~\ref{lem:logh3} \& \ref{lem:sum3}}} node[midway,above] {\textcolor{black}{Case 3}};  

\draw[dashed] (0,0) to (1.1,-1.5);  
\draw[dashed] (4.2,0) to (3.1,-1.5);  
\draw[dashed] (1.4,0) to (1.6,-0.633975);  
\draw[dashed] (2.8,0) to (2.6,-0.633975);  

\draw[dotted] (2.1,-0.5) to [out=-55,in=55] (2.1,-2.5);
\draw (2.1,-0.5) to [out=235,in=-235] (2.1,-2.5);

\filldraw[blue] (1.1,-1.5) circle (0.5pt) node[below left]{$\rm S=(0,0,-1)$};
\filldraw[blue] (3.1,-1.5) circle (0.5pt) node[below right]{$\rm N = (0,0,1)$};
\filldraw[blue] (1.6,-0.633975) circle (0.5pt);
\filldraw[blue] (2.6,-0.633975) circle (0.5pt);

\draw (1.1,-1.5) arc[start angle=180, end angle=360, radius=1];
\draw[blue]  (1.6,-0.633975) arc[start angle=120, end angle=180, radius=1];
\draw[blue]  (3.1,-1.5) arc[start angle=0, end angle=60, radius=1];
\draw[red]  (2.6,-0.633975) arc[start angle=60, end angle=120, radius=1];
\end{tikzpicture}
\end{center}
    \caption{Illustration of the different asymptotic regimes of orthogonal norms and their associated lemmas, along with a geometric interpretation of these regimes on spherical geometry.}
    \label{Fig_sum division}
\end{figure}

\subsection{Logarithmic potential and critical points}

In this subsection, we set up some notations that will be used in the asymptotic behaviours of the orthogonal norms.

We first define a function 
\begin{equation}\label{def:F}
    F(t) =  \int_{t}^{\infty} 2r \rho(r)\,dr = 1-\mu(D(0,t)),\qquad t\in [0,\infty).
\end{equation}
It follows from \eqref{rho:growth} that
\begin{equation}
    F(t) \sim \frac{A}{t^2},\qquad t\to \infty. 
\end{equation}
We express the potential $U_\mu$ in \eqref{def of logarithmic potential} in terms of the function $F$ in \eqref{def:F}. 

\begin{lem}\label{lem:U} 
For each $z \in \C \setminus \{0\}$, we have 
\begin{equation} \label{U mu in terms of int F}
    U_{\mu}(z) = -\log|z| - \int_{|z|}^{\infty} \frac{1}{r}F(r) \,dr. 
\end{equation}
\end{lem}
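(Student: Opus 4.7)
The natural approach is to exploit the radial symmetry of $\mu$ to reduce the two-dimensional potential to a one-dimensional integral, then integrate by parts using the relation $F'(r)=-2r\rho(r)$.

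First I would invoke the classical circular averaging identity
\[
\int_0^{2\pi}\log|z-r e^{i\theta}|\,\frac{d\theta}{2\pi}=\max(\log|z|,\log r),
\]
which, combined with $d\mu(w)=\rho(|w|)\,dA(w)$ and polar coordinates, gives
\[
U_\mu(z)=-\int_0^\infty \max(\log|z|,\log r)\,2r\rho(r)\,dr.
\]
Splitting the integration domain at $r=|z|$ and using $\mu(D(0,|z|))=1-F(|z|)$, the inner piece contributes $-\log|z|\,(1-F(|z|))$.

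For the outer piece $-\int_{|z|}^\infty \log r\cdot 2r\rho(r)\,dr$, I would integrate by parts with $u=\log r$ and $dv=2r\rho(r)\,dr=-dF(r)$, producing
\[
-\int_{|z|}^\infty \log r\cdot 2r\rho(r)\,dr=-F(|z|)\log|z|-\int_{|z|}^\infty \frac{F(r)}{r}\,dr,
\]
once I verify that the boundary term at infinity vanishes. This follows from the decay $F(r)\sim A/r^2$ ensured by Definition~\ref{Def_measure mu}, which makes $F(r)\log r\to 0$ as $r\to\infty$ and simultaneously guarantees the integrability of $F(r)/r$ near infinity.

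Adding the two pieces, the $\pm F(|z|)\log|z|$ contributions cancel and the formula \eqref{U mu in terms of int F} drops out. There is no serious obstacle here; the only minor point requiring care is the justification that the integration-by-parts boundary term at infinity vanishes, which is exactly where the growth hypothesis \eqref{rho:growth} is used.
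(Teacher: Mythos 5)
Your proposal is correct and follows essentially the same route as the paper's proof: the circular averaging (Jensen) identity, polar coordinates, splitting at $r=|z|$, and an integration by parts in which the boundary term at infinity vanishes because $F(r)\log r\to 0$, a consequence of the decay \eqref{rho:growth}. No gaps.
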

\begin{proof}
Recall the well-known Jensen's formula: for $r>0$, 
\begin{equation} \label{Jensen}
\frac{1}{2\pi} \int_0 ^{2\pi} \log |z-r e^{i\theta}| \,d\theta
=
\begin{cases}
\log r &\text{if }r>|z|,
\smallskip 
\\
\log|z| &\text{otherwise}.
\end{cases}
\end{equation}	
By using this, the definition \eqref{def of logarithmic potential}, and integration by parts, we have 
\begin{align*}
    U_{\mu}(z) &=  \frac{1}{\pi}\int_{0}^\infty \bigg( \int_0^{2\pi} \log \frac{1}{|z-re^{i\theta}|} \, d\theta  \bigg) \rho(r)r\,dr
= -2\log|z| \int_{0}^{|z|}  \rho(r) r\, dr - 2\int_{|z|}^{\infty} r \rho(r) \log r \,dr \\
&= -\big(1-F(|z|)\big)\log|z| + \Big[ F(r)\log r \Big]_{r=|z|}^{\infty}- \int_{|z|}^{\infty} \frac{1}{r}F(r) \,dr.  
\end{align*}   
Then the desired identity follows from the fact that $F(r)\log r \to 0$ as $r\to \infty$. 
\end{proof}

For $r>0$, we define the function  
\begin{equation} \label{def of V_j}
    V_{j}(r) := \frac{N}{\totalN}\Big( Q_{\alpha,c}(r) - \frac{2j+1}{N}\log r\Big)=-2U_{\mu}(r) - \frac{2j+2c+1}{n}\log r, 
\end{equation}
where $Q_{\alpha,c}$ is given by \eqref{def of external potential}. 
It is convenient to use $V_j$ to analyse the orthogonal norm since 
\begin{equation} \label{hj in terms of Vj}
    h_{j,N} = \int_{\C} |z|^{2j} e^{-N Q_{\alpha,c}(z)}\,dA(z) = \int_0^{\infty} 2e^{-\totalN V_j(r)}\,dr.
\end{equation}

For each $\tau$ with $0< \tau < 1$, let $r_\tau$ be the number satisfying 
\begin{equation}\label{def:rtau}
    F(r_\tau)=1-\tau.
\end{equation}
Since $F$ is strictly decreasing in the range between $0$ and $1$, $r_0=0$, $r_1=\infty$, and $r_\tau$ is uniquely determined for each $0<\tau<\infty$.
For each $\alpha, c\ge 0$, we write 
\begin{equation} \label{eq: tau}
   \tau_c(j)= \frac{2(j+c)+1}{2n}, \qquad  \tau_\alpha(j) = \frac{2(j+\alpha)+1}{2n}. 
\end{equation}
Then it follows from the numerology $n=N+\alpha+c+1$ that $\tau_c$ and $\tau_\alpha$ have the relation
\begin{equation}\label{rel:tau}
\tau_c(j) + \tau_\alpha(N-j) = 1.  
\end{equation}

On the other hand, by Lemma~\ref{lem:U}, one can rewrite $V_j$ as 
\begin{equation}\label{def:Vj}
    V_j(r) = 2\Big(\log r + \int_{r}^{\infty} \frac{1}{t}F(t) \,dt - \tau_c(j)\log r \Big).
\end{equation}
This gives 
\begin{equation}\label{dVj}
    V_j'(r) = \frac{2}{r} (1-F(r) - \tau_c(j)).
\end{equation}
Observe that there exists a unique critical point $t_j$ of $V_j$ such that $V_j'(t_j)=0$, and 
\begin{equation} \label{eq: t_j}
    F(t_j)=1-\tau_c(j).
\end{equation}
This implies that 
\begin{equation} \label{def of tj rtau}
t_j = r_{\tau_c(j)}, 
\end{equation} 
where $r_\tau$ is defined in \eqref{def:rtau}.
Differentiating \eqref{dVj}, we have
\begin{align}
\begin{split} \label{differ of Vj 234}
    V_j''(t_j) &= \Big(- \frac{1}{r}  V_j'(r) + 4\rho(r) \Big)\Big|_{r=t_j}=  4\rho(t_j),
    \\
    V_j^{(3)}(t_j) &= \Big(\frac{2}{r^2}V_j'(r) - \frac{4}{r}\rho(r) + 4\rho'(r)\Big)\Big|_{r=t_j} = -\frac{4}{t_j}\rho(t_j) + 4\rho'(t_j),
    \\
     V_j^{(4)}(t_j) &= \Big(-\frac{6}{r^3}V_j'(r) + \frac{12}{r^2}\rho(r) - \frac{4}{r}\rho'(r) + 4\rho''(r)\Big)\Big|_{r=t_j} = \frac{12}{t_j^2}\rho(t_j) - \frac{4}{t_j}\rho'(t_j) + 4\rho''(t_j). 
\end{split}
    \end{align}


The following lemma describes the asymptotic behaviour of the critical point.
\begin{lem}\label{lem:tj}
For small $j\ll N$, the critical point $t_j$ satisfies
\begin{equation}\label{asym:tj}
    t_j = \Big(\frac{\tau_c(j)}{\rho(0)}\Big)^{\frac{1}{2}} + O(\frac{1}{N}),\qquad N\to \infty.
\end{equation}
On the other hand, for $j = N-k$ with small $k\ll N$, it satisfies 
\begin{equation}\label{asym:tk}
    t_j^{-1} = \Big(\frac{\tau_\alpha(k)}{\tilde{\rho}(0)}\Big)^{\frac{1}{2}}  + O(\frac{1}{N}),\qquad N\to \infty.
\end{equation}
\end{lem}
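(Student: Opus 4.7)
The plan is to derive both asymptotics by directly Taylor-expanding the defining equation \eqref{eq: t_j} at the appropriate endpoint and inverting to leading order. For the first assertion, I would expand
\begin{equation*}
1 - F(t) = \int_0^t 2r\,\rho(r)\,dr = \rho(0)\,t^2 + \tfrac{2}{3}\rho'(0)\,t^3 + O(t^4), \qquad t \to 0^+,
\end{equation*}
which relies only on the smoothness of $\rho$ guaranteed by Definition~\ref{Def_measure mu}. Since $\tau_c(j)=O(N^{-1})$ for small $j$, the equation $F(t_j) = 1 - \tau_c(j)$ rewrites as $\rho(0)\,t_j^2 + O(t_j^3) = \tau_c(j)$ and forces $t_j \to 0$. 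Writing $t_j = (\tau_c(j)/\rho(0))^{1/2} + \delta$ and substituting, the linear term in $\delta$ yields $\delta = O(t_j^2) = O(N^{-1})$, which is exactly \eqref{asym:tj}.

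For the second assertion, I would reduce it to the first via the change of variables $r = 1/s$ in \eqref{def:F}. Using $\rho(1/s) = s^4 \tilde{\rho}(s)$ from \eqref{def of rho tilde}, a direct substitution produces the symmetric identity
\begin{equation*}
F(t) = \int_0^{1/t} 2s\,\tilde{\rho}(s)\,ds, \qquad t > 0.
\end{equation*}
Setting $j = N-k$ and invoking \eqref{rel:tau}, the equation $F(t_j) = 1 - \tau_c(j)$ then becomes $\int_0^{1/t_j} 2s\,\tilde{\rho}(s)\,ds = \tau_\alpha(k)$, which is formally identical to the equation solved in the first step, but with $t_j$ replaced by $t_j^{-1}$, $\rho$ replaced by $\tilde{\rho}$, and $\tau_c(j)$ replaced by $\tau_\alpha(k)$. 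Since $\tilde{\rho}$ extends smoothly to $s=0$ with $\tilde{\rho}(0) = A > 0$, the same Taylor-inversion argument yields \eqref{asym:tk}.

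The argument is essentially a routine implicit-function inversion, so no step is genuinely hard. The only point requiring a little care is the reduction for large $t_j$ via the $s=1/r$ substitution, which relies exactly on the structural assumption in Definition~\ref{Def_measure mu} that the pull-back of the density to the Riemann sphere remains smooth and non-vanishing at the north pole; this is what makes the roles of the south and north poles perfectly symmetric at the level of the asymptotic expansion of $t_j$.
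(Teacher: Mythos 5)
Your proposal is correct and follows essentially the same route as the paper: Taylor-expand $1-F(t)=\int_0^t 2r\rho(r)\,dr$ at $t=0$ for the first claim, and use the substitution $s=1/r$ together with $\rho(s^{-1})=s^4\tilde{\rho}(s)$ and the relation \eqref{rel:tau} to reduce the second claim to the first. Your explicit perturbation step $t_j=(\tau_c(j)/\rho(0))^{1/2}+\delta$ with $\delta=O(t_j^2)$ in fact yields the error $O(\tau_c(j))$, which coincides with $O(1/N)$ only for bounded $j$ — but this is exactly the same (harmless) imprecision present in the paper's own statement and its subsequent use.
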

 
\begin{proof}
    For small $t>0$, we have 
    \begin{align*}
        1-F(t) = \int_{0}^{t} 2r\rho(r)\,dr = \rho(0)t^2 + O(t^3), \qquad t\to 0.
    \end{align*}
Combining this with \eqref{eq: t_j}, we have the asymptotic expansion  \eqref{asym:tj} for the critical point $t_j$. 
   On the other hand, for large $t$ near the infinity, by using the inversion, we obtain 
    \begin{align*}
        F(t) = \int_{t}^{\infty} 2r\rho(r)\,dr = \int_{0}^{t^{-1}} 2s^{-3} \rho(s^{-1})\,ds = \tilde{\rho}(0)t^{-2} +O(t^{-3}),\qquad t\to \infty.
    \end{align*}
   Here, we have used $\rho(s) 
    = \tilde{\rho}(0)s^{-4}+O(s^{-5})$ as $s\to\infty$, cf. \eqref{rho:growth}. Combining with the relation \eqref{rel:tau}, this gives the desired result.
\end{proof}

For the symplectic counterpart, we also define 
\begin{equation} \label{eq: symplectic V}
\widetilde{V}_{j}(r):= -2U_{\mu}(r) - \frac{2j+4c+1}{2n}\log r 
\end{equation}
so that 
\begin{equation*}
    h_{j,2N} = \int_{\C} |z|^{2j} e^{-2N Q_{\alpha,c}(z)}\,dA(z) = \int_{0}^{\infty} 2e^{-2n \widetilde{V}_j(r)}\,dr. 
\end{equation*}
The unique critical point $\tilde{t}_j$ of $\widetilde{V}_j$ then satisfies
\begin{equation} \label{eq: symplectic critical point}
F(\tilde{t}_j)=1-\tilde{\tau}_c(j),\qquad \tilde{\tau}_{c}(j) = \frac{2j+4c+1}{4n}. 
\end{equation}
Here, we note that the following relation holds:
\begin{equation}
    \tilde{\tau}_{c}(2(N-j)+1) = \frac{4(N-j)+4c+3}{4n} = 1 - \frac{4j+4\alpha+1}{4n} = 1 - \tilde{\tau}_{\alpha}(2j).
\end{equation}
By applying the same argument as in  Lemma \ref{lem:tj}, we obtain the following asymptotic expansion for $\tilde{t}_j$. 
\begin{lem}\label{lem:tildetj}
For small $j\ll 2N$, the critical point $\tilde{t}_j$ satisfies
\begin{equation}\label{asym:tj tilde}
    \tilde{t}_j = \Big(\frac{\tilde{\tau}_c(j)}{\rho(0)}\Big)^{\frac{1}{2}} + O(\tilde{\tau}_c(j)),\quad N\to \infty.
\end{equation}
On the other hand, for $j = 2N-k$ with small $k\ll 2N$, it satisfies 
\begin{equation}\label{asym:tk tilde}
    \tilde{t}_j^{-1} = \Big(\frac{\tilde{\tau}_\alpha(k+1)}{\tilde{\rho}(0)}\Big)^{\frac{1}{2}} + O(\tilde{\tau}_{\alpha}(k+1)),\quad N\to \infty.
\end{equation}
\end{lem}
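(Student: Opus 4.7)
The plan is to transcribe the argument of Lemma~\ref{lem:tj} to the symplectic setting, the only new input being the bookkeeping for the parameters $\tilde{\tau}_c(j)$ and $\tilde{\tau}_\alpha(k)$ in place of $\tau_c(j)$ and $\tau_\alpha(j)$. Recall from \eqref{eq: symplectic critical point} that $\tilde{t}_j$ is the unique positive solution of $F(\tilde{t}_j)=1-\tilde{\tau}_c(j)$, and that $F$ is strictly decreasing from $1$ to $0$ on $(0,\infty)$, so the inverse problem is well-posed.

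First, for the small-$j$ regime, I would write
\begin{equation*}
1-F(t) = \int_0^t 2r\rho(r)\,dr = \rho(0)\,t^2 + O(t^3),\qquad t\to 0,
\end{equation*}
which follows from the smoothness of $\rho$ at $0$. Since $\tilde{\tau}_c(j)=(2j+4c+1)/(4n)=O(j/N+1/N)$ is small when $j\ll 2N$, the defining relation $1-F(\tilde{t}_j)=\tilde{\tau}_c(j)$ forces $\tilde{t}_j=o(1)$. Substituting the expansion above and inverting by the standard iteration (or the implicit function theorem applied to $t^2=\tilde{\tau}_c(j)/\rho(0)+O(t^3)$) yields
\begin{equation*}
\tilde{t}_j = \Bigl(\frac{\tilde{\tau}_c(j)}{\rho(0)}\Bigr)^{1/2} + O(\tilde{\tau}_c(j)),\qquad N\to\infty,
\end{equation*}
which is \eqref{asym:tj tilde}.

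For the case $j=2N-k$ with $k\ll 2N$, I first use the numerology $n=N+\alpha+c+1$ to check that
\begin{equation*}
1-\tilde{\tau}_c(2N-k) = \frac{2(k+1)+4\alpha+1}{4n} = \tilde{\tau}_\alpha(k+1),
\end{equation*}
so the defining relation becomes $F(\tilde{t}_j)=\tilde{\tau}_\alpha(k+1)$, which is small. To exploit the decay of $F$ at infinity, I apply the substitution $s=r^{-1}$ together with \eqref{def of rho tilde} to obtain
\begin{equation*}
F(t) = \int_t^\infty 2r\rho(r)\,dr = \int_0^{1/t} 2s^{-3}\rho(s^{-1})\,ds = \int_0^{1/t} 2s\,\tilde{\rho}(s)\,ds = \tilde{\rho}(0)\,t^{-2}+O(t^{-3}),\qquad t\to\infty,
\end{equation*}
using smoothness of $\tilde\rho$ at $0$. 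Setting this equal to $\tilde{\tau}_\alpha(k+1)$ and inverting for $\tilde{t}_j^{-1}$ gives \eqref{asym:tk tilde}.

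There is no real obstacle; both steps are direct applications of the Taylor-invert argument used in Lemma~\ref{lem:tj}. The only item requiring care is the algebraic identity $1-\tilde{\tau}_c(2N-k)=\tilde{\tau}_\alpha(k+1)$, which is the symplectic analogue of \eqref{rel:tau} and explains the shift $k\mapsto k+1$ in the statement.
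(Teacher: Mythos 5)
Your proposal is correct and follows exactly the route the paper intends: the paper gives no separate proof of Lemma~\ref{lem:tildetj}, stating only that it follows "by applying the same argument as in Lemma~\ref{lem:tj}", which is precisely the Taylor-and-invert argument you carry out, including the key bookkeeping identity $1-\tilde{\tau}_c(2N-k)=\tilde{\tau}_\alpha(k+1)$ that accounts for the shift $k\mapsto k+1$.
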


\section{Asymptotic analysis of (skew)-orthogonal norms} \label{Section_asymptotic norm}

In this section, we derive the asymptotic behaviours of the (skew)-orthogonal norms for Cases 1, 2, and 3, as described in Subsection~\ref{Subsec_ZN and OPs}.

\subsection{Asymptotics of orthogonal norms: Case 1}

For the formulation of the asymptotic expansion of the orthogonal norm, we define the functional 
\begin{equation}    \mathfrak{B}_1(r):= -\frac{1}{24}\frac{1}{r^2\rho(r)} - \frac{7}{96}\frac{\rho'(r)}{r\rho(r)^2} - \frac{1}{32}\frac{\rho''(r)}{\rho(r)^2} + \frac{5}{96}\frac{(\rho'(r))^2}{\rho(r)^3}.
\end{equation}

\begin{lem} \label{lem:logh2} As $N \to \infty$,we have the following. 
\begin{itemize}
    \item For $j$ with $m_1\leq j < N-m_2$, we have 
\begin{equation}
    \log h_{j,N} = -nV_j(t_j) + \frac{1}{2}\left(\log 2\pi - \log n - \log \rho(t_j)\right) + \frac{1}{n}\mathfrak{B}_1(t_j)+\varepsilon_{N,1}(j),
\end{equation}
where $\varepsilon_{N,1}(j)=O(j^{-2}(\log N)^d) + O((N-j)^{-2}(\log N)^{d})$ for some $d>0$.
\smallskip 
\item  For $j$ with $2m_1 \leq j <  2N-2m_2$, we have 
    \begin{equation}
        \log h_{j,2N} = -2n \widetilde{V}_j(\tilde{t}_j) + \frac{1}{2}\big(\log(2\pi) - \log(2n) - \log \rho(\tilde{t}_j) \big) + \frac{1}{2n}\mathfrak{B}_1(\tilde{t}_j)+\tilde{\varepsilon}_{N,1}(j),
    \end{equation}
    where $\tilde{\varepsilon}_{N,1}(j)=O(j^{-2}(\log N)^d) + O((2N-j)^{-2}(\log N)^{d})$ for some $d>0$.
\end{itemize}
\end{lem}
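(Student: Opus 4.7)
The plan is to apply the standard Laplace method to the integral $h_{j,N} = \int_{0}^{\infty} 2\,e^{-nV_j(r)}\,dr$ from \eqref{hj in terms of Vj}. By \eqref{dVj}, the critical point $t_j$ is the unique interior minimum of $V_j$, and by \eqref{differ of Vj 234} we have $V_j''(t_j) = 4\rho(t_j) > 0$. First I would confirm that $V_j$ grows at both endpoints of $[0,\infty)$: from \eqref{def:Vj} one reads off $V_j(r)\sim -2\tau_c(j)\log r$ as $r\to 0^+$ and $V_j(r)\sim 2(1-\tau_c(j))\log r$ as $r\to \infty$, both tending to $+\infty$ since $\tau_c(j)\in(0,1)$ strictly in the bulk regime. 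Hence the integral is concentrated near $t_j$.

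Next I would localize to $I_j := \{r : |r-t_j|\le \delta_j\}$ with $\delta_j = C(\log N)^{1/2}/\sqrt{n\,\rho(t_j)}$; using the convexity of $V_j$ near $t_j$ and the endpoint logarithmic growth, the complement contributes $O(N^{-K})$ for arbitrarily large $K$ once $C$ is large enough. On $I_j$, the substitution $u=\sqrt{n}(r-t_j)$ and Taylor expansion give
\begin{equation*}
nV_j(r)-nV_j(t_j) = \tfrac{1}{2}V_j''(t_j)\,u^2 + \tfrac{V_j^{(3)}(t_j)}{6\sqrt n}\,u^3 + \tfrac{V_j^{(4)}(t_j)}{24\,n}\,u^4 + O\!\left(\tfrac{|u|^5}{n^{3/2}}\right).
\end{equation*}
Expanding the non-Gaussian exponential as a power series in $1/\sqrt{n}$ and integrating term by term against $e^{-V_j''(t_j)u^2/2}$ using the Gaussian moments $\int u^{2k}e^{-au^2/2}\,du = (2k-1)!!\sqrt{2\pi/a}\,a^{-k}$ yields
\begin{equation*}
h_{j,N} = 2\,e^{-nV_j(t_j)}\sqrt{\tfrac{2\pi}{n\,V_j''(t_j)}}\Bigl[\,1 + \tfrac{1}{n}\Bigl(\tfrac{5(V_j^{(3)}(t_j))^2}{24\,V_j''(t_j)^3} - \tfrac{V_j^{(4)}(t_j)}{8\,V_j''(t_j)^2}\Bigr) + O(\mathrm{err})\,\Bigr].
\end{equation*}
Substituting $V_j''(t_j)=4\rho(t_j)$, $V_j^{(3)}(t_j)=4\rho'(t_j)-4\rho(t_j)/t_j$, and $V_j^{(4)}(t_j)=4\rho''(t_j)-4\rho'(t_j)/t_j+12\rho(t_j)/t_j^2$ from \eqref{differ of Vj 234} and combining the resulting four terms recovers precisely $\mathfrak{B}_1(t_j)$ as defined above the lemma. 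Taking logarithms then gives the claimed expansion.

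The main obstacle is to make the error $\varepsilon_{N,1}(j)$ uniform in $j$ over the full range $m_1\le j<N-m_2$, since as $j$ approaches either endpoint $t_j$ migrates toward $0$ or $\infty$ and the coefficients of the expansion degenerate. By Lemma~\ref{lem:tj}, $t_j \asymp \sqrt{j/n}$ for small $j$, so $1/(t_j^2\rho(t_j)) = O(n/j)$; the $\mathfrak{B}_1(t_j)/n$ term is then $O(1/j)$ and pushing the Laplace expansion one order further produces a remainder of size $O((n/j)^2/n^2)=O(j^{-2})$. A symmetric argument using $\tilde t_j^{-1}\asymp\sqrt{(N-j)/n}$ from the second part of Lemma~\ref{lem:tj} handles the regime $j$ close to $N$ and yields the $O((N-j)^{-2})$ contribution; the polylogarithmic factor $(\log N)^d$ is inherited from the localization radius $\delta_j$ and from uniform bounds on $\rho$, $\rho'$, $\rho''$ along the path traced by $t_j$. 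The symplectic part follows by the identical argument applied to $h_{j,2N}=\int_0^\infty 2\,e^{-2n\widetilde V_j(r)}\,dr$, since by \eqref{eq: symplectic V} a direct differentiation gives $\widetilde V_j''(\tilde t_j)=4\rho(\tilde t_j)$ and the higher derivatives take the same form as in \eqref{differ of Vj 234}; one simply replaces $n$ by $2n$ and invokes Lemma~\ref{lem:tildetj} instead of Lemma~\ref{lem:tj}.
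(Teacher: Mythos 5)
Your proposal is correct and follows essentially the same route as the paper: Laplace's method localized at the critical point $t_j$, Taylor expansion to fourth/fifth order, Gaussian moment computations yielding the $\tfrac{5}{24}(V_j^{(3)})^2/(V_j'')^3-\tfrac{1}{8}V_j^{(4)}/(V_j'')^2$ correction that becomes $\mathfrak{B}_1(t_j)/n$ via \eqref{differ of Vj 234}, and a tail estimate for the outer region; the paper likewise tracks the degeneration of $V_j^{(k)}(t_j)$ as $t_j\to 0$ or $t_j\to\infty$ to get the $O(j^{-2})$ and $O((N-j)^{-2})$ error terms. Your only deviation is using a $j$-dependent localization radius proportional to $(n\rho(t_j))^{-1/2}$ instead of the paper's fixed $\delta_N=\log N/\sqrt N$, which is an equally valid (arguably cleaner) bookkeeping choice.
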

\begin{proof}
  In the sequel, it is convenient to write 
    \begin{equation} \label{def of delta N}
    \delta_N =  \log N / \sqrt{N} . 
    \end{equation}
Recall that the orthogonal norm $h_{j,N}$ can be written as \eqref{hj in terms of Vj}.  

We first consider the case when $m_1\leq j\leq aN$ for some constant $0<a<1$ so that the critical point $t_j$ given by \eqref{def of tj rtau} is contained in a compact set. Then, by Lemma \ref{lem:tj}, we have  
$$V_j^{(k)}(t_j) = O(t_j^{-k+2}) = O((N/j)^{k/2-1})$$ 
and the Taylor series expansion at the critical point $t_j$ deduces  
\begin{align*}
    &\int_{|r-t_j|<\delta_N} 2\,e^{-\totalN V_j(r)}\,dr\\ 
    =& \frac{2}{\sqrt{\totalN}} e^{-\totalN V_j(t_j)}\int_{-\sqrt{\totalN}\delta_N}^{\sqrt{\totalN}\delta_N} e^{- \big(\frac{1}{2 }V_j''(t_j)s^2 + \frac{1}{6 \sqrt{\totalN}}V_j^{(3)}(t_j)s^3 + \frac{1}{24\totalN} V_j^{(4)}(t_j)s^4 + \frac{1}{120\totalN  \sqrt{\totalN}}V_j^{(5)}(t_j)s^5 + O(j^{-2}|s|^6) \big)}\,ds \\
    =& \frac{2}{\sqrt{\totalN}} e^{-\totalN V_j(t_j)}\int_{-\sqrt{\totalN}\delta_N}^{\sqrt{\totalN}\delta_N} e^{-\frac{1}{2}V_j''(t_j)s^2}\Big(1
    - \frac{1}{24\totalN}V_j^{(4)}(t_j)s^4
    + \frac{1}{72\totalN}(V_j^{(3)}(t_j))^2 s^6   + O(j^{-2}(\log N)^8) \Big)\,ds.
\end{align*}
Computing the Gaussian integrals, we have 
\begin{equation*}
    \begin{split}
        &\int_{|r-t_j|<\delta_N} 2\,e^{-\totalN V_j(r)}\,dr \\
    =& \frac{2\sqrt{2\pi}}{\sqrt{\totalN V_j''(t_j)}} e^{-\totalN V_j(t_j)}\Big( 1 - \frac{1}{8\totalN}\frac{V_j^{(4)}(t_j)}{(V_j''(t_j))^{2}} + \frac{5}{24\totalN}\frac{(V_j^{(3)}(t_j))^2}{(V_j''(t_j))^{3}} + \epsilon_{N,1}(j) \Big)\\
    =& \frac{\sqrt{2\pi}}{\sqrt{\totalN \rho(t_j)}} e^{-\totalN V_j(t_j)}\Big(1 + \frac{1}{\totalN}\Big(-\frac{1}{24}\frac{1}{\rho(t_j)t_j^2} - \frac{7}{96}\frac{\rho'(t_j)}{\rho(t_j)^2 t_j} - \frac{1}{32}\frac{\rho''(t_j)}{\rho(t_j)^2} +\frac{5}{96}\frac{(\rho'(t_j))^2}{\rho(t_j)^3}\Big) + \epsilon_{N,1}(j) \Big),
    \end{split}
\end{equation*}
    where $\epsilon_{N,1}(j)=O(j^{-2}(\log N)^d)$ for some $d>0$ and the $O$-terms are uniform for all $m_1\leq j \leq aN$. 
    
On the other hand, it follows from Lemma \ref{lem:tj} that for $aN\leq j < N-m_2$, 
\begin{equation*}\label{eqn:Vktj}
V_j^{(k)}(t_j) = O(t_j^{-k-2})=O(((N-j)/N)^{k/2+1}).    
\end{equation*}
By using calculations similar to those in the case $m_1\leq j\leq aN$, we have
\begin{equation*}
    \begin{split}
        &\int_{|r-t_j|<\delta_N} 2\,e^{-\totalN V_j(r)}\,dr \\
    =& \frac{\sqrt{2\pi}}{\sqrt{\totalN \rho(t_j)}} e^{-\totalN V_j(t_j)}\Big(1 + \frac{1}{\totalN}\Big(-\frac{1}{24}\frac{1}{\rho(t_j)t_j^2} - \frac{7}{96}\frac{\rho'(t_j)}{\rho(t_j)^2 t_j} - \frac{1}{32}\frac{\rho''(t_j)}{\rho(t_j)^2} +\frac{5}{96}\frac{(\rho'(t_j))^2}{\rho(t_j)^3}\Big) + \epsilon'_{N,1}(j) \Big),
    \end{split}
\end{equation*}
    where $\epsilon'_{N,1}(j)=O((N-j)^{-2}(\log N)^d)$ for some $d>0$ and the $O$-terms are uniform for all $aN\leq j < N-m_2$. 

Next, we show that the integral over the outer region $|r-t_j|>\delta_N$ is negligible. For this, notice that there exists $c>0$ such that for $r$ with $|r-t_j|>\delta_N$, the estimate $$|V_j(r)-V_j(t_j)|\ge c V_j''(t_j)\delta_N^2$$
holds. Then it follows that  
\begin{equation*}
    \begin{split}
        \int_{|r-t_j|>\delta_N} e^{-nV_j(r)}\,dr = e^{-n V_{j}(t_j)}\int_{|r-t_j|>\delta_N} e^{-n(V_j(r)-V_j(t_j))}\,dr \leq  e^{-n V_{j}(t_j)} \epsilon_N, 
    \end{split}
\end{equation*}
where $\epsilon_N=O(e^{-c'(\log N)^2})$ for some $c'>0$ and the $O$-constant can be taken uniformly for all $j$. This completes the proof of the first assertion. The second assertion follows from similar computations with minor modifications.
\end{proof}

\subsection{Asymptotics of orthogonal norms: Case 2 and Case 3}

We begin with the following lemma about the asymptotic expansion of orthogonal norms of Case 2. 

\begin{lem}\label{lem:logh1} As $N \to \infty$ we have the following.
\begin{itemize}
    \item For $j$ with $0\leq j < m_1$, we have 
    \begin{equation}
       \log  h_{j,N} =  2\totalN  U_\mu(0) - (j+c+1)\log
    (\totalN\rho(0) ) + 
    \log\Gamma(j+c+1) + \varepsilon_{N,2}(j),
    \end{equation}
where $\varepsilon_{N,2}(j) = O(N^{-\frac{1}{2}}(j+1)^{\frac{3}{2}}(\log N)^{3})$. 
\smallskip 
\item  For $0\leq j < 2m_1$, we have 
    \begin{equation}
        \log h_{j,2N} = 4n U_{\mu}(0) - (j+2c+1)\log (2n\rho(0)) + \log \Gamma(j+2c+1) + \tilde{\varepsilon}_{N,2}(j),
    \end{equation}
    where $\tilde{\varepsilon}_{N,2}(j)=O(N^{-\frac{1}{2}}(j+1)^{\frac{3}{2}}(\log N)^3)$.
\end{itemize}
\end{lem}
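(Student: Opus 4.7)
The plan is to perform a Laplace-type analysis of the one-dimensional integral
\[
h_{j,N} = \int_0^\infty 2\,e^{-nV_j(r)}\,dr
\]
in a regime where, unlike Case 1, the mass of the integrand concentrates near the origin rather than at the bulk critical point $t_j$. The key input is a Taylor expansion of $U_\mu$ about $r=0$: since $\boldsymbol{\rho}$ is smooth and radially symmetric, Lemma~\ref{lem:U} together with $1-F(r) = \int_0^r 2t\rho(t)\,dt = \rho(0)r^2 + O(r^4)$ yields
\[
-2U_\mu(r) = -2U_\mu(0) + \rho(0)\,r^2 + R(r),\qquad R(r) = O(r^4),
\]
smoothly for small $r$. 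Substituting this into \eqref{def:Vj} and rescaling $s = r\sqrt{n\rho(0)}$, the exponent becomes
\[
nV_j(r) = -2nU_\mu(0) + (j+c+\tfrac12)\log(n\rho(0)) + s^2 - (2j+2c+1)\log s + nR\!\left(s/\sqrt{n\rho(0)}\right),
\]
so that the leading integrand is $s^{2j+2c+1}e^{-s^2}$, whose integral over $\R_{\ge 0}$ is exactly $\tfrac{1}{2}\Gamma(j+c+1)$. Collecting the prefactors after the Jacobian change $dr = ds/\sqrt{n\rho(0)}$ produces the three main terms $2nU_\mu(0) - (j+c+1)\log(n\rho(0)) + \log\Gamma(j+c+1)$ of the claim.

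For the error analysis I would split the integral at $r = \delta_N := (\log N)/\sqrt{N}$. In the inner region the quartic remainder satisfies $nR(s/\sqrt{n\rho(0)}) = O(s^4/n)$, and expanding $e^{-nR} = 1 + O(s^4/n)$ under the Gamma integral contributes a relative correction of order $(j+c+1)(j+c+2)/n = O((j+1)^2/N)$; the replacement of $\int_0^\infty$ by $\int_0^{\sqrt{n\rho(0)}\,\delta_N}$ introduces only a sub-exponentially small truncation of the incomplete Gamma function. For the outer region, Lemma~\ref{lem:tj} guarantees $t_j \ll \delta_N$ when $j < m_1$, so convexity of $V_j$ near the origin combined with the far-field growth $-2U_\mu(r) \sim 2\log r$ from \eqref{U mu growth} gives $V_j(r)-V_j(t_j) \gtrsim (\log N)^2/N$ for $r > \delta_N$, making the outer contribution exponentially small in $\log N$ relative to the main Gaussian integral. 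Taking the logarithm of the resulting product and balancing the two error sources yields the stated remainder $O(N^{-1/2}(j+1)^{3/2}(\log N)^3)$.

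The symplectic case is parallel, working with $\widetilde V_j$ in \eqref{eq: symplectic V}, replacing $n$ by $2n$ and the exponent $2c+1$ by $4c+1$, and using the substitution $s = r\sqrt{2n\rho(0)}$; the leading integral is then $\tfrac{1}{2}\Gamma(j+2c+1)$, producing the three principal terms of the second bullet, while the tail and Taylor-remainder estimates carry over verbatim.

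The main obstacle I anticipate is uniformity of the error estimates across the whole range $0 \le j < m_1$: when $j$ is allowed to grow polynomially, the effective support of $s^{2j+2c+1}e^{-s^2}$ widens like $\sqrt{j}$, so the Taylor remainder $O(s^4/n)$ gets multiplied by larger moments of the Gamma integrand and the logarithmic cutoff $\delta_N$ must be chosen delicately so as to simultaneously dominate this enlarged support and still produce an exponentially small outer contribution. All other ingredients follow directly from the identities and geometric descriptions already prepared in Section~\ref{Section_prelim}.
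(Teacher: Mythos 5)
Your proposal follows essentially the same route as the paper: expand $U_\mu$ to second order at the origin, rescale to reduce the inner integral to a Gamma integral $\int_0^\infty 2s^{2j+2c+1}e^{-s^2}\,ds=\Gamma(j+c+1)$, and show the outer tail is negligible via monotonicity of $V_j$ beyond its critical point; the main terms and the symplectic modification are all correct.

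The one concrete defect is the choice of a $j$-independent cutoff $\delta_N=(\log N)/\sqrt{N}$. The effective support of the rescaled integrand $s^{2j+2c+1}e^{-s^2}$ is $s\asymp\sqrt{j}$, i.e.\ $r\asymp t_j\asymp\sqrt{(j+c)/N}$ by Lemma~\ref{lem:tj}; hence your assertion that $t_j\ll\delta_N$ for all $j<m_1$ fails as soon as $j\gtrsim(\log N)^2$ (and $m_1$ is eventually taken to be $N^{1/6}$). For such $j$ the cutoff sits at or inside the peak of the integrand, so the ``outer'' region carries an order-one fraction of the mass and the tail estimate collapses. You flagged exactly this obstacle yourself; the paper's resolution is to use the $j$-dependent cutoff $t_j^*=t_j\log N$, which always exceeds the effective width by a factor $\log N$, yielding a tail of size $e^{2nU_\mu(0)}e^{-c(n-1)t_j^2(\log N)^2}$ that is superpolynomially small relative to the Gamma main term uniformly in $0\le j<m_1$. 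Two further minor remarks: your $O(r^4)$ Taylor remainder (valid since smoothness of $\boldsymbol\rho$ at $0$ forces $\rho'(0)=0$) is sharper than needed --- the paper works with $O(r^3)$, which already produces the stated error $O(N^{-1/2}(j+1)^{3/2}(\log N)^3)$ via $n\,(t_j^*)^3$ --- and in the outer region one should, as the paper does, peel off one integrable factor $e^{-V_j}$ to justify convergence of the tail integral before applying the pointwise lower bound on $V_j$.
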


\begin{proof}
Recall that $t_j$ is the critical point of $V_j$ given by \eqref{def of tj rtau}.
Let $t_j^*=t_j\cdot \log N$. We split the integral for $h_{j,N}$ by 
\begin{align}\label{lowerh1}
h_{j,N} = \int_{0}^{t_j^*} 2r^{2j+1}e^{-NQ_{\alpha,c}(r)}  \,dr + \int_{t_j^*}^{\infty} 2r^{2j+1}e^{-NQ_{\alpha,c}(r)} \,dr.
\end{align}
Recall that $Q_{\alpha,c}$ is given by \eqref{def of external potential}. 
Note that by \eqref{U mu in terms of int F}, we have $\d_r U_{\mu}(0)=0$ and $ \d^2_r U_{\mu}(0)= -\rho(0)$. 
This gives 
\begin{align*}
\int_{0}^{t_j^*} 2r^{2j+1}e^{-NQ_{\alpha,c}(r)}\,dr &= \int_{0}^{t_j^*} 2r^{2j+2c+1} e^{2\totalN(U_{\mu}(0) - \frac{1}{2}\rho(0) r^2 +O(r^3) )}\,dr\\
& =e^{2\totalN U_{\mu}(0)}\int_{0}^{t_j^*} 2r^{2(j+c)+1}e^{-\totalN\rho(0)r^2}(1+O(\totalN r^3)) \, dr.
\end{align*}
By using \eqref{def of tj rtau} and \eqref{asym:tj}, we have 
$${\totalN}^{\frac{1}{2}} t_j^* = \Big(\frac{2j+2c+1}{2\rho(0)}\Big)^{\frac{1}{2}}\log N +O((j+1) N^{-\frac{1}{2}}\log N).
$$  
Consequently, it follows that 
\begin{equation}\label{lower:int1}
\begin{split}
    \int_{0}^{t_j^*} 2r^{2j+1}e^{-NQ_{\alpha,c}(r)}\,dr 
    &= e^{2\totalN U_{\mu}(0)} \totalN^{-(j+c+1)}
    \int_{0}^{\infty} 2r^{2j+2c+1}e^{-\rho(0)r^2}\,dr\cdot  (1+O(\totalN\cdot {t_j^*}^3))\\
    &=e^{2nU_{\mu}(0)} (\totalN \rho(0)) ^{-(j+c+1)} \Gamma(j+c+1)\cdot (1+O(N^{-\frac{1}{2}}(j+1)^{\frac{3}{2}}(\log N)^3)).
\end{split}    
\end{equation}

Next, we show that the remaining integral in \eqref{lowerh1} is negligible. 
Notice that by \eqref{Q:growth} and \eqref{def of V_j}, the function $r \mapsto e^{-V_j(r)}$ is integrable.
Since $V_j$ has a global minimum at $t_j$ and increases in $(t_j,\infty)$, for $r>t_j^*$, we have 
\begin{align*}
    V_j(r) \ge V_j(t_j^*) & = V_j(t_j) + 2\rho(t_j)(t_j^*-t_j)^2 + O(t_j^*-t_j)^3 \\
    & = -2U_{\mu}(t_j)-\frac{2(j+c)+1}{\totalN} \log t_j + 2\rho(t_j)(t_j^*-t_j)^2 + O(t_j^*-t_j)^3 \geq -2 U_{\mu}(0) + c_1 {t_j^*}^2, 
\end{align*}
where $c_1 >0$ can be taken to be independent for all $j$ with $j<m_N$.
Then it follows that 
\begin{equation}\label{lower:int2}
\begin{split}
 \int_{t_j^*}^{\infty} 2e^{-\totalN V_{j}(r)}\,dr \leq e^{-(\totalN-1)(-2U_{\mu}(0)+c_1 {t_j^*}^2)} \int_{t_j^*}^{\infty} e^{-V_j(r)}\,dr \leq C e^{2\totalN U_{\mu}(0)} e^{-c_1(\totalN-1){t_j}^2(\log N)^2} 
\end{split}
\end{equation}
for some $C>0$. Therefore we conclude that the integral in \eqref{lower:int2} is negligible. 
\end{proof}

We use the inversion to obtain the estimate of the orthogonal norm $h_{j,N}$ for $N-m_2 \leq j\leq N-1$.

\begin{lem}\label{lem:logh3} As $N \to \infty$, we have the following.
\begin{itemize}
    \item    For $j$ with $N-m_2\leq j < N$, we have 
    \begin{equation}
        \log h_{j,N} = -(N-j+\alpha)\log(\totalN \tilde{\rho}(0)) +\log\Gamma(N-j+\alpha)+ \varepsilon_{N,3}(j), 
    \end{equation}
    where $\varepsilon_{N,3}(j) = O(N^{-\frac{1}{2}}(N-j)^{\frac{3}{2}}(\log N)^3)$.
    \smallskip 
    \item  For $2N - 2m_2 \leq j < 2N $, we have  
    \begin{equation}
        \log h_{j,2N} = - (2N-j+2\alpha +1) \log (2n\tilde{\rho}(0)) + \log \Gamma(2N-j+2\alpha+1)+ \tilde{\varepsilon}_{N,3}(j),
    \end{equation}
    where $\tilde{\varepsilon}_{N,3}(j)=O(N^{-\frac{1}{2}}(2N-j)^{\frac{3}{2}}(\log N)^3).$
\end{itemize}
\end{lem}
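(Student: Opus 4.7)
The key idea is to reduce Case 3 to a situation parallel to Case 2 (Lemma~\ref{lem:logh1}) by the inversion $u = 1/r$, which maps the neighbourhood of $r=\infty$ (the north pole) to a neighbourhood of $u=0$, where the ``inverted'' density $\tilde{\rho}$ from \eqref{def of rho tilde} replaces $\rho$ in the Gaussian expansion. Starting from $h_{j,N} = \int_0^\infty 2r^{2j+1} e^{-NQ_{\alpha,c}(r)}\,dr$, I first use Lemma~\ref{lem:U} and the substitution $t=1/s$ to write
\begin{equation*}
U_\mu(1/u) = \log u - G(u), \qquad G(u) := \int_0^u \frac{F(1/s)}{s}\,ds,
\end{equation*}
so that with $n = N+\alpha+c+1$, one gets $NQ_{\alpha,c}(1/u) = -2(N+\alpha+1)\log u + 2nG(u)$. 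Applying $r = 1/u$ (with $dr = -du/u^2$) in the integral and setting $k := N-j$, a routine bookkeeping of exponents yields
\begin{equation*}
h_{j,N} = \int_0^\infty 2u^{2(k+\alpha)-1} e^{-2nG(u)}\,du,
\end{equation*}
which is structurally the Case~2 integrand with $r \leftrightarrow u$, effective ``$j+c$'' played by $k+\alpha-1$, and $\rho \leftrightarrow \tilde{\rho}$.

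Next, I would expand $G(u)$ near $u = 0$. The growth condition~\eqref{rho:growth} gives $F(1/s) = \tilde{\rho}(0)s^2 + O(s^3)$ as $s \to 0$, hence $G(u) = \tfrac{1}{2}\tilde{\rho}(0)\,u^2 + O(u^3)$ as $u \to 0$. Splitting at $u_k^* := t_{N-k}^{-1}\log N$, with $t_{N-k}^{-1} \sim \sqrt{\tau_\alpha(k)/\tilde{\rho}(0)}$ by Lemma~\ref{lem:tj}, on $[0,u_k^*]$ one expands $e^{-2nG(u)} = e^{-n\tilde{\rho}(0)u^2}(1+O(nu^3))$ and extends back to $[0,\infty)$ to obtain the Gamma integral $\int_0^\infty 2u^{2(k+\alpha)-1} e^{-n\tilde{\rho}(0)u^2}\,du = (n\tilde{\rho}(0))^{-(k+\alpha)}\Gamma(k+\alpha)$. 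Tracking the $O(nu^3)$ contribution gives the claimed error bound $O(N^{-1/2}(N-j)^{3/2}(\log N)^3)$. On $[u_k^*, \infty)$, the integral is negligible because $2nG(u) \ge n\tilde{\rho}(0)(u_k^*)^2 \gtrsim k(\log N)^2$ on a compact neighbourhood of $0$, while at infinity the power decay $e^{-2nG(u)} \sim u^{-2n+O(1)}$ (from $G(u)=\log u + O(1)$ as $u\to\infty$, since $F(t)\to 1$ as $t\to 0$) dominates any polynomial factor of $u$.

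For the symplectic counterpart, the same inversion applied to $h_{j,2N} = \int_0^\infty 2e^{-2n\widetilde{V}_j(r)}\,dr$ transforms it into
\begin{equation*}
h_{j,2N} = \int_0^\infty 2u^{2(2N-j+2\alpha)+1} e^{-4nG(u)}\,du,
\end{equation*}
and the parallel Laplace analysis, now with critical scale governed by Lemma~\ref{lem:tildetj}, produces the leading behaviour $(2n\tilde{\rho}(0))^{-(2N-j+2\alpha+1)}\Gamma(2N-j+2\alpha+1)$ whose logarithm is precisely the stated expansion, with the analogous error $O(N^{-1/2}(2N-j)^{3/2}(\log N)^3)$. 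I expect the main obstacle to be purely bookkeeping: carefully tracking how the post-inversion exponent of $u$ combines with the Jacobian $du/u^2$ to produce exactly the right Gamma arguments---$N-j+\alpha$ in the determinantal case and $2N-j+2\alpha+1$ in the Pfaffian case, with the half-unit asymmetry reflecting the extra Vandermonde-type factor $|z-\bar z|^2$ in \eqref{Gibbs symplectic}. Once these shifts are pinned down, the remaining analysis is a direct transcription of the proof of Lemma~\ref{lem:logh1}.
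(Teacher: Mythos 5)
Your proposal is correct and follows essentially the same route as the paper: the paper also proves Case 3 by inversion $s=1/r$, rewriting $h_{j,N}=\int_0^\infty 2e^{-nW_j(s)}\,ds$ with $W_j(s)=V_j(s^{-1})+\tfrac{2}{n}\log s$ (which is exactly your $2G(u)-2\tau_\alpha(N-1-j)\log u$), Taylor-expanding $\int_0^s u^{-1}F(u^{-1})\,du=\tfrac12\tilde\rho(0)s^2+O(s^3)$ near the origin via $\tilde\rho$, cutting at $s_j^*=s_j\log N$ to extract the Gamma integral $(n\tilde\rho(0))^{-(N-j+\alpha)}\Gamma(N-j+\alpha)$ with the same $O(N^{-1/2}(N-j)^{3/2}(\log N)^3)$ error, and discarding the tail as in Lemma~\ref{lem:logh1}. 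Your exponent bookkeeping (including the Pfaffian shift to $\Gamma(2N-j+2\alpha+1)$) checks out against the paper's computation.
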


\begin{proof}
 Using the inversion, we write
    \begin{equation*}
        h_{j,N} = \int_{0}^{\infty} 2e^{-\totalN W_j(s)} \,ds, \qquad   W_j(s) := V_j(s^{-1})+\frac{2}{\totalN}\log s. 
    \end{equation*} 
Note that by \eqref{rel:tau} and \eqref{def:Vj}, we have
\begin{equation*}
\begin{split}
    W_j(s) 
    = 2\int_{0}^{s} u^{-1}F(u^{-1}
    )\,du - 2\tau_{\alpha}(N-1-j)\log s, \qquad 
    W_j'(s) = 2s^{-1} \Big(  F(s^{-1}) - \tau_\alpha(N-1-j) \Big), 
\end{split}
\end{equation*} 
where $F$ is given by \eqref{def:F}. 
Notice that there exists a unique number $s_j$ such that $W_j'(s_j)=0$. 
Then, similar to the proof of Lemma~\ref{lem:tj}, we can see that
for $j$ with $N-m_2 \leq j \leq N-1,$
\begin{equation}\label{asymsj}
    s_j = \Big(\frac{\tau_{\alpha}(N-1-j)}{\tilde{\rho}(0)}\Big)^{\frac{1}{2}} + O(\tau_{\alpha}(N-1-j)),\quad N\to \infty,
\end{equation}
where $\tilde{\rho}$ is defined by \eqref{def of rho tilde}. Note that by \eqref{rho:growth}, $\tilde{\rho}$ has the asymptotic expansion 
\begin{equation}\label{asym:trho}
    \tilde{\rho}(s) = \tilde{\rho}(0) + \tilde{\rho}'(0)s + \frac{1}{2}\tilde{\rho}''(0)s^2 + O(s^3),\qquad s\to 0.
\end{equation} 
By using the change of variables, we have
\begin{equation*}
\begin{split}
    \int_{0}^{s} u^{-1}F(u^{-1})\,du & = \int_{0}^{s} u^{-1}\int_{u^{-1}}^{\infty} 2r\rho(r)\,dr \,du  = \int_{0}^s u^{-1} \int_0^{u} 2w\tilde{\rho}(w) \,dw \,du = \frac{1}{2}\tilde{\rho}(0)s^2 + O(s^3),
\end{split}
\end{equation*}
as $s \to 0$.
Let $s_j^* = s_j \log N$. Then it follows from \eqref{asymsj} that 
$$
\totalN^{\frac{1}{2}}s_j^* = \Big( \frac{2(N-j+\alpha)-1}{2\tilde{\rho}(0)} \Big)^{\frac{1}{2}}\log N + O(N^{-\frac{1}{2}}(N-j)\log N) .
$$
By using this asymptotic behaviour, we obtain 
\begin{align*}
   \int_0^{s_j^*} 2e^{-\totalN W_j(s)}\,ds &= \int_{0}^{s_j^*} 2s^{2(N-j+\alpha)-1}e^{-\totalN \tilde{\rho}(0)s^2}(1+O(\totalN s^3))\,ds
   \\
   & = (\totalN \tilde{\rho}(0))^{-(N-j+\alpha)}\,\Gamma(N-j+\alpha)\cdot (1+O(N^{-\frac{1}{2}}(N-j)^{\frac{3}{2}}(\log N)^3)). 
\end{align*}
Furthermore, as in the proof of Lemma \ref{lem:logh1}, the integral over $(s_j^*, \infty)$ is negligible. Hence, the desired result follows. 
\end{proof}

\section{Proof of main results} \label{Section_proof of theorems}

In this section, we prove our main results, Theorems~\ref{Thm_free energy det} and ~\ref{Thm_free energy Pfaff}. 

\subsection{Proof of Theorem~\ref{Thm_free energy det}}

In this subsection, we show Theorem~\ref{Thm_free energy det}.
For this, we need to derive the asymptotic expansion of the summations \eqref{summations division} using Lemmas~\ref{lem:logh2}, ~\ref{lem:logh1} and ~\ref{lem:logh3}. 

We begin with the summation $\sum_{j=m_1}^{N-m_2-1} \log h_{j,N}$. 
Note that by Lemma~\ref{lem:logh2}, it suffices to derive asymptotic behaviours of the summations 
\begin{equation} \label{summations in logh2}
-n \sum_{j=m_1}^{N-m_2-1} V_j(t_j) , \qquad \frac12 \sum_{j=m_1}^{N-m_2-1} \log \rho(t_j) , \qquad    \frac{1}{n}  \sum_{j=m_1}^{N-m_2-1} \mathfrak{B}_1(t_j). 
\end{equation}
The asymptotic behaviour of the first term is given by \eqref{eq:EMVj} together with Lemmas~\ref{lem:vsum1} and ~\ref{lem:vsum2}, whereas the asymptotic behaviours of the second and third terms are given by Lemmas~\ref{lem:vsum3} and ~\ref{lem:vsum4}, respectively. Combining all of the above, the final result for the asymptotic behaviour of $\sum_{j=m_1}^{N-m_2-1} \log h_{j,N}$ is provided in Lemma~\ref{lem:sum1}. 

Recall that $V_j$ is given by \eqref{def of V_j}. We observe that 
$$
V_j(t_j) = -2 U_{\mu}(t_j) - 2\tau_c(j)\log t_j = -2U_{\mu}(t_j) - 2(1-F(t_j))\log t_j,
$$
where $t_j$ is given by \eqref{def of tj rtau}.
Note that the definition of $t_j$ can be extended from integer values $j$ to positive real numbers $s$. Since $F'(t)=-2t\rho(t)$, we also have 
\begin{equation} \label{eq: dt_s}
    \frac{dt_s}{ds} = -\frac{1}{nF'(t_s)} = \frac{1}{2\totalN  t_s\rho(t_s)}.
\end{equation}
Then by the Euler-Maclaurin formula \eqref{EMfor}, we have 
\begin{equation}\label{eq:EMVj}
    \begin{split}
        \sum_{j=m_1}^{N-m_2-1} V_j(t_j) 
        &= -\totalN \int_{t_{m_1}}^{t_{N-m_2}}\Big( 2U_{\mu}(t) + 2(1-F(t))\log t \Big)\cdot 2t\rho(t)\,dt
        \\
        &\quad + \frac{1}{2}\Big(V_{m_1}(t_{m_1})-V_{N-m_2}(t_{m_2}) \Big) + \frac{1}{12}\Big( \d_s ( V_s(t_s))\big|_{s={N-m_2}} - \d_s ( V_s(t_s))\big|_{s={m_1}} \Big) + \epsilon_N.
    \end{split}
\end{equation}
Here, the error term satisfies 
$$
\epsilon_N = O(N^{-3}\tau_c(m_1)^{-2}) + O(N^{-3}\tau_\alpha(m_2)^{-2})=O(N^{-1}m_1^{-2})+O(N^{-1}m_2^{-2})
$$ 
since $\d_s^3\,V_s(t_s)\big|_{s=m_1} = O(N^{-3}t_{m_1}^{-4})$ and $\d_s^3\, V_s(t_s)|_{s=N-m_2} = O(N^{-3}t_{N-m_2}^4)$, cf. \eqref{differ of Vj 234}.


We first derive the asymptotic behaviour of the first term in the left-hand side of \eqref{eq:EMVj}. 

\begin{lem}\label{lem:vsum1}
As $N\to \infty$, we have
\begin{align}
\begin{split}
        &\quad -n\int_{t_{m_1}}^{t_{N-m_2}} \Big(2U_{\mu}(t) + 2(1-F(t))\log t \Big)\cdot 2t\rho(t)\,dt 
        \\
        &= nI[\mu] + n\Big(2\tau_c(m_1)U_{\mu}(0) - \frac{3}{4}\tau_c(m_1)^{2} + \tau_c(m_1)^2\log t_{m_1}\Big)
        -n\Big(\frac{3}{4} \tau_{\alpha}(m_2)^2 +\tau_{\alpha}(m_2)^2 \log t_{N-m_2}\Big) + \epsilon_N,
\end{split}
\end{align} 
where $\epsilon_N = O\big(N^{-\frac{3}{2}}(m_1^{\frac{5}{2}}+m_2^{\frac{5}{2}})\big)$.
\end{lem}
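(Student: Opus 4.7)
The plan is to introduce the auxiliary function
$$W(t) := 2U_\mu(t) + 2(1-F(t))\log t,$$
so that using $2t\rho(t)=-F'(t)$ the quantity to estimate reads $J=n\int_{t_{m_1}}^{t_{N-m_2}} W(t)F'(t)\,dt$. A short computation using $U_\mu'(t) = -(1-F(t))/t$ gives the clean identity $W'(t) = -2F'(t)\log t$, which is the engine of the proof. I would then split
$$J \;=\; nI[\mu] \;-\; n\int_0^{t_{m_1}} WF'\,dt \;-\; n\int_{t_{N-m_2}}^\infty WF'\,dt,$$
reducing the task to a bulk identity on $(0,\infty)$ plus two boundary corrections.

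For the bulk identity $\int_0^\infty WF'\,dt = I[\mu]$, I would integrate by parts to get $[WF]_0^\infty - \int_0^\infty W'F\,dt$; since $W(0)=2U_\mu(0)$, $F(0)=1$, and $WF$ vanishes at infinity (using $U_\mu(t) + \log t \to 0$ as $t\to\infty$), the endpoint term equals $-2U_\mu(0)$, while $-\int_0^\infty W'F\,dt = \int_0^\infty (F^2)'\log t\,dt$. This last integral is delicate since neither $F^2$ nor $F^2-1$ is a valid IBP antiderivative of $(F^2)'$ at both endpoints. I would split the integration region at $t=1$ and use $F^2-1$ on $(0,1)$ and $F^2$ on $(1,\infty)$, reducing the integral to $-\int_0^\infty (F^2-\mathbf{1}_{t<1})/r\,dr$. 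Writing $F^2 = F - F(1-F)$ and using the two Fubini identities $\int_1^\infty F/r\,dr - \int_0^1 (1-F)/r\,dr = \int_0^\infty 2s\rho(s)\log s\,ds = -U_\mu(0)$ and $\int_0^\infty F(1-F)/r\,dr = I[\mu] + U_\mu(0)$ (the latter following from Lemma~\ref{lem:U} via another Fubini) evaluates it to $-I[\mu]-2U_\mu(0)$. Combining yields $\int_0^\infty (F^2)'\log t\,dt = I[\mu]+2U_\mu(0)$ and hence $\int_0^\infty WF'\,dt = -2U_\mu(0) + I[\mu] + 2U_\mu(0) = I[\mu]$.

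For the boundary corrections, I would Taylor expand. Near $t=0$, $U_\mu(t) = U_\mu(0) - \tfrac{1}{2}\rho(0)t^2 + O(t^3)$ and $1-F(t) = \rho(0)t^2 + O(t^3)$ give $W(t) = 2U_\mu(0) + \rho(0)t^2(2\log t - 1) + O(t^3\log t)$. The constant part integrates cleanly against $F'$ to yield $2U_\mu(0)[F(t_{m_1})-1] = -2U_\mu(0)\tau_c(m_1)$; the quadratic piece multiplied by $F'(t) = -2\rho(0)t + O(t^2)$ and integrated gives $\tfrac{3}{4}\rho(0)^2 t_{m_1}^4 - \rho(0)^2 t_{m_1}^4 \log t_{m_1}$ plus a higher-order error. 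Using $\rho(0)t_{m_1}^2 = \tau_c(m_1) + O(\tau_c(m_1)^{3/2})$ from Lemma~\ref{lem:tj} and multiplying by $-n$ produces $2n\tau_c(m_1)U_\mu(0) - \tfrac{3n}{4}\tau_c(m_1)^2 + n\tau_c(m_1)^2\log t_{m_1} + O(m_1^{5/2}/N^{3/2})$. The outer piece $(t_{N-m_2},\infty)$ is handled symmetrically via the inversion $s=1/t$ together with $\tilde\rho(s) = s^{-4}\rho(1/s)$; since $W(t) \to 0$ as $t\to\infty$ there is no analogue of the constant $2U_\mu(0)$ contribution, and the analogous small-$s$ expansion produces $-\tfrac{3n}{4}\tau_\alpha(m_2)^2 - n\tau_\alpha(m_2)^2\log t_{N-m_2} + O(m_2^{5/2}/N^{3/2})$, invoking Lemma~\ref{lem:tj} again for $\tilde\rho(0) t_{N-m_2}^{-2} = \tau_\alpha(m_2) + O(\tau_\alpha(m_2)^{3/2})$. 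Summing the bulk and the two boundary contributions reproduces the stated expansion. The main obstacle is the bulk identity: both natural IBP antiderivatives of $\log t$ fail at one endpoint, and it is the split-at-$1$ together with the two Fubini identities that makes the leading constant collapse cleanly to $I[\mu]$.
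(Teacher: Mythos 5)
Your proposal is correct and reaches the stated expansion, but it organizes the computation differently from the paper. The paper splits the integrand as $U_\mu\cdot 2t\rho + \big(U_\mu+2(1-F)\log t\big)\cdot 2t\rho$ and observes that the second piece has the \emph{exact} antiderivative $(1-F)^2\log t+(1-F)U_\mu$ on $[t_{m_1},t_{N-m_2}]$; only the pure $U_\mu\cdot 2t\rho$ piece is then completed to $(0,\infty)$ (producing $-I[\mu]$) and Taylor-expanded on the two tails, where no logarithm enters the remainder. You instead complete the \emph{entire} integrand $W\,F'$ to $(0,\infty)$, prove the bulk identity $\int_0^\infty WF'\,dt=I[\mu]$ via $W'=-2F'\log t$, an integration by parts, the split-at-$1$ antiderivatives $F^2-\mathbf{1}_{t<1}$ and $F^2$, and the two Fubini identities (all of which I checked and which are correct), and then Taylor-expand the full $W$, including its $t^2\log t$ part, near the endpoints. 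Your bulk identity is a nice self-contained potential-theoretic statement, and your endpoint computations ($\tfrac34\rho(0)^2t_{m_1}^4-\rho(0)^2t_{m_1}^4\log t_{m_1}$, and the symmetric inversion at infinity) reproduce the coefficients exactly. What the paper's antiderivative buys is a marginally cleaner error: in your route the $O(t^3\log t)$ remainder of $W$ integrated against $F'=O(t)$ contributes $O\big(n\,t_{m_1}^5\log t_{m_1}\big)=O\big(N^{-\frac32}m_1^{\frac52}\log N\big)$, so your $\epsilon_N$ carries an extra factor $\log N$ relative to the stated $O\big(N^{-\frac32}(m_1^{\frac52}+m_2^{\frac52})\big)$. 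This is harmless for Theorem~\ref{Thm_free energy det}, whose final error already absorbs powers of $\log N$, but if you want the lemma exactly as stated you should either handle the $2(1-F)\log t$ part exactly (as the paper does) or track that the $\log$-factor is admissible downstream.
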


\begin{proof}
By using $I[\mu] = - \int U_{\mu} \,d\mu$, we first write 
\begin{equation*}
    \begin{split}
        -\int_{t_{m_1}}^{t_{N-m_2}} U_{\mu} (t)\, 2t\rho(t)\,dt = I[\mu] + \int_{0}^{t_{m_1}} U_{\mu}(t)\,2t\rho(t)\,dt + \int_{t_{N-m_2}}^{\infty} U_{\mu}(t)\,2t\rho(t)\,dt.
    \end{split}
\end{equation*}
Note that by Lemma \ref{lem:U} and \eqref{def:F}, we have  
\begin{equation*}
    U_{\mu}'(t) = - \frac{1}{t}(1-F(t)),  \qquad  2t\rho(t) = (1-F(t))'.
\end{equation*}
Here, by the Taylor series expansion at $t=0$ together with \eqref{def of tj rtau}, we have 
\begin{equation*}
    \begin{split}
        &\quad \int_0^{t_{m_1}}U_{\mu}(t)\cdot  2t\rho(t)\,dt 
        = \Big[(1-F(t))U_{\mu}(t)\Big]_{t=0}^{t_{m_1}} - \int_0^{t_{m_1}} (1-F(t))U_{\mu}'(t)\,dt 
        \\
        & = \tau_c(m_1)U_{\mu}(t_{m_1}) + \int_0^{t_{m_1}} t \, U_{\mu}'(t) U_{\mu}'(t)\,dt = \tau_c(m_1)U_{\mu}(t_{m_1}) + \int_0^{t_{m_1}} t\Big(U_{\mu}''(0)t + O(t^2)\Big)^2\,dt 
        \\
        & = \tau_c(m_1)U_{\mu}(t_{m_1}) + U_{\mu}''(0)^2 \frac{1}{4}t_{m_1}^4 + O(t_{m_1}^5). 
    \end{split}
\end{equation*}
Similarly, we have
\begin{equation*}
    \begin{split}
     \int_{t_{N-m_2}}^{\infty}U_{\mu}(t)\cdot 2t\rho(t)\,dt 
        &= -\Big[U_{\mu}(t)F(t)\Big]_{t=t_{N-m_2}}^{\infty} 
        \\
        &\quad + \int_{t_{N-m_2}}^{\infty} U_{\mu}'(t)F(t)\,dt =  \tau_\alpha(m_2)U_{\mu}(t_{N-m_2}) - \int_{t_{N-m_2}}^{\infty} \frac{1}{t}(1-F(t))F(t)\,dt.
    \end{split}
\end{equation*}
Note that by \eqref{rho:growth}, we have 
\begin{equation*}
\begin{split}
    F(s^{-1}) 
    &= \tilde{\rho}(0)s^2 + \frac{2}{3}\tilde{\rho}'(0)s^3 + \frac{1}{4}\tilde{\rho}''(0)s^{4} + O(s^5),\\ 
    U_{\mu}(s^{-1})& = \log s - \frac{1}{2}\tilde{\rho}(0)s^2 - \frac{2}{9}\tilde{\rho}(0)s^3 - \frac{1}{16}\tilde{\rho}''(0)s^4 + O(s^5),
\end{split}
    \end{equation*}
as $s\to 0$.
This gives rise to 
\begin{equation*}
    \begin{split}
        \int_{t_{N-m_2}}^{\infty} \frac{1}{t}(1-F(t))F(t)\,dt 
        &= \int_{0}^{t_{N-m_2}^{-1}} \frac{1}{s}(1-F(s^{-1}))F(s^{-1})\,ds \\
        &= \frac{1}{2}\tilde{\rho}(0) t_{N-{m_2}}^{-2} + \frac{2}{9}\tilde{\rho}'(0)t_{N-{m_2}}^{-3}-\Big(\tilde{\rho}(0)^2-\frac{1}{4}\tilde{\rho}''(0)\Big)\frac{1}{4}t_{N-{m_2}}^{-4}+O(t_{N-{m_2}}^{-5}).
    \end{split}
\end{equation*}
We also have
\begin{equation*}
    \begin{split}
        \int_{t_{m_1}}^{t_{N-m_2}} (U_{\mu}(t) + 2(1-F(t))\log t)\cdot 2t\rho(t)\,dt = \Big[(1-F(t))^2\log t + (1-F(t))U_{\mu}(t)\Big]_{t=t_{m_1}}^{t_{N-m_2}}.
    \end{split}
\end{equation*}
Combining all of the asymptotic expansions, we have
\begin{equation*}
    \begin{split}
        -n\int_{t_{m_1}}^{t_{N-m_2}}(2U_{\mu}(t)+2(1-F(t))\log t) \cdot 2t\rho(t)\,dt 
        = nI[\mu] + n\mathcal{E}_{1,1}(m_1)+n\mathcal{E}_{1,2}(m_2) + O(nt_{m_1}^{5}+nt_{N-m_2}^{-5}),
    \end{split}
\end{equation*}
where 
\begin{equation*}
    \begin{split}
        \mathcal{E}_{1,1}(m_1) &= 2\tau_{c}(m_1)U_{\mu}(t_{m_1}) + U_{\mu}''(0)^2\frac{1}{4}t_{m_1}^4+ \tau_c(m_1)^2\log t_{m_1}, \\
        \mathcal{E}_{1,2}(m_2) &=  2\tau_\alpha (m_2)U_{\mu}(t_{N-m_2}) - U_{\mu}(t_{N-m_2}) - \tau_c(N-m_2)^2\log t_{N-m_2} \\ 
        &\quad -\frac{1}{2}\tilde{\rho}(0) t_{N-m_2}^{-2} - \frac{2}{9}\tilde{\rho}'(0)t_{N-m_2}^{-3} + \Big(\tilde{\rho}(0)^2 - \frac{1}{4}\tilde{\rho}''(0)\Big)\frac{1}{4}t_{N-m_2}^{-4}.
    \end{split}
\end{equation*}
Furthermore, by straightforward computations, we obtain 
\begin{equation*}
    \begin{split}
        \mathcal{E}_{1,1}(m_1)
        &=2\tau_c(m_1)U_{\mu}(0) - \frac{3}{4}\tau_c(m_1)^{2} + \tau_c(m_1)^2\log t_{m_1} + O(\tau_c(m_1)^{\frac{5}{2}}),
        \\
        \mathcal{E}_{1,2}(m_2) &=-\tau_{\alpha}(m_2)^2 \log t_{N-m_2}  -\frac{3}{4} \tau_{\alpha}(m_2)^2 + O(\tau_\alpha(m_2)^{\frac{5}{2}}),
    \end{split}
\end{equation*}
which completes the proof.
\end{proof}

Next, we turn to asymptotic behaviours the remaining terms in \eqref{eq:EMVj}. 
\begin{lem}\label{lem:vsum2}
    As $N\to \infty$, we have 
    \begin{align}
    \begin{split}
    &\quad \frac{1}{2}\Big(V_{m_1}(t_{m_1})- V_{N-m_2}(t_{N-m_2})\Big)
    \\ 
    &= - U_{\mu}(0) + \frac{1}{2}\tau_c(m_1)     - \tau_c(m_1)\log t_{m_1}  -\frac{1}{2}\tau_{\alpha}(m_2)    - \tau_{\alpha}(m_2)\log t_{N-m_2}+O\big(N^{-\frac{3}{2}}(m_1^{\frac{3}{2}}+m_2^{\frac{3}{2}})\big)  
    \end{split}
    \end{align}
    and 
    \begin{equation}
        \begin{split}
    \frac{1}{12}\Big( \d_s (V_s(t_s))\big|_{s=N-m_2} - \d_s (V_s(t_s))\big|_{s=m_1} \Big)     =\frac{1}{6n}\Big(\log t_{m_1} - \log t_{N-m_2}\Big).
        \end{split}
    \end{equation}
\end{lem}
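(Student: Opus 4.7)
The plan is to treat the two identities separately: the first by direct Taylor expansion near the critical points, and the second by an envelope-type argument.

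For the first identity, I would substitute the explicit formula $V_j(r) = -2U_\mu(r) - 2\tau_c(j)\log r$ and expand each of $V_{m_1}(t_{m_1})$ and $V_{N-m_2}(t_{N-m_2})$ about $t=0$ and $t=\infty$ respectively. Lemma~\ref{lem:U} gives $U_\mu'(t) = -t^{-1}(1-F(t))$, and combined with $1-F(t) = \rho(0)t^2 + O(t^3)$ it follows that $U_\mu''(0) = -\rho(0)$, so
\[
U_\mu(t) = U_\mu(0) - \tfrac{\rho(0)}{2}t^2 + O(t^3), \qquad t\to 0.
\]
Inserting the asymptotic $t_{m_1}^2 = \tau_c(m_1)/\rho(0) + O(N^{-3/2}m_1^{1/2})$ from Lemma~\ref{lem:tj} gives
\[
-2U_\mu(t_{m_1}) = -2U_\mu(0) + \tau_c(m_1) + O(N^{-3/2}m_1^{3/2}),
\]
so that $V_{m_1}(t_{m_1}) = -2U_\mu(0) + \tau_c(m_1) - 2\tau_c(m_1)\log t_{m_1} + O(N^{-3/2}m_1^{3/2})$.

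For the second endpoint I would use the inversion formula~\eqref{U mu in terms of int F} together with the expansion $F(s^{-1}) = \tilde{\rho}(0)s^2 + O(s^3)$ from \eqref{rho:growth}, which yields $U_\mu(t) = -\log t - \tfrac{\tilde{\rho}(0)}{2}t^{-2} + O(t^{-3})$ as $t\to\infty$. Combining this with $t_{N-m_2}^{-2} = \tau_\alpha(m_2)/\tilde{\rho}(0) + O(N^{-3/2}m_2^{1/2})$ and the identity $\tau_c(N-m_2) = 1 - \tau_\alpha(m_2)$ from \eqref{rel:tau}, I get
\[
V_{N-m_2}(t_{N-m_2}) = \tau_\alpha(m_2) + 2\tau_\alpha(m_2)\log t_{N-m_2} + O(N^{-3/2}m_2^{3/2}).
\]
Subtracting and dividing by two produces the claim with the desired error.

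For the second identity, the argument is essentially one line: regarding $s \mapsto V_s(t_s)$ as a function of a continuous parameter, the chain rule gives
\[
\frac{d}{ds}V_s(t_s) = (\partial_s V_s)(t_s) + V_s'(t_s)\cdot \frac{dt_s}{ds},
\]
and the second term vanishes because $V_s'(t_s)=0$ by definition of $t_s$. Since $V_s(r) = -2U_\mu(r) - 2\tau_c(s)\log r$ with $\tau_c'(s) = 1/n$, the first term equals $-\tfrac{2}{n}\log t_s$. Evaluating at $s=N-m_2$ and $s=m_1$ and taking the difference then gives the stated formula exactly — note that no error term appears.

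There is no serious obstacle; the work is entirely in bookkeeping the two-sided Taylor expansions and checking that each of the neglected terms is indeed of order $O(N^{-3/2}m_i^{3/2})$. The only mild subtlety is ensuring that the $O(1/N)$ error in Lemma~\ref{lem:tj}, once squared into $t_{m_1}^2$, yields an $O(N^{-3/2}m_1^{1/2})$ contribution that is absorbed into the claimed $O(N^{-3/2}m_1^{3/2})$ bound under the standing assumption $m_1\ge 1$.
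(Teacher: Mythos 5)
Your proposal is correct and follows essentially the same route as the paper: expand $U_\mu(t_{m_1})$ near $0$ and $U_\mu(t_{N-m_2})$ near $\infty$ (via \eqref{U mu in terms of int F} and \eqref{rho:growth}), use the relation $\tau_c(N-m_2)=1-\tau_\alpha(m_2)$ from \eqref{rel:tau}, and for the derivative identity apply the chain rule together with $V_s'(t_s)=0$. The error bookkeeping is also consistent with the paper, whose stated remainder $O(\tau_c(m_1)^{3/2}+\tau_\alpha(m_2)^{3/2})$ is exactly the $O(N^{-3/2}(m_1^{3/2}+m_2^{3/2}))$ you obtain.
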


\begin{proof}
The first assertion follows from 
\begin{equation*}
    \begin{split}
        &\quad \frac{1}{2}(V_{m_1}(t_{m_1})- V_{N-m_2}(t_{N-m_2}))
        \\ 
        &= -U_{\mu}(t_{m_1})-\tau_c(m_1)\log t_{m_1} + U_{\mu}(t_{N-m_2}) + \tau_c(N-m_2)\log t_{N-m_2}
        \\
        &= - U_{\mu}(0) + \frac{1}{2}\tau_c(m_1) - \tau_c(m_1)\log t_{m_1} - \tau_{\alpha}(m_2)\log t_{N-m_2}-\frac{1}{2}\tau_{\alpha}(m_2)+O(\tau_c(m_1)^{\frac{3}{2}}+\tau_{\alpha}(m_2)^{\frac{3}{2}}),
    \end{split}
\end{equation*}
where we use the relation \eqref{rel:tau} between $\tau_c$ and $\tau_\alpha$.
Applying the chain rule, 
\begin{equation*}
    \d_s(V_s(t_s)) = \d_s(-2U_{\mu}(t_s) - 2\tau_c(s)\log t_s) = -V'_s(t_s) \frac{dt_s}{ds} - \frac{2}{n}\log t_s = -\frac{2}{n}\log t_s,
\end{equation*}
which yields the second assertion. 
\end{proof}

Next, we derive the asymptotic behaviour of the second summation in \eqref{summations in logh2}. 

\begin{lem}\label{lem:vsum3}
As $N \to \infty$, we have
\begin{align}
\begin{split}
  \sum_{j=m_1}^{N-m_2-1}\frac{1}{2}\log \rho(t_j) &= \frac{1}{2}\totalN E[\mu] - \frac{1}{2}\totalN \tau_c(m_1)\log \rho(0) - \frac{1}{2}\totalN \tau_\alpha(m_2)(-2+\log \tilde{\rho}(0) - 4\log t_{N-m_2})
        \\
        &\quad +\frac{1}{4}(\log \rho(t_{m_1})-\log \rho(t_{N-m_2})) + O(m_1^{-1}+m_2^{-1})+O(N^{-\frac{1}{2}}(m_1^{\frac{3}{2}}+m_2^{\frac{3}{2}})\log N). 
\end{split}
\end{align} 
\end{lem}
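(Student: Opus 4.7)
The approach is to apply the Euler--Maclaurin formula \eqref{EMfor} to $f(s):=\log\rho(t_s)$, treating $s\mapsto t_s$ as a smooth function via the defining relation $F(t_s)=1-\tau_c(s)$. At leading order this gives
\begin{equation*}
\sum_{j=m_1}^{N-m_2-1}\log\rho(t_j) = \int_{m_1}^{N-m_2-1}\log\rho(t_s)\,ds + \tfrac{1}{2}\bigl[\log\rho(t_{m_1})+\log\rho(t_{N-m_2-1})\bigr] + (\text{Bernoulli corrections}) + R.
\end{equation*}
I would then shift the upper endpoint to $N-m_2$ at the cost of $-\log\rho(t_{N-m_2})+O(1/N)$; combined with $\log\rho(t_{N-m_2-1})=\log\rho(t_{N-m_2})+O(1/N)$, this converts the symmetric boundary contribution into the asymmetric $\tfrac{1}{2}\log\rho(t_{m_1}) - \tfrac{1}{2}\log\rho(t_{N-m_2})$, which after the overall $1/2$ produces the $\tfrac{1}{4}(\log\rho(t_{m_1})-\log\rho(t_{N-m_2}))$ in the claim.

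\textbf{Interior integral via change of variables.} Using $ds=2n\,t\rho(t)\,dt$ from \eqref{eq: dt_s} together with $E[\mu]=\int_0^\infty 2t\rho(t)\log\rho(t)\,dt$, I split off the tails:
\begin{equation*}
\int_{m_1}^{N-m_2}\log\rho(t_s)\,ds = nE[\mu] - 2n\int_0^{t_{m_1}} t\rho(t)\log\rho(t)\,dt - 2n\int_{t_{N-m_2}}^\infty t\rho(t)\log\rho(t)\,dt.
\end{equation*}

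\textbf{Evaluation of the tails.} For the left tail, radial smoothness of $\boldsymbol\rho$ forces $\rho'(0)=0$, so $\log\rho(t)=\log\rho(0)+O(t^2)$; combined with $\int_0^{t_{m_1}}2t\rho(t)\,dt=1-F(t_{m_1})=\tau_c(m_1)$, this gives
\begin{equation*}
2n\int_0^{t_{m_1}} t\rho(t)\log\rho(t)\,dt = n\tau_c(m_1)\log\rho(0) + O(nt_{m_1}^4).
\end{equation*}
For the right tail I substitute $u=1/t$, which turns $t\rho(t)=u^3\tilde\rho(u)$ and $\log\rho(t)=4\log u+\log\tilde\rho(u)$. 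Writing $u_0:=1/t_{N-m_2}$, the inversion identity $F(t)=\int_0^{1/t}2u\tilde\rho(u)\,du$ combined with \eqref{rel:tau} yields $\int_0^{u_0}2u\tilde\rho(u)\,du=F(t_{N-m_2})=\tau_\alpha(m_2)$. Smoothness of $\tilde\rho$ at $0$ gives $\tilde\rho(u)=\tilde\rho(0)+O(u^2)$, so using $\int u\log u\,du=\tfrac{u^2}{2}\log u-\tfrac{u^2}{4}$, the identity $\tilde\rho(0)u_0^2=\tau_\alpha(m_2)+O(u_0^4)$, and $\log u_0=-\log t_{N-m_2}$, I obtain
\begin{equation*}
2n\int_{t_{N-m_2}}^\infty t\rho(t)\log\rho(t)\,dt = n\tau_\alpha(m_2)\bigl(-2+\log\tilde\rho(0)-4\log t_{N-m_2}\bigr) + O\bigl(n\tau_\alpha(m_2)^2\log N\bigr).
\end{equation*}

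\textbf{Assembly and error control.} Combining the three contributions and multiplying by $1/2$ gives the asserted formula. The main obstacle is careful bookkeeping of the error terms: the polynomial errors $nt_{m_1}^4\sim m_1^2/N$ and $n\tau_\alpha(m_2)^2\log N\sim m_2^2(\log N)/N$ are absorbed into the stated $O(N^{-1/2}(m_1^{3/2}+m_2^{3/2})\log N)$ for $m_1,m_2\ll N$, while the $O(m_1^{-1}+m_2^{-1})$ piece comes from the first Bernoulli correction $\tfrac{B_2}{12}[f'(N-m_2-1)-f'(m_1)]$ and the Euler--Maclaurin remainder: since $f'(s)=\frac{\rho'(t_s)}{\rho(t_s)}\cdot\frac{1}{2nt_s\rho(t_s)}$ and $\rho\sim A/t^4$ at infinity, one has $f'(N-m_2-1)=O(t_{N-m_2}^2/n)=O(1/m_2)$ by Lemma~\ref{lem:tj}, and a symmetric bound at the left endpoint (loosened by $1/n\le 1/m_1$).
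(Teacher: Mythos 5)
Your proposal follows essentially the same route as the paper: Euler--Maclaurin applied to $s\mapsto\tfrac12\log\rho(t_s)$ with the endpoint shift producing the antisymmetric boundary term $\tfrac14(\log\rho(t_{m_1})-\log\rho(t_{N-m_2}))$, the change of variables $ds=2n\,t\rho(t)\,dt$ to expose $E[\mu]$, and the two tail integrals evaluated by Taylor expansion at $0$ and by inversion at $\infty$. The only quibble is your claim $\tilde\rho(u)=\tilde\rho(0)+O(u^2)$: the paper's hypotheses do not force $\tilde\rho'(0)=0$ (its other lemmas carry $\tilde\rho'(0)$ terms), so the right-tail error is really $O(n\,t_{N-m_2}^{-3}\log N)=O(N^{-1/2}m_2^{3/2}\log N)$ rather than your sharper bound --- but this is exactly the error allowed in the statement, so the conclusion is unaffected.
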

\begin{proof}
 By using the Euler-Maclaurin formula \eqref{EMfor} and \eqref{eq: dt_s}, we have
    \begin{equation*}
        \begin{split}
            \sum_{j=m_1}^{N-m_2-1} \frac{1}{2}\log \rho(t_j) = n\int_{t_{m_1}}^{t_{N-m_2}}\frac{1}{2}\log \rho(t) \cdot 2t\rho(t)\,dt + \frac{1}{4}(\log \rho(t_{m_1})-\log\rho(t_{N-m_2}))+ O(N^{-1}(t_{m_1}^{-2}+t_{N-m_2}^{2})).
        \end{split}
    \end{equation*}
Using the Taylor expansion at $0$ and \eqref{asym:tj}, we have
\begin{align*}
  \int_{0}^{t_{m_1}} \log \rho(t) \cdot 2t\rho(t)\,dt 
        &= \int_{0}^{t_{m_1}} \left(\rho(0)\log \rho(0) + O(t)\right)2t\,dt \\
        &= \rho(0) \log\rho(0) t_{m_1}^2 + O(t_{m_1}^3) =  \log\rho(0)\tau_c(m_1) + O(t_{m_1}^3).
\end{align*} 
On the other hand, it follows from \eqref{asym:trho} and \eqref{asym:tk} that 
\begin{align*}
   \int_{t_{N-m_2}}^{\infty}\log \rho(t) \cdot 2t\rho(t)\,dt 
        &= \int_{0}^{t_{N-m_2}^{-1}} \log (s^4\tilde{\rho}(s)) \cdot 2s\tilde{\rho}(s)\,ds \\  
        &=\int_{0}^{t_{N-m_2}^{-1}} \left( 4\log s + \log \tilde{\rho}(0) + O(s)\right)\cdot (2\tilde{\rho}(0)s + O(s^2))\,ds \\
        &= \tilde{\rho(0)}t_{N-m_2}^{-2}(-2+\log \tilde{\rho}(0) - 4\log t_{N-m_2})+O(t_{N-m_2}^{-3}\log t_{N-m_2}) \\
        &= \tau_\alpha(m_2)(-2 +\log \tilde{\rho}(0) - 4\log t_{N-m_2})+O(t_{N-m_2}^{-3}\log t_{N-m_2}).
\end{align*} 
Since the error terms satisfy  $O(N^{-1}(t_{m_1}^{-2}+t_{N-m_2}^{2})) = O(N^{-1}(\tau_c(m_1)^{-1}+\tau_\alpha(m_2)^{-1}))$, 
$O(t_{m_1}^3) = O(\tau_c(m_1)^{\frac{3}{2}})$ and $O(t_{N-m_2}^{-3}\log t_{N-m_2})=O(\tau_{\alpha}(m_2)^{\frac{3}{2}}\log N)$, the proof is complete. 
\end{proof}

Finally, we derive the asymptotic behaviour of the third summation in \eqref{summations in logh2}.

\begin{lem}\label{lem:vsum4}
  As $N \to \infty$, we have 
  \begin{align}
\begin{split}
   \sum_{j=m_1}^{N-m_2-1}\frac{1}{n}\mathfrak{B}_1(t_j)& = -\frac{1}{12}\log\Big(\frac{t_{N-m_2}}{t_{m_1}}\Big)-\frac{1}{4}\log\Big(\frac{\rho(t_{N-m_2})}{\rho(t_{m_1})}\Big)-\frac{5}{12}-\frac{1}{6}\int_{0}^{\infty}\Big(\frac{\rho''(t)}{\rho(t)}-\frac{5}{4}\Big(\frac{\rho'(t)}{\rho(t)}\Big)^2\Big)t\,dt \\
            &\quad +O(m_1^{-1}+m_2^{-1}).
\end{split}
  \end{align} 
\end{lem}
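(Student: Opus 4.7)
The plan is to apply the Euler--Maclaurin formula \eqref{EMfor} with $f(s)=\frac{1}{n}\mathfrak{B}_1(t_s)$, then change variables from $s$ to $t=t_s$ via $ds = 2nt\rho(t)\,dt$ (cf.\ \eqref{eq: dt_s}). The Euler--Maclaurin boundary corrections $\frac{1}{2}(f(m_1)+f(N-m_2-1))$ and the first-derivative corrections are controlled by the explicit form of $\mathfrak{B}_1$ together with $t_{m_1}\asymp\sqrt{m_1/n}$ and $t_{N-m_2}\asymp\sqrt{n/m_2}$ from Lemma~\ref{lem:tj}: the dominant term $-\frac{1}{24 t^2\rho}$ of $\mathfrak{B}_1$ gives $f(m_1)=O(m_1^{-1})$, while $f(N-m_2-1)=O(n^{-1})$, fitting the claimed error. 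The main task is then to evaluate
\begin{equation*}
\int_{t_{m_1}}^{t_{N-m_2}}\mathfrak{B}_1(t)\cdot 2t\rho(t)\,dt,
\end{equation*}
the shift from $t_{N-m_2-1}$ to $t_{N-m_2}$ contributing only a relative error of $O(m_2^{-1})$.

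A direct computation yields
\begin{equation*}
\mathfrak{B}_1(t)\cdot 2t\rho(t) = -\frac{1}{12t}-\frac{7}{48}\frac{\rho'(t)}{\rho(t)}-\frac{1}{16}t\frac{\rho''(t)}{\rho(t)}+\frac{5}{48}t\frac{(\rho'(t))^2}{\rho(t)^2}.
\end{equation*}
The key algebraic observation is the decomposition
\begin{equation*}
-\frac{1}{16}t\frac{\rho''(t)}{\rho(t)}+\frac{5}{48}t\frac{(\rho'(t))^2}{\rho(t)^2} = -\frac{1}{6}\Big(\frac{\rho''(t)}{\rho(t)}-\frac{5}{4}\Big(\frac{\rho'(t)}{\rho(t)}\Big)^2\Big)t + \frac{5}{48}\Big(\frac{\rho'(t)}{\rho(t)}\Big)'t,
\end{equation*}
valid because $-\frac{1}{16}=-\frac{1}{6}+\frac{5}{48}$ and $\frac{5}{48}=\frac{5}{24}-\frac{5}{48}$. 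The residual exact-derivative piece integrates by parts to $\frac{5}{48}t\rho'(t)/\rho(t)-\frac{5}{48}\log\rho(t)$, and adding this to the primitive $-\frac{7}{48}\log\rho(t)$ of $-\frac{7}{48}\rho'/\rho$ produces the coefficient $-\frac{1}{4}\log\rho(t)$ in the boundary, giving
\begin{align*}
\int_{t_{m_1}}^{t_{N-m_2}}\mathfrak{B}_1(t)\cdot 2t\rho(t)\,dt
&= \Big[-\frac{1}{12}\log t - \frac{1}{4}\log\rho(t) + \frac{5}{48}t\frac{\rho'(t)}{\rho(t)}\Big]_{t_{m_1}}^{t_{N-m_2}} \\
&\quad - \frac{1}{6}\int_{t_{m_1}}^{t_{N-m_2}}\Big(\frac{\rho''(t)}{\rho(t)}-\frac{5}{4}\Big(\frac{\rho'(t)}{\rho(t)}\Big)^2\Big)t\,dt.
\end{align*}

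To finish, one evaluates the boundary term $\frac{5}{48}t\rho'/\rho$ and extends the integral to $[0,\infty)$. Smoothness of $\rho$ at $0$ gives $t\rho'(t)/\rho(t)=O(t^2)$, so the lower endpoint contributes negligibly. At the upper endpoint, the growth \eqref{rho:growth}, equivalently the smoothness of $\tilde{\rho}$ at $0$, gives $\rho(t)=A t^{-4}(1+O(t^{-1}))$, so $t\rho'(t)/\rho(t)=-4+O(t^{-1})$, producing the exact constant $\frac{5}{48}\cdot(-4)=-\frac{5}{12}$ in the limit. Absolute integrability of $(\rho''/\rho-\frac{5}{4}(\rho'/\rho)^2)t$ on $[0,\infty)$, which follows from the decay $o(t^{-2})$ noted in the remark after the main theorems, permits replacement of the integration limits by $0$ and $\infty$. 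The main obstacle is the error bookkeeping: the subleading part of $\frac{5}{48}t\rho'/\rho$ at infinity and the tail of the integral both produce $O(t_{N-m_2}^{-1})$ contributions, a priori larger than the claimed $O(m_2^{-1})$; using the Taylor expansion of $\tilde{\rho}$ at $0$ one must verify that these two contributions, together with the Euler--Maclaurin correction $\frac{1}{2n}\mathfrak{B}_1(t_{N-m_2-1})$, cancel to leave only the sharper $O(m_2^{-1})$ remainder, with a symmetric but simpler analysis near $t_{m_1}$. Collecting the boundary logarithms, the constant $-\frac{5}{12}$, and the main integral over $[0,\infty)$ then yields the stated identity.
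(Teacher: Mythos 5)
Your proposal is correct and follows essentially the same route as the paper: Euler--Maclaurin with the change of variables \eqref{eq: dt_s}, the same expansion of $\mathfrak{B}_1(t)\cdot 2t\rho(t)$, the same integration by parts producing the antiderivative $-\tfrac{1}{12}\log t-\tfrac14\log\rho(t)+\tfrac{5}{48}t\rho'(t)/\rho(t)$, and the limit $t\rho'(t)/\rho(t)\to-4$ yielding the constant $-\tfrac{5}{12}$. The cancellation you flag as the main obstacle is not actually needed: since $t_{N-m_2}\asymp\sqrt{n/m_2}$ and $m_2=O(N^{1/3})$ in the relevant regime (the main proof takes $m_2=N^{1/6}$), one has $t_{N-m_2}^{-1}=O(\sqrt{m_2/n})=O(m_2^{-1})$, so the boundary remainder and the tail of the integral are already within the stated error (and likewise the Euler--Maclaurin endpoint term is $\tfrac1n\mathfrak{B}_1(t_{N-m_2})=O(m_2^{-1})$, not $O(n^{-1})$ as you wrote, but still admissible).
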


\begin{proof}
By using the Euler-Maclaurin formula \eqref{EMfor} and \eqref{eq: dt_s}, 
\begin{equation}\label{eq:intB1}
    \begin{split}
        &\quad \sum_{j=m_1}^{N-m_2-1}\frac{1}{n}\mathfrak{B}_1(t_j) 
        = \int_{t_{m_1}}^{t_{N-m_2}}\mathfrak{B}_1(t) 2t\rho(t)\,dt + O(n^{-1}(t_{m_1}^{-2}+t_{N-m_2}^2))\\
        &=\int_{t_{m_1}}^{t_{N-m_2}}
        \Big(-\frac{1}{24}\frac{1}{t^2\rho(t)} - \frac{7}{96}\frac{\rho'(t)}{t\rho(t)^2}-\frac{1}{32}\frac{\rho''(t)}{\rho(t)^2}+\frac{5}{96}\frac{(\rho'(t))^2}{\rho(t)^3} \Big)\cdot 2t\rho(t)\,dt + O(m_1^{-1}+m_2^{-1})\\
        &=\int_{t_{m_1}}^{t_{N-m_2}}
        \Big(-\frac{1}{12t} - \frac{7}{48}\frac{\rho'(t)}{\rho(t)}-\frac{1}{16}\frac{\rho''(t)}{\rho(t)}t + \frac{5}{48}\Big(\frac{\rho'(t)}{\rho(t)}\Big)^2 t \Big)\,dt + O(m_1^{-1}+m_2^{-1}).
    \end{split}
\end{equation}
Using integration by parts, we have 
\begin{equation*}
    \begin{split}
        \int_{t_{m_1}}^{t_{N-m_2}} \Big(\frac{\rho''(t)}{\rho(t)}-\Big(\frac{\rho'(t)}{\rho(t)}\Big)^2 \Big)t \,dt 
        & = \left[\frac{\rho'(t)}{\rho(t)}\,t\right]_{t=t_{m_1}}^{t_{N-m_2}} - \int_{t_{m_1}}^{t_{N-m_2}} \frac{\rho'(t)}{\rho(t)}\,dt.
    \end{split}
\end{equation*}
To obtain integrable terms for the integral in \eqref{eq:intB1}, we again use the above integration by parts formula and obtain
\begin{equation*}
    \begin{split}
        \sum_{j=m_1}^{N-m_2-1}\frac{1}{n}\mathfrak{B}_1(t_j) &= -\int_{t_{m_1}}^{t_{N-m_2}} \frac{1}{12t} + \frac{1}{4}\frac{\rho'(t)}{\rho(t)} + \frac{1}{6}\Big(\frac{\rho''(t)}{\rho(t)}-\frac{5}{4}\frac{\rho'(t)^2}{\rho(t)^2}\Big)t \,dt 
    + \frac{5}{48}\left[\frac{\rho'(t)}{\rho(t)}t\right]_{t=t_{m_1}}^{t_{N-m_2}}\! + O(m_1^{-1}+m_2^{-1}).
    \end{split}
\end{equation*}
Furthermore, it follows from  \eqref{rho:growth} that $$\frac{\rho'(t_{N-m_2})}{\rho(t_{N-m_2})}t_{N-m_2}=-4+O(t_{N-m_2}^{-1}),\qquad N\to \infty,$$
which completes the proof.
\end{proof}

We now derive the desired asymptotic behaviour of $\sum_{j=m_1}^{N-m_2-1}\log h_{j,N}$. 

\begin{lem}\label{lem:sum1}
As $N \to \infty$, we have 
\begin{align}
\begin{split}
\sum_{j=m_1}^{N-m_2-1}\log h_{j,N} &= -n^2 I[\mu] + \frac{1}{2}(N-m_1-m_2)\log\Big(\frac{2\pi}{n}\Big) - \frac{1}{2}nE[\mu] - 2n({m_1}+c)U_{\mu}(0) 
\\
&\quad -\Big((m_1+c)^2-\frac{1}{6}\Big)\log t_{m_1} +\Big((m_2+\alpha)^2-\frac{1}{6}\Big)\log t_{N-m_2}
\\ 
&\quad +\frac{3}{4}n^2(\tau_c(m_1)^2+\tau_\alpha(m_2)^2) - \frac{n}{2}\tau_c(m_1) (1-\log\rho(0))-\frac{n}{2}\tau_\alpha(m_2)(1-\log \tilde{\rho}(0)) \\
        & \quad  -\frac{5}{12} - \frac{1}{6}\int_{0}^{\infty} \Big(\frac{\rho''(t)}{\rho(t)}-\frac{5}{4}\Big(\frac{\rho'(t)}{\rho(t)}\Big)^2\Big)t\,dt
        \\
        &\quad       +O((m_1^{-1}+m_2^{-1})(\log N)^d)+O(N^{-\frac{1}{2}}(m_1^{\frac{5}{2}}+m_2^{\frac{5}{2}})).
\end{split}
\end{align} 
\end{lem}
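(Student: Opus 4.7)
The plan is to substitute the pointwise asymptotic expansion of $\log h_{j,N}$ supplied by Lemma~\ref{lem:logh2} and then evaluate each of the three resulting sums using the tools already assembled (Lemmas~\ref{lem:vsum1}--\ref{lem:vsum4}). Concretely, applying Lemma~\ref{lem:logh2} termwise yields
\begin{equation*}
\sum_{j=m_1}^{N-m_2-1}\log h_{j,N}
=-n\!\!\!\sum_{j=m_1}^{N-m_2-1}\!\!\!V_j(t_j)
+\tfrac{N-m_1-m_2}{2}\log\tfrac{2\pi}{n}
-\tfrac12\!\!\!\sum_{j=m_1}^{N-m_2-1}\!\!\!\log\rho(t_j)
+\tfrac1n\!\!\!\sum_{j=m_1}^{N-m_2-1}\!\!\!\mathfrak{B}_1(t_j)
+\sum_{j=m_1}^{N-m_2-1}\varepsilon_{N,1}(j),
\end{equation*}
and the pointwise error $\varepsilon_{N,1}(j)=O(j^{-2}(\log N)^d)+O((N-j)^{-2}(\log N)^d)$ sums to $O((m_1^{-1}+m_2^{-1})(\log N)^d)$, as demanded.

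Next, I would compute $-n\sum V_j(t_j)$ by invoking the Euler--Maclaurin identity~\eqref{eq:EMVj}, whose right-hand side is handled by Lemma~\ref{lem:vsum1} (for the integral) and Lemma~\ref{lem:vsum2} (for the boundary and first derivative terms). This contributes the leading $-n^2I[\mu]$, the $-2n^2\tau_c(m_1)U_\mu(0)$ and $+nU_\mu(0)$ pieces (which combine to $-2n(m_1+c)U_\mu(0)$ after using $n\tau_c(m_1)=m_1+c+\tfrac12$), together with $\tfrac34 n^2(\tau_c(m_1)^2+\tau_\alpha(m_2)^2)$, boundary halves $\mp \tfrac{n}{2}\tau_c(m_1),\pm\tfrac{n}{2}\tau_\alpha(m_2)$, and all the contributions of the form $\log t_{m_1}$, $\log t_{N-m_2}$ coming with coefficients derivable from $n^2\tau_c(m_1)^2$, $n\tau_c(m_1)$ and the $\pm\tfrac{1}{6n}\log t$ derivative terms. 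The remaining $-\tfrac12\sum\log\rho(t_j)$ and $\tfrac1n\sum\mathfrak{B}_1(t_j)$ are then substituted directly from Lemmas~\ref{lem:vsum3} and~\ref{lem:vsum4}; the latter already delivers the factor $-\tfrac{5}{12}-\tfrac16\int_0^{\infty}(\rho''/\rho-\tfrac54(\rho'/\rho)^2)\,t\,dt$ verbatim.

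The main technical obstacle is the bookkeeping of the four separate sources contributing to the coefficients of $\log t_{m_1}$ and $\log t_{N-m_2}$, and the verification that two unwanted $O(1)$ families cancel. Specifically, summing the $\log t_{m_1}$ coefficients from Lemma~\ref{lem:vsum1} ($-(m_1+c+\tfrac12)^2$), from the boundary part of Lemma~\ref{lem:vsum2} ($+(m_1+c+\tfrac12)$), from its derivative part ($-\tfrac16$), and from Lemma~\ref{lem:vsum4} ($+\tfrac{1}{12}$), one must verify the identity
\begin{equation*}
-(m_1+c+\tfrac12)^2+(m_1+c+\tfrac12)-\tfrac16+\tfrac{1}{12}=-\bigl((m_1+c)^2-\tfrac16\bigr),
\end{equation*}
and similarly for $\log t_{N-m_2}$, where the Lemma~\ref{lem:vsum3} contribution $-2n\tau_\alpha(m_2)\log t_{N-m_2}$ also enters. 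The second cancellation is that the boundary pieces $\pm\tfrac14\log\rho(t_{m_1})$ and $\mp\tfrac14\log\rho(t_{N-m_2})$ appearing in Lemmas~\ref{lem:vsum3} and~\ref{lem:vsum4} are opposite in sign and cancel exactly. Once these two algebraic checks are done, collecting the leftover $O((m_1^{-1}+m_2^{-1})(\log N)^d)+O(N^{-1/2}(m_1^{5/2}+m_2^{5/2}))$ errors from the three lemmas completes the proof.
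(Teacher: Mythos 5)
Your proposal is correct and follows essentially the same route as the paper: substitute the expansion of Lemma~\ref{lem:logh2} termwise, sum the pointwise errors, evaluate $-n\sum_j V_j(t_j)$ via \eqref{eq:EMVj} together with Lemmas~\ref{lem:vsum1}--\ref{lem:vsum2}, and insert Lemmas~\ref{lem:vsum3}--\ref{lem:vsum4} for the remaining two sums. The coefficient bookkeeping you single out (the identity $-(m_1+c+\tfrac12)^2+(m_1+c+\tfrac12)-\tfrac16+\tfrac1{12}=-((m_1+c)^2-\tfrac16)$, its $\alpha$-counterpart including the $-2n\tau_\alpha(m_2)\log t_{N-m_2}$ term from Lemma~\ref{lem:vsum3}, and the exact cancellation of the $\pm\tfrac14\log\rho(t_{m_1})$, $\mp\tfrac14\log\rho(t_{N-m_2})$ boundary pieces) is precisely the ``straightforward computation'' the paper leaves implicit, and it checks out.
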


\begin{proof}
Recall that the error term $\varepsilon_{N,1}(j)$ in Lemma \ref{lem:logh2} has a bound $O(j^{-2}(\log N)^{d})+O((N-j)^{-2}(\log N)^{d})$ for some constant $d$ and the $O$-constant can be taken uniformly for all $j$ with $m_1 \leq j \leq N-m_2$. Therefore we have 
\begin{equation*}
    \sum_{j=m_1}^{N-m_2-1} \varepsilon_{N,1}(j) = O((m_1^{-1}+m_2^{-1})(\log N)^d).
\end{equation*}
On the other hand, by combining \eqref{eq:EMVj} with Lemmas \ref{lem:vsum1} and \ref{lem:vsum2}, we have 
\begin{equation*}
    \begin{split}
        -\sum_{j=m_1}^{N-m_2-1} \totalN V_j(t_j) 
        &= -\totalN^2 I[\mu] 
        -2\totalN^2\tau_c(m_1)U_{\mu}(0) + \totalN^2\tau_c(m_1)^2\Big( \frac{3}{4}-\log t_{m_1}\Big) + \totalN^2\tau_{\alpha}(m_2)^2\Big(\frac{3}{4}+\log t_{N-m_2}\Big)
        \\
        &\quad  +  nU_{\mu}(0) - \totalN \tau_c(m_1)\Big( \frac{1}{2} - \log t_{m_1}\Big) + \totalN \tau_{\alpha}(m_2)\Big( \frac{1}{2} + \log t_{N-m_2}\Big)
        \\
        & \quad - \frac{1}{6}\Big( \log t_{m_1} - \log t_{N-m_2}\Big) 
        + O(N^{-\frac{1}{2}}(m_1^{\frac{5}{2}}+m_2^{\frac{5}{2}})) + O(m_1^{-2}+m_2^{-2}).
    \end{split}
\end{equation*}
Then straightforward computations using Lemma \ref{lem:logh2}, together with this asymptotic behaviour and those in Lemmas \ref{lem:vsum3} and \ref{lem:vsum4} complete the proof. 
\end{proof}

Recall that we have treated three different regimes (see Figure~\ref{Fig_sum division} and the text above) and in Lemma~\ref{lem:sum1}, we derive the asymptotic behaviour for the Case 1. 
In the following two lemmas, we show the counterparts for the Case 2 and Case 3, respectively.

\begin{lem}\label{lem:sum2}
As $N \to \infty$, we have 
\begin{align}
\begin{split}
  \sum_{j=0}^{m_1-1} \log h_{j,N} &= 2m_1 \totalN U_\mu(0) - \Big( \frac{1}{2}m_1^2 + \frac{1}{2}m_1 + cm_1 \Big)\log (\totalN \rho(0)) 
        \\
        &\quad + \Big(\frac{1}{2}(m_1+c)^2 -\frac{1}{12} \Big)\log(m_1+c) - \frac{3}{4}(m_1+c)^2 +\frac{\log(2\pi)}{2}(m_1+c)
        \\
        &\quad +\zeta'(-1) - \log G(c+1) + O(m_1^{-2}) + O(N^{-\frac{1}{2}}m_1^{\frac{5}{2}}(\log N)^3).
\end{split}
\end{align} 
\end{lem}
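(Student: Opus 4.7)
The plan is to substitute the pointwise asymptotic expansion from Lemma~\ref{lem:logh1} into the sum term by term, then evaluate each resulting elementary sum. Writing
\begin{equation*}
\log h_{j,N} = 2n U_\mu(0) - (j+c+1)\log(n\rho(0)) + \log \Gamma(j+c+1) + \varepsilon_{N,2}(j),
\end{equation*}
the contributions from the first two pieces are immediate: $\sum_{j=0}^{m_1-1} 2n U_\mu(0) = 2 m_1 n U_\mu(0)$, and
\begin{equation*}
\sum_{j=0}^{m_1-1} (j+c+1) = \tfrac{1}{2}m_1^2 + \tfrac{1}{2}m_1 + c m_1,
\end{equation*}
which produce the first two lines of the claimed formula.

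The key step is handling $\sum_{j=0}^{m_1-1} \log \Gamma(j+c+1)$. I would use the defining recursion $G(z+1)=\Gamma(z)G(z)$ of the Barnes $G$-function in \eqref{def of Barnes G} to telescope:
\begin{equation*}
\sum_{j=0}^{m_1-1} \log \Gamma(j+c+1) = \log G(m_1+c+1) - \log G(c+1).
\end{equation*}
Then apply the asymptotic expansion \eqref{Barnes G asymp} with $z=m_1+c$ (valid since $m_1\to\infty$), which gives
\begin{equation*}
\log G(m_1+c+1) = \tfrac{1}{2}(m_1+c)^2\log(m_1+c) - \tfrac{3}{4}(m_1+c)^2 + \tfrac{\log(2\pi)}{2}(m_1+c) - \tfrac{1}{12}\log(m_1+c) + \zeta'(-1) + O(m_1^{-2}).
\end{equation*}
Collecting these three pieces matches the main terms in the statement exactly.

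Finally, for the error sum, using $\varepsilon_{N,2}(j) = O(N^{-1/2}(j+1)^{3/2}(\log N)^3)$ uniformly in $j$, a crude bound by the integral gives
\begin{equation*}
\sum_{j=0}^{m_1-1} \varepsilon_{N,2}(j) = O\bigl(N^{-1/2}(\log N)^3\bigr) \sum_{j=0}^{m_1-1} (j+1)^{3/2} = O\bigl(N^{-1/2} m_1^{5/2} (\log N)^3\bigr),
\end{equation*}
which combines with the $O(m_1^{-2})$ coming from the Barnes $G$ expansion to produce the stated error. There is no real obstacle here; the argument is a direct assembly once the telescoping identity is used to convert the sum of log-gammas into a single Barnes $G$ value. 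The most delicate bookkeeping is simply verifying that the coefficients of $\log(m_1+c)$, the polynomial pieces in $m_1+c$, and the constant $\zeta'(-1) - \log G(c+1)$ line up correctly after substitution.
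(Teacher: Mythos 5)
Your proposal is correct and follows essentially the same route as the paper: substitute the expansion from Lemma~\ref{lem:logh1}, telescope $\sum_j \log\Gamma(j+c+1)$ into $\log\bigl(G(m_1+c+1)/G(c+1)\bigr)$ via the recursion \eqref{def of Barnes G}, apply the Barnes $G$ asymptotics \eqref{Barnes G asymp}, and bound the error sum crudely. All coefficients check out, so nothing further is needed.
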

\begin{proof}
It follows from Lemma \ref{lem:logh1} that
   \begin{equation*}
       \begin{split}
           &\sum_{j=0}^{m_1-1} \log h_{j,N} 
            = \sum_{j=0}^{m_1-1} \Big( 2\totalN U_{\mu}(0) - (j+c+1)\log(\totalN\rho(0))+\log \Gamma(j+c+1)+\varepsilon_{N,2}(j)\Big)\\
           & = 2m_1 \totalN U_{\mu}(0) - \Big(\frac{1}{2}m_1(m_1-1)+ (c+1)m_1\Big)\log(\totalN\rho(0)) + \log\Big(\frac{G(m_1+c+1)}{G(c+1)}\Big) + O(N^{-\frac{1}{2}}m_1^{\frac{5}{2}}(\log N)^3).
       \end{split}
   \end{equation*}
Then by using the asymptotic expansion of the Barnes $G$-function \eqref{Barnes G asymp},  we obtain the desired result.
\end{proof}

\begin{lem}\label{lem:sum3}
   As $N\to \infty$, we have
   \begin{align}
    \begin{split}
     \sum_{j=N-m_2}^{N-1} \log h_{j,N} &= -\Big(\frac{1}{2}m_2^2 +\frac{1}{2}m_2+\alpha m_2 \Big)\log (n\tilde{\rho}(0)) +
    \Big(\frac{1}{2}(m_2+\alpha)^2 -\frac{1}{12}\Big)\log(m_2+\alpha)
    \\
    &\quad - \frac{3}{4}(m_2+\alpha)^2 
    + \frac{\log(2\pi)}{2}(m_2+\alpha) + \zeta'(-1) - \log G(\alpha+1) 
    \\
    &\quad + O(m_2^{-2})+O(N^{-\frac{1}{2}}m_2^{\frac{5}{2}}(\log N)^3).
    \end{split}
   \end{align}
\end{lem}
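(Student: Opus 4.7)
The proof is structurally identical to that of Lemma~\ref{lem:sum2}, with Lemma~\ref{lem:logh3} playing the role of Lemma~\ref{lem:logh1} and with the reindexing $k = N-j$ converting the tail sum at $j \approx N$ to a head sum in $k$. I would proceed as follows.

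First, I would set $k = N-j$, so that
\begin{equation*}
    \sum_{j=N-m_2}^{N-1} \log h_{j,N} = \sum_{k=1}^{m_2} \log h_{N-k, N}.
\end{equation*}
Substituting the asymptotic expansion from Lemma~\ref{lem:logh3} with this index, the sum breaks into three pieces: the linear term
\begin{equation*}
    -\log(n\tilde\rho(0))\sum_{k=1}^{m_2}(k+\alpha) = -\Big(\tfrac{1}{2}m_2^2 + \tfrac{1}{2}m_2 + \alpha m_2\Big)\log(n\tilde\rho(0)),
\end{equation*}
a sum of log-Gammas $\sum_{k=1}^{m_2} \log \Gamma(k+\alpha)$, and the collected errors $\sum_{k=1}^{m_2} \varepsilon_{N,3}(N-k) = O(N^{-1/2} m_2^{5/2}(\log N)^3)$, using the uniform bound $\varepsilon_{N,3}(N-k) = O(N^{-1/2} k^{3/2} (\log N)^3)$ from Lemma~\ref{lem:logh3}.

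Next, I would use the Barnes $G$-function recursion $\Gamma(z) = G(z+1)/G(z)$ (equation~\eqref{def of Barnes G}) to telescope the log-Gamma sum:
\begin{equation*}
    \sum_{k=1}^{m_2} \log \Gamma(k+\alpha) = \log G(m_2+\alpha+1) - \log G(\alpha+1).
\end{equation*}
Since $m_2 = \Theta(N^\epsilon)$ with $\epsilon>0$, I can apply the asymptotic formula~\eqref{Barnes G asymp} with $z = m_2 + \alpha \to \infty$ to obtain
\begin{align*}
    \log G(m_2+\alpha+1) &= \frac{(m_2+\alpha)^2}{2}\log(m_2+\alpha) - \frac{3}{4}(m_2+\alpha)^2 + \frac{\log(2\pi)}{2}(m_2+\alpha) \\
    &\quad - \frac{1}{12}\log(m_2+\alpha) + \zeta'(-1) + O(m_2^{-2}).
\end{align*}
Combining the three pieces and rearranging yields the stated formula.

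This argument is essentially mechanical and poses no real obstacle; the only point requiring a little care is verifying that the error estimate $O(m_2^{-2})$ in the Barnes $G$-function expansion is legitimate, which follows from the hypothesis $m_2 \in \Theta(N^\epsilon)$, and that the summation of the individual errors $\varepsilon_{N,3}(N-k)$ gives exactly $O(N^{-1/2} m_2^{5/2}(\log N)^3)$ via the crude bound $\sum_{k=1}^{m_2} k^{3/2} = O(m_2^{5/2})$. No additional input is needed beyond Lemma~\ref{lem:logh3} and the standard Barnes $G$-function asymptotics.
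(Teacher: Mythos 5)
Your proposal is correct and follows essentially the same route as the paper: substitute the expansion from Lemma~\ref{lem:logh3}, telescope the log-Gamma sum into $\log\bigl(G(m_2+\alpha+1)/G(\alpha+1)\bigr)$ via the recursion \eqref{def of Barnes G}, sum the errors, and apply the Barnes $G$ asymptotics \eqref{Barnes G asymp} with $z=m_2+\alpha$. All the bookkeeping (the arithmetic sum $\sum_{k=1}^{m_2}(k+\alpha)$, the $O(m_2^{-2})$ remainder, and the $O(N^{-1/2}m_2^{5/2}(\log N)^3)$ error collection) checks out.
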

\begin{proof}
  By Lemma \ref{lem:logh3}, we have
   \begin{equation*}
       \begin{split}
           \sum_{j=N-m_2}^{N-1} \log h_{j,N} 
           & = \sum_{j=N-m_2}^{N-1} 
            \Big ( -(N-j+\alpha)\log(\totalN \tilde{\rho}(0)) +\log\Gamma(N-j+\alpha)+ \varepsilon_{N,3}(j) \Big) \\
            & = - \Big(\frac{1}{2}m_2^2 +\Big(\alpha+\frac{1}{2}\Big) m_2 \Big)\log(\totalN\tilde{\rho}(0))+ \log\Big(\frac{G(m_2+\alpha+1)}{G(\alpha+1)}\Big)+ O(N^{-\frac{1}{2}}m_2^{\frac{5}{2}}(\log N)^3).
       \end{split}
   \end{equation*}
Then again, the asymptotic behaviour \eqref{Barnes G asymp} of the Barnes-$G$ function completes the proof. 
\end{proof}

We are now ready to complete the proof of Theorem~\ref{Thm_free energy det}. 

\begin{proof}[Proof of Theorem~\ref{Thm_free energy det}] 
We choose $m_1 = m_2 = N^{\frac{1}{6}}$ so that all error terms 
$$O(m_j^{-1}(\log N)^d), \qquad O\big(N^{-\frac{1}{2}}m_j^{\frac{5}{2}}), \qquad  O(m_j^{-2}), \qquad O(N^{-\frac{1}{2}}m_j^{\frac{5}{2}}(\log N)^3),\qquad (j=1,2)$$
in Lemmas \ref{lem:sum1}, \ref{lem:sum2}, and \ref{lem:sum3} are $O(N^{-\frac{1}{12}}(\log N)^d)$.
Combining Lemmas \ref{lem:sum1}, \ref{lem:sum2}, and \ref{lem:sum3} with the asymptotic expansion for $t_{m_1}, t_{N-m_2}$ (Lemma~\ref{lem:tj}) and the numerology \eqref{eq: tau} for $\tau_c(m_1), \tau_\alpha(m_2)$, it follows that 
\begin{equation*}
    \begin{split}
        \sum_{j=0}^{N-1}\log h_{j,N} &= -n^2 I[\mu] -\frac{1}{2} N\log n  
        -\frac{n}{2}E[\mu]-2ncU_{\mu}(0)
        +\frac{N}{2}\log 2\pi +
        \frac{1}{2}\Big(\alpha^2+c^2 
        -\frac{1}{3}\Big)\log n\\
        &\quad +\frac{1}{2}\Big(c^2+c+\frac{1}{3}\Big)\log \rho(0)+\frac{1}{2}\Big(\alpha^2 + \alpha + \frac{1}{3}\Big)\log \tilde{\rho}(0) + \frac{1}{2}(\alpha+c)\log 2\pi \\
        &\quad - \frac{5}{12} - \frac{1}{6}\int_{0}^{\infty} \Big(\frac{\rho''(t)}{\rho(t)}-\frac{5}{4}\Big(\frac{\rho'(t)}{\rho(t)}\Big)^2\Big)t\,dt
        +2\zeta'(-1)-\log (G(c+1)G(\alpha+1))+O(N^{-\frac{1}{12}}(\log N)^d)
    \end{split}
\end{equation*}
for some $d>0$. 
To see this, we first note that the terms involving $m_1$ in $\sum_{j=0}^{N-1}\log h_{j,N}$ can be simplified as 
\begin{align*}
\frac{1}{2}m_1\log{n} &-\frac12 \Big((m_1+c)^2-\frac{1}{6}\Big)\log\frac{m_1+c+1/2}{n\rho(0)} + \frac{3}{4}\Big( m_1+c+\frac12\Big)^2 - \frac{1}{2}\Big( m_1+c+\frac12\Big)  (1-\log\rho(0)) \\ 
&- \Big( \frac{1}{2}m_1^2 + \frac{1}{2}m_1 + cm_1 \Big)\log (\totalN \rho(0))  + \Big(\frac{1}{2}(m_1+c)^2 -\frac{1}{12} \Big)\log(m_1+c) - \frac{3}{4}(m_1+c)^2
\end{align*}
up to error terms $O(N^{-\frac{1}{12}}(\log N)^d)$. 
Rearranging terms, we find that the coefficient of $\log n$ in the above is $\frac12 c^2 - \frac1{12}$ and the coefficient of $\log \rho(0)$ is $\frac{1}{2}(c^2+c+\frac{1}{3}).$
Using Taylor expansion of the logarithm function, the remaining terms
\begin{align*}
-\frac12 \Big((m_1+c)^2-\frac{1}{6}\Big)\log\Big(1+ \frac{1}{2(m_1+c)}\Big) + \frac{3}{4}\Big( m_1+c+\frac12\Big)^2 - \frac{1}{2}\Big( m_1+c+\frac12\Big) - \frac{3}{4}(m_1+c)^2
\end{align*}
turn out to be $O(m_1^{-1})$.
A similar asymptotic analysis can be applied to the terms involving $m_2$ in $\sum_{j=0}^{N-1}\log h_{j,N}$.
Then the theorem follows from \eqref{ZN random normal symplectic} and \eqref{log N!}. 
\end{proof}

\subsection{Proof of Theorem~\ref{Thm_free energy Pfaff}}

In this subsection, we prove Theorem~\ref{Thm_free energy Pfaff}. By \eqref{ZN random normal symplectic}, the proof parallels its determinantal counterpart, except that we only need to account for the orthogonal norms of odd degrees. Therefore, we keep the presentation brief to avoid unnecessary repetition.

As a counterpart of Lemma~\ref{lem:sum1}, we have the following. 

\begin{lem}  \label{Lem_4.7}
    As $N \to \infty$, we have 
    \begin{align}
    \begin{split} \label{eq log h sum symp case2}
         &\quad \sum_{j=m_1}^{N-m_2-1}\log h_{2j+1,2N} 
         \\
            &= -2\totalN^2 I[\mu] + \frac{1}{2}(N-m_1-m_2)\log\Big(\frac{\pi}{\totalN}\Big) - \frac{1}{2}\totalN E[\mu] - n(4m_1+4c+1)U_{\mu}(0)\\
            & \quad -\Big( \frac{(4m_1+4c+1)^2}{8} -\frac{5}{24} \Big)\log \tilde{t}_{2m_1+1} + \Big(\frac{(4m_2+4\alpha+1)^2}{8} - \frac{5}{24} \Big)\log \tilde{t}_{2(N-m_2)+1} \\
            &\quad + \frac{3}{2}\totalN^2 \big(\tilde{\tau}_c(2m_1+1)^2 + \tilde{\tau}_\alpha(2m_2)^2\big) - \frac{\totalN}{2}\tilde{\tau}_c(2m_1+1)(2-\log\rho(0))  +  \frac{\totalN}{2}\tilde{\tau}_{\alpha}(2m_2)\log\tilde{\rho}(0)
            \\
            &\quad  + \frac{1}{8}\log\Big(\frac{\tilde{\rho}(0)
            }{\rho(0)}\Big)  - \frac{5}{24} - \frac{1}{12}\int_{0}^{\infty}\Big( \frac{\rho''(t)}{\rho(t)}-\frac{5}{4}\Big(\frac{\rho'(t)}{\rho(t)}\Big)^2 \Big)t\,dt \\
            &\quad + O((m_1^{-1}+m_2^{-1})(\log N)^d) + O(N^{-\frac{1}{2}}(m_1^{\frac{5}{2}}+m_2^{\frac{5}{2}})),
    \end{split}
    \end{align}
\end{lem}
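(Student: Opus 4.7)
The plan is to follow the proof of Lemma~\ref{lem:sum1} line by line, with all ingredients replaced by their symplectic analogues. Starting from the second part of Lemma~\ref{lem:logh2}, I expand
\begin{equation*}
\log h_{2j+1,2N} = -2n\widetilde{V}_{2j+1}(\tilde{t}_{2j+1}) + \tfrac{1}{2}\bigl(\log(2\pi)-\log(2n)-\log\rho(\tilde{t}_{2j+1})\bigr) + \tfrac{1}{2n}\mathfrak{B}_1(\tilde{t}_{2j+1}) + \tilde{\varepsilon}_{N,1}(2j+1)
\end{equation*}
and sum over $j = m_1, \ldots, N-m_2-1$. The constant prefactor yields $\tfrac12(N-m_1-m_2)\log(\pi/n)$, while the error terms $\tilde{\varepsilon}_{N,1}(2j+1) = O(j^{-2}(\log N)^d)+O((N-j)^{-2}(\log N)^d)$ telescope to the contribution $O((m_1^{-1}+m_2^{-1})(\log N)^d)$ appearing in the statement.

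The three remaining sums are analyzed by the Euler-Maclaurin formula \eqref{EMfor} in direct parallel with Lemmas~\ref{lem:vsum1}--\ref{lem:vsum4}. The crucial observation is that, viewed as a function of real $s$, the critical point $\tilde{t}_{2s+1}$ satisfies
\begin{equation*}
\frac{d\tilde{t}_{2s+1}}{ds} = \frac{1}{2n\,\tilde{t}_{2s+1}\,\rho(\tilde{t}_{2s+1})},
\end{equation*}
structurally identical to \eqref{eq: dt_s}, since $\tilde{\tau}_c(2s+1) = (4s+4c+3)/(4n)$ satisfies $d\tilde{\tau}_c(2s+1)/ds = 1/n$. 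Combined with the identity $\widetilde{V}_{2j+1}(r) = -2U_\mu(r) - 2\tilde{\tau}_c(2j+1)\log r$, this lets me transcribe the computations of Lemmas~\ref{lem:vsum1} and \ref{lem:vsum2} under the replacements $(t_s, V_s, \tau_c(s), \tau_\alpha(\cdot)) \mapsto (\tilde{t}_{2s+1}, \widetilde{V}_{2s+1}, \tilde{\tau}_c(2s+1), \tilde{\tau}_\alpha(\cdot))$, producing the $-2n^2 I[\mu]$ leading term, the $U_\mu(0)$ contribution, and the boundary contributions at $\tilde{t}_{2m_1+1}$ and $\tilde{t}_{2(N-m_2)+1}$ (the latter rewritten via $1-\tilde{\tau}_c(2(N-m_2)+1) = \tilde{\tau}_\alpha(2m_2)$). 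The sum $-\tfrac12\sum\log\rho(\tilde{t}_{2j+1})$ is handled as in Lemma~\ref{lem:vsum3}, yielding the $-\tfrac12 n E[\mu]$ contribution together with boundary corrections producing the terms $\tfrac{n}{2}\tilde{\tau}_c(2m_1+1)\log\rho(0)$ and $\tfrac{n}{2}\tilde{\tau}_\alpha(2m_2)\log\tilde{\rho}(0)$. The sum $\tfrac{1}{2n}\sum\mathfrak{B}_1(\tilde{t}_{2j+1})$ is handled as in Lemma~\ref{lem:vsum4}; crucially, the prefactor $\tfrac{1}{2n}$ (rather than $\tfrac{1}{n}$) halves the final integral relative to the determinantal case, giving the coefficient $\tfrac{1}{12}$ in front of the $\rho',\rho''$-integral and the constant $-\tfrac{5}{24}$ appearing in \eqref{eq log h sum symp case2}.

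The principal technical task is book-keeping: tracking the index shift $j\mapsto 2j+1$, the substitution $\tau_c \to \tilde{\tau}_c(2\cdot+1)$, and the various factor-of-two adjustments arising from the measure $e^{-2NQ}$ versus $e^{-NQ}$ (so that the Laplace prefactor is $2n$ rather than $n$). Because these modifications act locally and systematically on each term, the calculation is essentially mechanical once the parallel is established; the Euler-Maclaurin remainders are of the same order as in the determinantal analysis, producing the total error $O((m_1^{-1}+m_2^{-1})(\log N)^d) + O(N^{-1/2}(m_1^{5/2}+m_2^{5/2}))$. Collecting terms according to their asymptotic orders then yields \eqref{eq log h sum symp case2}.
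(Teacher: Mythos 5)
Your proposal is correct and follows essentially the same route as the paper's proof: termwise application of the second part of Lemma~\ref{lem:logh2}, Euler--Maclaurin summation driven by the identity $d\tilde{t}_{2s+1}/ds = 1/(2n\,\tilde{t}_{2s+1}\rho(\tilde{t}_{2s+1}))$, and a transcription of Lemmas~\ref{lem:vsum1}--\ref{lem:vsum4} with the substitutions $(t_s,V_s,\tau_c,\tau_\alpha)\mapsto(\tilde{t}_{2s+1},\widetilde{V}_{2s+1},\tilde{\tau}_c(2\cdot+1),\tilde{\tau}_\alpha)$. You correctly identify the three structural points that make the adaptation mechanical (the $\log(\pi/n)$ prefactor, the halving of the $\mathfrak{B}_1$ integral from the $1/(2n)$ weight, and the relation $\tilde{\tau}_c(2(N-m_2)+1)=1-\tilde{\tau}_\alpha(2m_2)$), which is exactly what the paper's argument rests on.
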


\begin{rem}
In contrast to Lemma~\ref{lem:sum1}, one might initially think that the right-hand side of \eqref{eq log h sum symp case2} (particularly the third line) appears asymmetric with respect to $c$ and $\alpha$.
Nevertheless, it can indeed be shown that they exhibit symmetry in the large-$N$ asymptotic expansion.
This symmetry becomes evident when tracking the asymptotic expansion of the terms involving involving $\tilde{\tau}_c(2m_1+1)$ and $\tilde{\tau}_\alpha(2m_2)$. 
Using Lemma~\ref{lem:tildetj}, one can compare the asymptotic behaviours of the terms 
$$
-\Big( \frac{(4m_1+4c+1)^2}{8} -\frac{5}{24} \Big) \frac12 \log \tilde{\tau}_c(2m_1+1)  + \frac{3}{2}\totalN^2 \tilde{\tau}_c(2m_1+1)^2 - n \tilde{\tau}_c(2m_1+1) 
$$
and 
$$
- \Big(\frac{(4m_2+4\alpha+1)^2}{8} - \frac{5}{24} \Big)\frac12 \log \tilde{\tau}_\alpha(2m_2)  + \frac{3}{2}\totalN^2 \ \tilde{\tau}_\alpha(2m_2)^2  .
$$
Then, it can be verified that the coefficient of $m_1$ in the first equation is $2c + 1/2$, whereas the coefficient of $m_2$ in the second equation is $2\alpha + 1/2$.
Furthermore, these terms cancel out with those in Lemma~\ref{Lem_4.9} below.
\end{rem}
 
\begin{proof}[Proof of Lemma~\ref{Lem_4.7}]
Recall the definition \eqref{eq: symplectic V} of $\widetilde{V},$ the symplectic counterpart of $V$ and the definition \eqref{eq: symplectic critical point} of critical points. 
By using the Euler-Maclaurin formula \eqref{EMfor}, we have  
\begin{equation}
\begin{split}
    \sum_{j=m_1}^{N-m_2-1}\widetilde{V}_{{2j+1}}(\tilde{t}_{2j+1}) 
    & = \totalN \int_{\tilde{t}_{2m_1+1}}^{\tilde{t}_{2(N-m_2)+1}}(-2U_{\mu}(t) - 2(1-F(t))\log t)\cdot 2t \rho(t)\,dt 
    \\
    &\quad + \frac{1}{2}\left(\widetilde{V}_{2m_1+1}(\tilde{t}_{2m_1+1}) - \widetilde{V}_{2(N-m_2)+1}(\tilde{t}_{2(N-m_2)+1})\right) 
    \\
    &\quad + \frac{1}{12}\Big(\d_x (\widetilde{V}_{2x+1}(\tilde{t}_{2x+1}))\big|_{x=N-m_2}-\d_x (\widetilde{V}_{2x+1}(\tilde{t}_{2x+1}))\big|_{x=m_1} \Big) + \tilde{\epsilon}_N,
    \end{split}
\end{equation}  
where $\tilde{\epsilon}_N = O(N^{-3}t_{m_1}^{-4})$.
Adjusting the proof of Lemma~\ref{lem:vsum1} to the symplectic setting, we have  
\begin{align*}
-n  \int_{\tilde{t}_{2m_1+1}}^{\tilde{t}_{2(N-m_2)+1}} (2U_{\mu}(t) + 2(1-F(t))\log t)\cdot 2t\rho(t)\,dt  =  n\Big( I[\mu] + \tilde{\mathcal{E}}_{1,1}(m_1)+ \tilde{\mathcal{E}}_{1,2}(m_2) \Big)+O(N^{-\frac{3}{2}}(m_1^{\frac{5}{2}}+m_2^{\frac{5}{2}})), 
\end{align*} 
as $N\to \infty$, where 
    \begin{equation}\label{def:tcale}
    \begin{split}
        \tilde{\mathcal{E}}_{1,1}(m_1)
        &=2\tilde{\tau}_c(2m_1+1)U_{\mu}(0) - \frac{3}{4}\tilde{\tau}_c(2m_1+1)^{2} + \tilde{\tau}_c(2m_1+1)^2\log \tilde{t}_{2m_1+1} , 
        \\
        \tilde{\mathcal{E}}_{1,2}(m_2) &=-\frac{3}{4} \tilde{\tau}_{\alpha}(2m_2)^2 - \tilde{\tau}_{\alpha}(2m_2)^2 \log \tilde{t}_{2(N-m_2)+1}.
    \end{split}
\end{equation}
Also, by a similar method used in the proof of Lemma \ref{lem:vsum2}, we have 
\begin{align*}
 &\quad \frac{1}{2} \big(\widetilde{V}_{2m_1+1}(\tilde{t}_{2m_1+1})- \widetilde{V}_{2(N-m_2)+1}(\tilde{t}_{2(N-m_2)+1})\big)  
 \\
&= -U_{\mu}(0) + \frac{1}{2}\tilde{\tau}_c(2m_1+1)\Big( 1- 2\log \tilde{t}_{2m_1+1} \Big)      -\frac{1}{2}\tilde{\tau}_\alpha(2m_2) \Big ( 1+2 \log \tilde{t}_{2(N-m_2)+1} \Big)  + O(N^{-\frac{3}{2}}(m_1^{\frac{3}{2}}+m_2^{\frac{3}{2}}))
\end{align*}
and 
\begin{align*}
\frac{1}{12}\big(\d_x (\widetilde{V}_{2x+1}(\tilde{t}_{2x+1}))\big|_{x=N-m_2} - \d_x(\widetilde{V}_{2x+1}(\tilde{t}_{2x+1}))\big|_{x=m_1}\big) =-\frac{1}{6\totalN}\left(\log \tilde{t}_{2(N-m_2)+1} - \log \tilde{t}_{2m_1+1} \right)
\end{align*}
as $N \to \infty.$ 
Furthermore, by repeating the computations in Lemmas \ref{lem:vsum3} and \ref{lem:vsum4}, we have 
 \begin{equation*}
        \begin{split}
            \sum_{j=m_1}^{N-m_2-1}\frac{1}{2}\log\rho(\tilde{t}_{2j+1}) 
            &= \frac{1}{2}nE[\mu] - \frac{1}{2}\totalN \tilde{\tau}_c(2m_1+1)\log \rho(0) - \frac{1}{2}\totalN \tilde{\tau}_\alpha (2m_2)\left(-2+\log\tilde{\rho}(0)-4\log\tilde{t}_{2(N-m_2)+1}\right)\\
            &\quad + \frac{1}{4}\left(\log\rho(\tilde{t}_{2m_1+1}) - \log\rho(\tilde{t}_{2(N-m_2)+1})\right) + O(m_1^{-1}+m_2^{-1})+O(N^{-\frac{1}{2}}(m_1^{\frac{3}{2}}+m_2^{\frac{3}{2}}))
        \end{split}
    \end{equation*}
and 
    \begin{equation*}
        \begin{split}
            \sum_{j=m_1}^{N-m_2-1}\frac{1}{2\totalN}\mathfrak{B}_1(\tilde{t}_{2j+1})&= - \frac{1}{24}\log \Big(\frac{\tilde{t}_{2(N-m_2)+1}}{\tilde{t}_{2m_1+1}}\Big)-\frac{1}{8}\log\Big(\frac{\rho(\tilde{t}_{2(N-m_2)+1})}{\rho(\tilde{t}_{2m_1+1})}\Big)-\frac{5}{24} \\
            &\quad - \frac{1}{12}\int_{0}^{\infty}\Big( \frac{\rho''(t)}{\rho(t)}-\frac{5}{4}\Big(\frac{\rho'(t)}{\rho(t)}\Big)^2 \Big)t\,dt +O(m_1^{-1}+m_2^{-1}). 
        \end{split}
    \end{equation*}
Combining all of the above with Lemma \ref{lem:logh2}, after rearranging and simplifying terms, we obtain the desired result. 
\end{proof}

As a counterpart of Lemmas~\ref{lem:sum2} and ~\ref{lem:sum3}, we have the following. 

\begin{lem} \label{Lem_4.9}
As $ N\to \infty$, we have 
\begin{equation}
    \begin{split}
        \sum_{j=0}^{m_1-1} \log h_{2j+1,2N} 
        &= 4m_1\totalN U_{\mu}(0) - m_1(m_1+2c+1)\log(2\totalN \rho(0))+m_1(m_1+2c)\log 2 - \frac{1}{2}m_1\log \pi \\
        &\quad -\log \Big( G(c+1)G(c+3/2) \Big) + m_1^2\log m_1 - \frac{3}{2}m_1^2 + \Big(2c+\frac{1}{2}\Big)m_1\log m_1 
        \\
        &\quad + \log(2\pi) m_1 - \Big(2c+\frac{1}{2}\Big)m_1 + \frac{1}{2}\Big(2c^2+c-\frac{1}{12}\Big)\log m_1
        \\
        &\quad +\Big(c+\frac{1}{4}\Big)\log 2\pi + 2\zeta'(-1) +O(m_1^{-2}) + O(N^{-\frac{1}{2}}m_1^{\frac{5}{2}}(\log N)^3),
    \end{split}
\end{equation}
    and  
\begin{equation}
    \begin{split}
        \sum_{j=N-m_2}^{N-1}\log{h}_{2j+1,2N} &= -m_2(m_2+2\alpha+1)\log(2\totalN \tilde{\rho}(0))+ m_2(m_2+2\alpha)\log 2 - \frac{1}{2}m_2\log \pi \\
        &\quad -\log \Big(  G(\alpha+1)G(\alpha+3/2)  \Big)  +  m_2^2\log m_2 - \frac{3}{2}m_2^2 + \Big(2\alpha+\frac{1}{2}\Big)m_2\log m_2 
        \\
        &\quad+ \log(2\pi) m_2 - \Big(2\alpha+\frac{1}{2}\Big)m_2 + \frac{1}{2}\Big(2\alpha^2+\alpha-\frac{1}{12}\Big)\log m_2
        \\
        &\quad +\Big(\alpha+\frac{1}{4}\Big)\log 2\pi + 2\zeta'(-1) +O(m_2^{-2})+O(N^{-\frac{1}{2}}m_2^{\frac{5}{2}}(\log N)^3).
    \end{split}
\end{equation} 
\end{lem}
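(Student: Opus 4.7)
\textbf{Proof proposal for Lemma~\ref{Lem_4.9}.}

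The plan is to reduce both sums to telescoping sums of Barnes $G$-functions via the Legendre duplication formula, and then invoke \eqref{Barnes G asymp}. I treat the two assertions in parallel since they have identical structure, differing only by the substitution $c \leftrightarrow \alpha$, $\rho(0)\leftrightarrow\tilde\rho(0)$, and the presence (respectively absence) of the macroscopic term $4nU_\mu(0)$.

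First, for the Case~2 sum, apply the second part of Lemma~\ref{lem:logh1} with index $2j+1$ in place of $j$:
\begin{equation*}
\log h_{2j+1,2N} = 4nU_\mu(0) - (2j+2c+2)\log(2n\rho(0)) + \log\Gamma(2j+2c+2) + \tilde\varepsilon_{N,2}(2j+1).
\end{equation*}
Summing over $j=0,\dots,m_1-1$ is immediate for the first two terms. The error sum is $O(N^{-1/2}m_1^{5/2}(\log N)^3)$, matching the claim. For the Gamma sum, I invoke the Legendre duplication formula $\Gamma(2z)=\pi^{-1/2}2^{2z-1}\Gamma(z)\Gamma(z+\tfrac12)$ with $z=j+c+1$, which gives
\begin{equation*}
\log\Gamma(2j+2c+2) = (2j+2c+1)\log 2 - \tfrac12\log\pi + \log\Gamma(j+c+1) + \log\Gamma(j+c+\tfrac32).
\end{equation*}
Summing and using the telescoping identity $\log\Gamma(z)=\log G(z+1)-\log G(z)$ gives
\begin{equation*}
\sum_{j=0}^{m_1-1}\log\Gamma(2j+2c+2) = m_1(m_1+2c)\log 2 - \tfrac{m_1}{2}\log\pi + \log\frac{G(m_1+c+1)}{G(c+1)} + \log\frac{G(m_1+c+\tfrac32)}{G(c+\tfrac32)}.
\end{equation*}

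The next step is to expand $\log G(m_1+c+1)$ and $\log G(m_1+c+\tfrac32)$ using the Barnes asymptotic \eqref{Barnes G asymp}, with $z=m_1+c$ and $z=m_1+c+\tfrac12$ respectively, absorbing $O((m_1+c)^{-2})=O(m_1^{-2})$ error. Expanding $\log(m_1+c)=\log m_1+O(m_1^{-1})$ (and similarly for $m_1+c+\tfrac12$) inside each of the expressions $\tfrac12 z^2 \log z$, $\tfrac12\log(2\pi)z$, and $-\tfrac{1}{12}\log z$, and collecting powers of $m_1$, I obtain the leading $m_1^2\log m_1$, the subleading $(2c+\tfrac12)m_1\log m_1$ and $\log(2\pi)\,m_1$, the $\frac12(2c^2+c-\tfrac{1}{12})\log m_1$ coefficient, the linear corrections $-\tfrac32 m_1^2$ and $-(2c+\tfrac12)m_1$, and the constants $(c+\tfrac14)\log(2\pi)$ and $2\zeta'(-1)$ that appear in the statement. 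Combining with $-m_1(m_1+2c+1)\log(2n\rho(0))$ and $m_1(m_1+2c)\log 2-\tfrac{m_1}{2}\log\pi$ finishes the first identity.

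For the Case~3 sum, apply Lemma~\ref{lem:logh3} with index $2j+1$ and set $k=N-j$ so that $2N-(2j+1)+2\alpha+1 = 2(k+\alpha)$. Using the same duplication formula at $z=k+\alpha$,
\begin{equation*}
\log\Gamma(2k+2\alpha) = (2k+2\alpha-1)\log 2 -\tfrac12\log\pi + \log\Gamma(k+\alpha)+\log\Gamma(k+\alpha+\tfrac12),
\end{equation*}
and then summing over $k=1,\dots,m_2$ telescopes in the Barnes $G$-function as before. The remaining asymptotic analysis is identical to the first sum with the substitution $c\mapsto\alpha$, $\rho(0)\mapsto\tilde\rho(0)$, and the $4nU_\mu(0)$ term absent; the error from $\tilde\varepsilon_{N,3}(2j+1)$ contributes $O(N^{-1/2}m_2^{5/2}(\log N)^3)$.

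The main obstacle is purely computational: tracking the Taylor expansion of the $\tfrac12 z^2\log z$ term in \eqref{Barnes G asymp} at $z=m_1+c$ and $z=m_1+c+\tfrac12$ accurately enough that all subleading coefficients in $m_1$, $m_1\log m_1$, $\log m_1$, and $m_1^2$ come out correctly; a single miscount here would disrupt the asymmetric-looking coefficients $2c+\tfrac12$ and $\tfrac12(2c^2+c-\tfrac{1}{12})$. The duplication step is essential because a naive application of \eqref{Barnes G asymp} to $\log G(2m_1+\cdots)$ would not produce the $(2c^2+c)$-type coefficient with the correct numerology, and would mix poorly with the factors of $\log(2n)$ coming from the $\log(2n\rho(0))$ term.
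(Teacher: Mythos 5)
Your proposal is correct and follows essentially the same route as the paper: apply the second parts of Lemmas~\ref{lem:logh1} and \ref{lem:logh3} at odd indices, use the Legendre duplication formula to split $\log\Gamma(2j+2c+2)$ (resp. $\log\Gamma(2k+2\alpha)$) into two Gamma factors that telescope into the four Barnes $G$-functions, and then expand via \eqref{Barnes G asymp}; I checked that the coefficients you report (in particular $2c+\tfrac12$ and $\tfrac12(2c^2+c-\tfrac1{12})$) do come out of the second-order Taylor expansion of $\log(m_1+c)$ and $\log(m_1+c+\tfrac12)$ inside the $\tfrac12 z^2\log z$ term. The only presentational caveat is that the first-order expansion $\log(m_1+c)=\log m_1+O(m_1^{-1})$ you quote is not by itself sufficient for the $z^2\log z$ term, but you flag this issue yourself and the stated final coefficients are the correct ones.
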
 
\begin{proof}
Modifying similar computations used in Lemmas~\ref{lem:sum2} and ~\ref{lem:sum3}, we have  
\begin{equation*}
    \begin{split}
        \sum_{j=0}^{m_1-1} \log h_{2j+1,2N} 
        &= 4m_1\totalN U_{\mu}(0) - m_1(m_1+2c+1)\log(2\totalN \rho(0))+m_1(m_1+2c)\log 2 - \frac{1}{2}m_1\log \pi \\
        &\quad +\log \Big(\frac{G(m_1+c+1)G(m_1+c+3/2)}{G(c+1)G(c+3/2)}  \Big) + O(N^{-\frac{1}{2}}m_1^{\frac{5}{2}}(\log N)^3),
    \end{split}
\end{equation*}
    and \begin{equation*}
    \begin{split}
        \sum_{j=N-m_2}^{N-1}\log{h}_{2j+1,2N} &= -m_2(m_2+2\alpha+1)\log(2\totalN \tilde{\rho}(0))+ m_2(m_2+2\alpha)\log 2 - \frac{1}{2}m_2\log \pi \\
        &\quad +\log \Big( \frac{G(m_2+\alpha+1)G(m_2+\alpha+3/2)}{G(\alpha+1)G(\alpha+3/2)} \Big) +O(N^{-\frac{1}{2}}m_2^{\frac{5}{2}}(\log N)^3)
    \end{split}
\end{equation*}
as $N\to\infty.$
By using \eqref{Barnes G asymp}, we have 
\begin{equation*}
    \begin{split}
        &\quad \log (G(m_1+c+1)G(m_1+c+3/2)) 
        \\
        &= m_1^2\log m_1 - \frac{3}{2}m_1^2 + \Big(2c+\frac{1}{2}\Big)m_1\log m_1 + \log(2\pi) m_1 - \Big(2c+\frac{1}{2}\Big)m_1 + \frac{1}{2}\Big(2c^2+c-\frac{1}{12}\Big)\log m_1\\
        &\quad +\Big(c+\frac{1}{4}\Big)\log 2\pi + 2\zeta'(-1) +O(m_1^{-2})
    \end{split}
\end{equation*}
as $N \to \infty$, and similar asymptotic formula holds for $\log (G(m_2+\alpha+1)G(m_2+\alpha+3/2))$.
\end{proof}

\begin{proof}[Proof of Theorem~\ref{Thm_free energy Pfaff}]
By combining the previous lemmas and straightforward computations, including the cancellations of the terms $m_1$ and $m_2$ as in the proof of Theorem~\ref{Thm_free energy det} in the previous section, we obtain
\begin{equation*}
    \begin{split}
        \sum_{j=0}^{N-1}h_{2j+1,2N} &= -2n^2 I[\mu] - \frac{N}{2}\log \frac{n}{\pi} - \frac{n}{2}E[\mu] - n(4c+1) U_{\mu}(0) + \Big(c^2 + \alpha^2 + \frac{c}{2}+\frac{\alpha}{2}-\frac{1}{12}\Big)\log n \\
        &\quad  + \Big(\alpha+c+\frac{1}{2}\Big)\log (2\pi) + \Big(c^2+c+\frac{5}{24}\Big)\log \rho(0) + \Big(\alpha^2 + \alpha + \frac{5}{24}\Big)\log \tilde{\rho}(0) \\
        &\quad -\frac{5}{24}- \frac{1}{12}\int_0^{\infty} \Big(\frac{\rho''(t)}{\rho(t)} - \frac{5}{4}\Big(\frac{\rho'(t)}{\rho(t)} \Big)^2 \Big)t\,dt + 4\zeta'(-1) \\
        &\quad - \log\big(G(c+1)G(c+\tfrac{3}{2})G(\alpha+1)G(\alpha+\tfrac{3}{2})\big) + O(N^{-\frac{1}{12}}(\log N)^d),
    \end{split}
\end{equation*}
as $N\to \infty$. Now, the theorem follows from \eqref{ZN random normal symplectic} and \eqref{log N!}. 
\end{proof}

\subsection*{Acknowledgements} Sung-Soo Byun was supported by the POSCO TJ Park Foundation (POSCO Science Fellowship), by the New Faculty Startup Fund at Seoul National University and by the LAMP Program of the National Research Foundation of Korea (NRF) grant funded by the Ministry of Education (No. RS-2023-00301976). Nam-Gyu Kang was supported by Samsung Science and Technology Foundation (SSTF-BA1401-51), a KIAS Individual Grant (MG058103) at Korea Institute for Advanced Study, and the National Research Foundation of Korea (RS-2019-NR040050).  Seong-Mi Seo was supported by the National Research Foundation of Korea (NRF-2022R1I1A1A01072052). Meng Yang was supported by the Research Grant RCXMA23007, and the Start-up funding YJKY230037 at Great Bay University.

\end{document}